%% file: main.tex
\pdfoutput = 1

\documentclass[11pt]{article}
\usepackage{fullpage}
\usepackage{mathtools,amsfonts,amsthm,amssymb,bbm}
\usepackage{xcolor}
\usepackage[backref,colorlinks,citecolor=blue,linkcolor=magenta,bookmarks=true]{hyperref}
\usepackage{cleveref}
\usepackage[algo2e,ruled,linesnumbered]{algorithm2e}
\usepackage{authblk}

\usepackage{color-edits}
\addauthor{mq}{blue}

\newcommand{\1}[1]{\mathbbm{1}\left[#1\right]} 

\newcommand{\bbR}{\mathbb{R}} 
\newcommand{\Bern}{\mathsf{Bernoulli}} 
\newcommand{\Binomial}{\mathsf{Binomial}} 
\newcommand{\Cal}{\mathcal{C}}
\newcommand{\calA}{\mathcal{A}} 
\newcommand{\calD}{\mathcal{D}} 
\newcommand{\CalDist}{\mathsf{CalDist}}
\newcommand{\calI}{\mathcal{I}} 
\newcommand{\calN}{\mathcal{N}} 
\newcommand{\calP}{\mathcal{P}} 
\newcommand{\calX}{\mathcal{X}}

\newcommand{\coNP}{\textsf{co-NP}}
\newcommand{\cost}{\mathsf{cost}}
\newcommand{\lcost}{\cost^{\mathsf{L}}}
\newcommand{\rcost}{\cost^{\mathsf{R}}}
\newcommand{\Dhat}{\widehat{\calD}}
\newcommand{\Dmix}{\calD_{\mathsf{mixed}}}
\newcommand{\Dpure}{\calD_{\mathsf{pure}}}
\newcommand{\dTV}{d_{\mathsf{TV}}}
\newcommand{\eps}{\epsilon}
\newcommand{\Ex}[2]{\operatorname*{\mathbb{E}}_{#1}\left[#2\right]} 
\newcommand{\istar}{i^{\star}}
\newcommand{\NP}{\mathsf{NP}}

\newcommand{\opt}{\mathsf{opt}}
\newcommand{\poly}{\operatorname*{poly}} 

\newcommand{\pr}[2]{\Pr_{#1}\left[#2\right]} 
\newcommand{\pstar}{p^{\star}}
\newcommand{\sgn}{\mathrm{sgn}} 

\newcommand{\uppercost}{\overline{\cost}}
\newcommand{\zo}{\{0, 1\}} 

\newtheorem{theorem}{Theorem}
\newtheorem{definition}{Definition}
\newtheorem{corollary}{Corollary}
\newtheorem{lemma}{Lemma}
\newtheorem{fact}{Fact}

\title{Computational and Statistical Hardness of Calibration Distance}
\date{}
\author{Mingda Qiao}
\affil{University of Massachusetts Amherst}
\affil{\texttt{mqiao@umass.edu}}

\begin{document}
    \begin{titlepage}
        \maketitle
        \thispagestyle{empty}
        \begin{abstract}
            The distance from calibration, introduced by B\l{}asiok, Gopalan, Hu, and Nakkiran~(STOC 2023), has recently emerged as a central measure of miscalibration for probabilistic predictors. We study the fundamental problems of \emph{computing} and \emph{estimating} this quantity, given either an exact description of the data distribution or only sample access to it.
    
            We give an efficient algorithm that exactly computes the calibration distance when the distribution has a uniform marginal and noiseless labels, which improves the $O(1/\sqrt{|\calX|})$ additive approximation of Qiao and Zheng~(COLT 2024) for this special case. Perhaps surprisingly, the problem becomes $\NP$-hard when \emph{either} of the two assumptions is removed. We extend our algorithm to a polynomial-time approximation scheme for the general case.
            
            For the estimation problem, we show that $\Theta(1/\eps^3)$ samples are sufficient and necessary for the empirical calibration distance to be upper bounded by the true distance plus $\eps$. In contrast, a polynomial dependence on the domain size---incurred by the learning-based baseline---is unavoidable for two-sided estimation.
    
            Our positive results are based on simple sparsifications of both the distribution and the target predictor, which significantly reduce the search space for computation and lead to stronger concentration for the estimation problem. To prove the hardness results, we introduce new techniques for certifying lower bounds on the calibration distance---a problem that is hard in general due to its $\coNP$-completeness.
        \end{abstract}
    \end{titlepage}

    \input{introduction}

    \input{prelim}

    \input{overview}

    \input{discussion}

    \input{algorithm}

    \input{hardness}

    \input{sample_UB}

    \input{sample_LB}

    \newpage
    \appendix

    \input{appendix}

	\bibliographystyle{alpha}
	\bibliography{main}
\end{document}

%% file: introduction.tex
\section{Introduction}
Calibration is a well-known criterion for reliable probabilistic forecasts~\cite{Brier50,Dawid82} and, more recently, trustworthy machine learning models~\cite{GPSW17}. A predictor is \emph{perfectly calibrated} if, conditioning on predicting each value $\alpha \in [0, 1]$, the conditional probability of a positive outcome is exactly $\alpha$. Predictors that are far from calibrated---e.g., weather forecasters that predicted a $90\%$ chance of rain on $100$ days among which only $50$ were rainy---are intuitively biased and not trustworthy. Calibrated predictions are known to have many nice properties, such as no-regret downstream decision-making and convergence to correlated equilibria in game dynamics~\cite{FV97}.

While the notion of perfect calibration is widely accepted, there is no consensus on the ``right'' \emph{calibration measure} for quantifying miscalibration. B\l{}asiok, Gopalan, Hu, and Nakkiran~\cite{BGHN23} recently initiated the systematic study of calibration measures and proposed a notion of \emph{distance from calibration} as a ``gold standard''. The distance from calibration is defined as the distance to the closest perfectly calibrated predictor in the $\ell_1$ metric induced by the marginal distribution. The authors of~\cite{BGHN23} further examined existing calibration measures and identified several \emph{consistent} measures that are always polynomially related to the calibration distance.

Perhaps surprisingly, fundamental problems around this calibration metric remain largely open:
\begin{quote}
    \centering \emph{Can the calibration distance be efficiently computed, either exactly or approximately?\\What is the sample complexity of estimating the calibration distance?}
\end{quote}
The former computational problem assumes complete knowledge of the data distribution, while only samples from the distribution are available in the latter problem. Indeed, positive answers to these questions would make the calibration distance \emph{per se} both a useful error metric and a natural objective for training prediction models. However, it was noted in~\cite{BGHN23} that ``it is not clear how to compute this distance efficiently'', since the set of perfectly calibrated predictors ``is non-convex, and in fact it is discrete when the domain $\calX$ is discrete''.

The situation is in stark contrast to almost all the other existing calibration measures, which can be efficiently computed by definition. For instance, the \emph{smooth calibration error} introduced by Kakade and Foster~\cite{KF08}---one of the consistent calibration measures---can be computed by a natural linear program and estimated up to error $\eps$ using $O(1/\eps^2)$ samples~\cite{BGHN23}. Hu, Jambulapati, Tian, and Yang~\cite{HJTY24} showed that the \emph{lower distance from calibration}---a continuous relaxation of the calibration distance introduced by~\cite{BGHN23}---can be efficiently tested, both computationally and statistically.

\subsection{Problem Setup}
Formally, we consider binary classification with a finite domain $\calX$ and distribution $\calD$ over $\calX \times \zo$. A predictor $f: \calX \to [0, 1]$ is \emph{perfectly calibrated} with respect to $\calD$ if it holds for every $\alpha \in [0, 1]$ (with $\pr{(x, y) \sim \calD}{f(x) = \alpha} > 0$) that $\pr{(x, y) \sim \calD}{y = 1 \mid f(x) = \alpha} = \alpha$. We write $\Cal(\calD)$ as the family of all perfectly calibrated predictors with respect to $\calD$.

Let $\calD_x$ be the $\calX$-marginal of $\calD$. For each $x \in \calD$, $\mu_{\calD}(x) \coloneqq \pr{(x', y) \sim \calD}{y = 1 \mid x' = x}$ denotes the conditional probability of label $1$ given $x$.\footnote{We may define $\mu_{\calD}(x) = 0$ in the degenerate case that $\calD_x(x) = 0$, as such values have no effect.} For $\calX' \subseteq \calX$, we also write
\[
    \mu_{\calD}(\calX') \coloneqq \pr{(x, y) \sim \calD}{y = 1 \mid x \in \calX'} = \frac{\sum_{x \in \calX'}\calD_x(x) \cdot \mu_{\calD}(x)}{\sum_{x \in \calX'}\calD_x(x)}
\]
as the conditional probability given $x \in \calX'$. The subscript in $\mu_{\calD}$ will be omitted when the distribution is clear from the context. We consider the following two families of distributions: $\calD$ is \emph{uniform} if $\calD_x$ is the uniform distribution over $\calX$, and $\calD$ is \emph{noiseless} if $\mu_{\calD}(x) \in \{0, 1\}$ holds for every $x \in \calX$.

\paragraph{Distance from calibration.} Let $d_{\calD}(f, g) \coloneqq \Ex{x \sim \calD_x}{|f(x) - g(x)|}$ denote the $\ell_1$ distance between predictors $f, g: \calX \to [0, 1]$ over $\calD$. The distance from calibration (or calibration distance) of $f$ over $\calD$ is defined as $\CalDist_{\calD}(f) \coloneqq \inf_{g \in \Cal(\calD)}d_{\calD}(f, g)$.

In the computational problem, the algorithm is given the probability mass function of $\calD$ as well as the predictor $f$ as tables of values. In the estimation version, the algorithm can sample from $\calD$ and query predictor $f$ on domain elements of its choice. The goal in both problems is to output $\CalDist_{\calD}(f)$, either exactly or up to an additive error of $\eps > 0$.

\paragraph{Model of computation.} For the hardness proofs, we assume that the probability masses of $\calD$ and the predictions made by $f$ are all rational numbers. The size of the instance is the total number of bits needed for representing the $\Theta(|\calX|)$ values in reduced form.

For brevity, the algorithms are stated in the real RAM model, where inputs are given as exact real numbers, and each basic arithmetic operation or comparison takes constant time. In all of our algorithms, every intermediate value can be computed from the inputs in $O(|\calX|)$ basic operations. Therefore, if all inputs are rational, the algorithms can be exactly implemented in alternative models (such as the word RAM) with at most an $\poly(|\calX|)$ blow-up in the runtime.

\subsection{Overview of Results}
We give a near-complete characterization of the computational and statistical landscapes for the distance from calibration. Our main algorithmic result is an efficient algorithm when $\calD$ is both uniform and noiseless, and a polynomial-time approximation scheme (PTAS) for the general case.
\begin{theorem}\label{thm:runtime}
    Given $\calD$ and $f$, $\CalDist_{\calD}(f)$ can be computed exactly in $O(|\calX|^4)$ time when $\calD$ is both uniform and noiseless. Moreover, it can be approximated up to an error of $\eps$ in $|\calX|^{O(1/\eps)}$ time assuming that $\calD$ is uniform, and in $(|\calX| / \eps)^{O(1/\eps)}$ time in general.
\end{theorem}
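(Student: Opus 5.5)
The plan rests on one structural fact: a calibrated predictor $g$ is the same thing as a partition of $\calX$ into level sets. If $g$ is perfectly calibrated, its level sets $S_1,\dots,S_k$ satisfy $g \equiv \mu(S_j)$ on $S_j$. Conversely, any partition induces a calibrated predictor, after merging parts that share the same $\mu$-value, which preserves calibration. So
\[
\CalDist_{\calD}(f) = \min_{\text{partitions } \{S_j\}} \sum_j \sum_{x\in S_j} \calD_x(x)\,|f(x)-\mu(S_j)|.
\]
All three parts then come from restricting which partitions need to be searched.

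\emph{Exact algorithm in the uniform noiseless case.} Here $\mu(x)\in\{0,1\}$, and the value of a part $S$ depends only on two counts: $a$ positive points and $b$ negative points, with $\mu(S)=a/(a+b)$. Sort the positives and the negatives separately by $f$. I will prove an exchange lemma: some optimal partition assigns the parts contiguous blocks in $f$-order within the positives and within the negatives, and the blocks appear in increasing order of $\mu(S_j)$. The argument is an uncrossing step. If two parts with $\mu(S)<\mu(T)$ hold positives $x\in S$ and $x'\in T$ with $f(x)>f(x')$, swapping them keeps both counts, hence both $\mu$-values, and does not increase the cost, since $|f-\alpha|$ is submodular in $(f,\alpha)$. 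The same holds for negatives. Given the lemma, a dynamic program over the state $(i,j)$ — the first $i$ positives and first $j$ negatives in sorted order already assigned — adds one part with counts $(a,b)$ at a time. That gives $O(n^2)$ states and $O(n^2)$ transitions, so $O(n^4)$ total, matching the stated bound. The cost of each block is computable in $O(1)$ via prefix sums, once the split point where $f$ crosses $a/(a+b)$ is precomputed or located by a monotone pointer. One gap remains: ordering the parts by $\mu$ in the DP requires no explicit constraint, because any choice of blocks gives a valid partition and hence an upper bound. This is the step I expect to need the most care — making sure the uncrossing terminates, and handling ties in $\mu$ by merging.

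\emph{PTAS for uniform $\calD$.} Round the target values to a grid $\{0,\eps,2\eps,\dots\}$. I will show there is a near-optimal partition with at most $O(1/\eps)$ parts. Merging all parts whose $\mu$ lies in a common $\eps$-interval keeps calibration, and by the triangle inequality it costs at most $\eps$ extra. Within each value bucket, points should be assigned by an exchange argument similar to the one above, but now labels are noisy. With a uniform marginal, each part is specified by a target level $v$ and the constraint $\mu(S)=v$. I would guess, for each of the $O(1/\eps)$ levels, boundary points that determine a threshold-style assignment in the $(f,\mu)$ plane, giving $|\calX|^{O(1/\eps)}$ candidates, each checked in polynomial time. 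The structural claim to establish is that, for fixed target levels, the optimal assignment is a threshold rule up to $O(1)$ fractional boundary points per level. I would argue this through the LP relaxation, which has $O(1/\eps)$ equality constraints, so a basic solution has only $O(1/\eps)$ fractional points. Those points can be rounded, each costing $O(1/|\calX|)$ — negligible unless $|\calX|$ is small, and in that case brute force applies.

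\emph{General $\calD$.} I will reduce to the uniform case by sparsification. Discretize the masses $\calD_x(x)$ to multiples of $\eps/|\calX|$ and split each point into that many copies, producing a near-uniform instance of size $\poly(|\calX|/\eps)$. Calibration must be preserved up to an $O(\eps)$ error, and a solution on the copies, which may split a point, must map back at $O(\eps)$ loss by averaging. Running the uniform PTAS then gives $(|\calX|/\eps)^{O(1/\eps)}$ time.
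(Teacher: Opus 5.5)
Your exact algorithm for the uniform noiseless case is sound and is the paper's argument: your swap is \Cref{lem:type-monotonicity}, and the contiguous-block DP with prefix-sum cost evaluation is its $O(|\calX|^4)$ algorithm. The general case, however, fails at the splitting step. Letting copies of one point go to different parts is a genuine relaxation (it is the lower distance from calibration), and no averaging maps such a solution back at $O(\eps)$ loss. Take four points of mass $1/4$ each: two with $f = 1/3$ and $\mu \in \{0,1\}$, and two with $f = 2/3$ and $\mu \in \{0,1\}$. Checking all $15$ partitions gives $\CalDist_{\calD}(f) = 1/6$. Now split each point into $r$ copies ($r$ even). Send all copies of the label-$0$ point at $1/3$, together with half the copies of the label-$1$ point at $1/3$, to a part with average $1/3$. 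Do the mirror image at $2/3$, and send the two leftover halves to $1/2$. This is calibrated and costs $2 \cdot \frac{1}{8} \cdot \frac{1}{6} = \frac{1}{24}$. The instance is already uniform, with masses on your grid, so your reduction is off by $1/8$ for every $\eps$. Indivisibility of heavy points is exactly what the noiseless $\NP$-hardness exploits, so it must be preserved. The paper does this in four steps. It discretizes masses to multiples of $1/M$ with $M = O(|\calX|/\eps)$, controlled by \Cref{lem:continuity-in-D}. It keeps one part per length-$1/k$ interval (your merging step, \Cref{lem:prediction-sparsification}). It charges each element against the interval midpoint instead of the unknown $\mu(\calX_i)$, at additive cost $1/k$. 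Finally, it runs a DP that assigns each \emph{whole} element to one of the $k$ parts while tracking each part's $0$- and $1$-labelled mass, giving $|\calX|(M+1)^{2k} = (|\calX|/\eps)^{O(1/\eps)}$ states.

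The uniform-case PTAS is not established either. Exact levels $\mu(S_j) = v_j$ on a grid are typically infeasible: if every $\mu(x)$ equals one off-grid value, the only calibrated predictor is that constant. So you need interval constraints and a midpoint proxy. With those, your LP route works, but only for large domains. The $O(1/\eps)$ fractional points of a basic solution each shift some part's mass-weighted bias by up to $1/|\calX|$, for $O(1/(\eps|\calX|))$ total error. That is $O(\eps)$ only when $|\calX| \gtrsim 1/\eps^2$. Your brute-force fallback costs $2^{\Theta(|\calX|\log|\calX|)}$, which exceeds $|\calX|^{O(1/\eps)}$ throughout $1/\eps \ll |\calX| \ll 1/\eps^2$. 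The threshold-rule structure is also only asserted. The paper's fix uses the tool you already built: round the \emph{labels} $\mu(x)$ down to multiples of $\eps$. This is a TV perturbation of at most $\eps$, hence a change of at most $5\eps$ in $\CalDist$ by \Cref{lem:continuity-in-D}. The rounded uniform instance has only $O(1/\eps)$ types $(1/|\calX|, j\eps)$, and your swap argument applies verbatim within each type. The DP over prefixes of the $O(1/\eps)$ sorted lists then has $|\calX|^{O(1/\eps)}$ states and transitions.
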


When $\calD$ is both uniform and noiseless, the problem is equivalent to computing the distance from calibration in the \emph{sequential} setting over $T = |\calX|$ time steps. The best known result for this special case is a $\poly(T)$-time approximation algorithm due to Qiao and Zheng~\cite{QZ24} that guarantees an $O(1 / \sqrt{T})$ additive error.

We complement \Cref{thm:runtime} with the following hardness result, which shows that the assumption on $\calD$ is essentially minimal for efficient exact computation.

\begin{theorem}\label{thm:NP-hard}
    It is $\NP$-hard to compute the calibration distance exactly, both when $\calD$ is noiseless and when $\calD$ is uniform.
\end{theorem}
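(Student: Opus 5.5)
We reduce from a strongly $\NP$-hard partition-type problem; 3-Partition is the natural choice, and Subset Sum could serve the noiseless case. The reduction uses the following structure of the problem. Any $g \in \Cal(\calD)$ partitions $\calX$ into level sets $S_1,\dots,S_k$, with $g \equiv \mu(S_j)$ on $S_j$. Hence $\CalDist_{\calD}(f) = \min_{\text{partitions}} \sum_j \sum_{x \in S_j} \calD_x(x)\,|f(x) - \mu(S_j)|$. Computing the distance is therefore a clustering problem, in which each cluster is forced to predict its own label mean. We will encode numbers either as probability masses (noiseless case) or as conditional means (uniform case), and place $f$-values so that a cost threshold can be met only by clusters realizing an exact partition.

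\textbf{Noiseless case.} Given a 3-Partition instance $a_1,\dots,a_{3m}$ with target $B$, we create for each $i$ a point $x_i$ with label $1$ and mass proportional to $a_i$. We also create $m$ ``anchor'' points with label $0$. We set $f(x_i)=\alpha$ and $f(\text{anchor}_j)=\alpha$, and choose the anchor masses so that a cluster containing one anchor of mass $w$ and label-1 mass exactly $B$ has mean exactly $\alpha$ and incurs cost $0$. A YES instance thus yields distance $0$. We could even use a threshold-$0$ decision, but exact zero might be too easy to certify, so we instead reduce to the decision question ``is $\CalDist = 0$?'' for one fixed value $\alpha$. Checking whether the distance is $0$ amounts to asking whether the points with $f=\alpha$ have mean $\alpha$, which is trivial; this route fails. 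So we must put the points at distinct $f$-values. Concretely, we give each $x_i$ its own prediction and arrange that the optimum equals a lower bound $L$ (for example, the distance obtained by moving each point to its cluster mean), which is attained exactly when the groups have equal sums.

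\textbf{Uniform case.} The masses are fixed, so we encode the $a_i$ in $\mu(x_i)$, that is, as noisy labels. Each group consists of a fixed number of points, and we require each group's average label to equal a common value $\beta$. Predictions $f$ are placed near $\beta$ so that the cost is minimized exactly when every cluster has mean $\beta$.

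\textbf{The main obstacle: certifying the lower bound.} In the NO case we must show that every partition has cost strictly greater than the threshold, and by strong $\NP$-hardness by at least $1/\poly$. The strategy is to prove a convex or linear lower bound on each cluster's cost. For a cluster $S$ with mean $m$, the cost is $\sum_{x\in S} \calD_x(x)|f(x)-m|$. We would bound it below by a dual-style certificate: assign weights $\lambda_x$ so that the cost is at least $\sum_x \lambda_x \calD_x(x)$ plus a penalty proportional to $|m-\beta|$, with equality only for balanced clusters of the intended size. Ruling out clusters of unintended sizes, such as merged groups, splits, or singletons, is the delicate step. We would try to handle it by making $f$-values of distinct item types widely separated, so that mixing types costs a fixed positive amount.
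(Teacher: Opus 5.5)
Your proposal has a genuine gap: it is a plan rather than a proof. Neither case comes with a concrete construction whose YES-direction is checked. The NO-direction, which you rightly call the crux, is only described (``a dual-style certificate'', ``widely separated $f$-values'') with no argument.

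Moreover, the one design principle you do commit to cannot work. In both sketches the YES-witness is a partition whose parts all share a common mean ($\alpha$ in the noiseless sketch, $\beta$ in the uniform one). If every part of $\calP$ has mean $\beta$, then $\mu_{\calD}(\calX)=\beta$ as well. So the one-part partition $\{\calX\}$ (the constant predictor) has exactly the same cost $\sum_{x}\calD_x(x)\,|f(x)-\beta|$ as $\calP$. That partition is available in every instance, YES or NO, so no threshold can separate them. This is the general form of the degeneracy you noticed only when all $f$-values coincide, and giving each $x_i$ its own prediction does not remove it. Your proposed lower bound $L$ (``moving each point to its cluster mean'') is just the cost of the partition, so it is circular.

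The missing idea is that the YES-witness must use parts at \emph{different} locations. A solution then corresponds to transporting a precise amount of mass from one location to another, which coarsening cannot imitate. The paper does this for the noiseless case by reduction from Subset Sum. There, $x_0,x_1$ at $1/2$ carry a label-$1$ excess, while $x_{1^+}$ and $X_{0^+}$ (masses $\frac{a_i}{4S}$) at $1/2+\alpha$ carry a label-$0$ excess. Moving a subset of $X_{0^+}$ of mass exactly $\pstar$ down to $1/2$ calibrates both locations at cost $\alpha\pstar$.

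For the NO-direction, the paper first shows that $x_0$ cannot share a part with $x_{1^+}$ and must share one with $x_1$. This reduces to partitions $\{\{x_0,x_1\}\cup X_1,\{x_{1^+}\}\cup X_2,X_3\}$, whose cost is lower-bounded by a univariate $\phi(p_1)$ uniquely minimized at $p_1=\pstar$. The uniform case (from Balanced SSP) likewise has a two-location witness. It rules out irregular partitions via rounding, the identity $\cost(\calP)=\frac{1}{36k}+2\lcost(\calP)$, and \Cref{lem:integer-triple}, then merges regular parts down to a univariate problem.

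Finally, strong $\NP$-hardness and a $1/\poly$ gap are unnecessary here. With rational inputs any positive gap suffices, and the paper's gaps are only inverse in the (exponentially large) input numbers. A $1/\poly$ gap would actually rule out an FPTAS, which the paper leaves open, so that part of your plan aims at more than is known.
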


For the estimation problem, we have the following sample upper bounds for a natural estimator: the calibration distance over the empirical distribution $\Dhat$.
\begin{theorem}\label{thm:sample-UB}
    A sample size of $O(|\calX| / \eps^2)$ is sufficient for $|\CalDist_{\Dhat}(f) - \CalDist_{\calD}(f)| \le \eps$ to hold with probability $0.99$. Moreover, a sample size of $O(1 / \eps^3)$ is sufficient for $\CalDist_{\Dhat}(f) \le \CalDist_{\calD}(f) + \eps$ to hold with probability $0.99$.
\end{theorem}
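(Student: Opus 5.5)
The plan is to treat the two claims separately.

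\textbf{Two-sided bound with $O(|\calX|/\eps^2)$ samples.} I would show that $\CalDist_{\calD}(f)$ is Lipschitz, up to constant factors, in the total variation distance of the distribution. Suppose $\calD$ and $\calD'$ are two distributions over $\calX \times \zo$ with $\dTV(\calD, \calD') \le \delta$. Take any $g \in \Cal(\calD)$. In general $g$ is not calibrated under $\calD'$, so it must be repaired at small cost. For each level set $S_\alpha = g^{-1}(\alpha)$, reassign the value on $S_\alpha$ to $\mu_{\calD'}(S_\alpha)$. This merging keeps the level sets intact, so the result $g'$ lies in $\Cal(\calD')$. The cost of the repair is $\sum_\alpha \calD'_x(S_\alpha)\,|\alpha - \mu_{\calD'}(S_\alpha)|$. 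Since $\alpha = \mu_{\calD}(S_\alpha)$, each term equals
\[
\bigl|\calD'(S_\alpha \times \{1\}) - \alpha\,\calD'_x(S_\alpha)\bigr|
= \bigl|(\calD' - \calD)(S_\alpha \times \{1\}) - \alpha\,(\calD'_x - \calD_x)(S_\alpha)\bigr|.
\]
Summing over $\alpha$, the total is at most $O(\delta)$ because the level sets are disjoint. There is also the change of measure in $d$ itself: $|d_{\calD}(f,g) - d_{\calD'}(f,g)| \le \delta$, since $|f-g| \le 1$. The argument is symmetric in $\calD$ and $\calD'$, so $|\CalDist_{\calD}(f) - \CalDist_{\calD'}(f)| = O(\dTV(\calD, \calD'))$. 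It then remains to bound the TV distance. The empirical distribution on a support of size $2|\calX|$ satisfies $\Ex{}{\dTV(\Dhat, \calD)} \le O(\sqrt{|\calX|/n})$, so $n = O(|\calX|/\eps^2)$ suffices. Markov's inequality, or McDiarmid's, gives probability $0.99$.

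\textbf{One-sided bound with $O(1/\eps^3)$ samples.} Here I would use sparsification of the target predictor. Fix an optimal $g^\star \in \Cal(\calD)$ and merge its level sets into $k = O(1/\eps)$ groups by rounding values to a grid of width $\eps$. Merging level sets whose values lie within $\eps$ of each other gives another calibrated predictor $\tilde g$: each group's new value is the conditional mean of the group, which is a convex combination of nearby values. So $\tilde g$ takes at most $O(1/\eps)$ values and satisfies $d_{\calD}(f, \tilde g) \le \CalDist_{\calD}(f) + \eps$.

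Next I would transfer $\tilde g$ to $\Dhat$. It is fixed before sampling, with level sets $S_1, \dots, S_k$. Applying the repair step above with $\Dhat$ in place of $\calD'$ costs
\[
\sum_{j}\bigl|\Dhat(S_j \times \{1\}) - \alpha_j\,\Dhat_x(S_j)\bigr|.
\]
Each summand is the absolute deviation of an empirical mean of a bounded variable with variance at most $p_j = \calD_x(S_j)$. Its expectation is therefore $O(\sqrt{p_j/n})$, and by Cauchy--Schwarz the sum is $O(\sqrt{k/n})$. In addition, $d_{\Dhat}(f, \tilde g)$ concentrates around $d_{\calD}(f, \tilde g)$ within $O(1/\sqrt n)$. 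With $k = O(1/\eps)$, choosing $n = O(1/\eps^3)$ makes the total error $O(\eps)$. Since the bound holds in expectation and each term is nonnegative, Markov's inequality gives probability $0.99$ after rescaling $\eps$.

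The main obstacle is the sparsification step. One has to verify that merging level sets into $O(1/\eps)$ buckets preserves calibration, which merging does automatically, while moving each point by at most $O(\eps)$. This holds because each new value is an average of old values inside a bucket of width $\eps$.
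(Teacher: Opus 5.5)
Your proposal is correct and follows essentially the same route as the paper. Your level-set ``repair'' argument is an inline proof of the paper's $5\cdot\dTV$-Lipschitz lemma for the cost of a fixed partition, with $d_{\calD'}(f,g)$ playing the role of the paper's intermediate quantity $A$, combined with the standard $O(|\calX|/\eps^2)$ bound for learning in TV distance. Your one-sided argument is likewise the paper's: sparsify the optimal predictor to $k=O(1/\eps)$ values, control $d_{\Dhat}(f,\tilde g)$ by concentration, and bound the empirical bias of $\tilde g$ by $O(\sqrt{k/n})$ in expectation plus Markov. The only cosmetic difference is that you compute each level set's bias unconditionally rather than conditioning on the counts $n_i$ and applying Jensen twice, which gives the same bound.
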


The first part of \Cref{thm:sample-UB} gives a baseline that matches the sample complexity of learning $\calD$ up to $\eps$ error in the total variation (TV) distance. The second part suggests that, for moderate values of $\eps$, a much smaller sample of size independent of $|\calX|$ suffices for avoiding over-estimation.

One may hope to prove an analogous bound for under-estimation, thereby improving the sample complexity of estimation to $\poly(1/\eps)$. Such a result would have an immediate application to the computational problem: By drawing a size-$\poly(1/\eps)$ sample and running the PTAS on the empirical distribution, we can further improve the runtime to $O(|\calX|) + 2^{\widetilde O(1/\eps)}$.

Unfortunately, our next lower bound shows that this is impossible---a polynomial dependence on the domain size is unavoidable. Moreover, the $1/\eps^3$ dependence in the one-sided convergence result is tight up to constant factors.

\begin{theorem}[Informal versions of \Cref{thm:sample-LB-formal,thm:one-sided-LB-formal}]\label{thm:sample-LB}
    The sample complexity of estimating the calibration distance up to error $\eps$ is at least $\Omega(\sqrt{|\calX|} / \eps)$. Moreover, when $|\calX| = \Omega(1/\eps)$, at least $\Omega(1/\eps^3)$ samples are necessary for $\CalDist_{\Dhat}(f) \le \CalDist_{\calD}(f) + \eps$ to hold.
\end{theorem}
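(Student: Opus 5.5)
We prove both parts by Le Cam's two-point method: in each case we build two distributions, either on the same domain with the same predictor $f$ or as a pair of ensembles, whose calibration distances differ by more than $2\eps$ but whose sample distributions are close in total variation. For sample counts we use Poissonization. Lower bounds on calibration distance are the delicate ingredient, since computing it is $\coNP$-hard in general. So we keep the constructions structured, with predictions taking only a few values, so that the closest calibrated predictor can be analyzed by hand: all points sharing a prediction value are either moved together or split in a controlled way.

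For the $\Omega(\sqrt{|\calX|}/\eps)$ bound, we use a collision-style (birthday paradox) construction. Let $f$ be constant, $f\equiv 1/2$, on a block $S$ of probability mass $\Theta(\eps)$ spread uniformly over $\Theta(|\calX|)$ points, and give the rest of the mass to a calibrated "anchor" region. In the first distribution every $x \in S$ has $\mu(x)=1/2$, so $f$ is calibrated on $S$ and $\CalDist$ is small. In the second, the points of $S$ are paired up: each pair carries either $\mu = (1/2+c, 1/2-c)$ or its reverse, with the orientation chosen at random.

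The second distribution must have calibration distance larger by $\Omega(\eps)$. The intuition is that a calibrated $g$ must separate points by conditional mean, so it pays roughly $c$ per unit of mass moved. A constant-$g$ predictor stays calibrated, however, so the instance must break this: we make $f$ take two values $1/2\pm\delta$ on the two halves of each pair, correlated with the hidden $\mu$ in the second distribution but not the first. Then a calibrated $g$ must either merge the two halves or shift them, costing $\Omega(\eps)$, and the hardness certification lemma from the hardness section supplies the rigorous lower bound. Indistinguishability follows because, with $m \ll \sqrt{|\calX|}/\eps$ samples, fewer than one expected collision lands inside $S$. Each sampled point of $S$ is then seen only once, and its single label is $\Bern(1/2)$ marginally in both worlds. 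A standard TV bound over the collision count gives $O(m^2\eps^2/|\calX|)$.

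For the one-sided $\Omega(1/\eps^3)$ bound, take $\calD$ perfectly calibrated for $f$, so that $\CalDist_{\calD}(f)=0$. The goal is to show that with $m=o(1/\eps^3)$ samples, $\CalDist_{\Dhat}(f)>\eps$ with constant probability. Use $k=\Theta(1/\eps)$ domain points, each of mass $\eps$ and each with its own distinct prediction value $f(x)=\mu(x)=1/2$ (perturbed slightly so the values are distinct). Each point then receives about $m\eps$ samples, so its empirical label mean deviates by $\sigma \approx 1/\sqrt{m\eps}$.

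A calibrated $\hat g$ on $\Dhat$ can pool points, and pooling $t$ consecutive points averages the deviations down to $\sigma/\sqrt t$. The cost of such a block is about $t\eps\cdot|\text{pooled deviation}| \approx \eps\sqrt{t}\,\sigma$ per block, with $1/(t\eps)$ blocks, for a total of $\sigma/\sqrt t$. Pooling everything costs $|f-\text{global mean}| = 0$, however, so the predictions must also be spread out. Take $f(x_i)=\mu(x_i)=i/k$ spaced at $\eps$: pooling $t$ points then costs an extra $\Omega(t\eps)$ of $f$-spread. Minimizing $t\eps + \sigma/\sqrt t$ over $t$ gives $\Theta((\eps\sigma^2)^{1/3})=\Theta(m^{-1/3})$, which exceeds $\eps$ when $m\ll 1/\eps^3$.

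The main obstacle is the rigorous lower bound on $\CalDist_{\Dhat}(f)$ over all calibrated predictors, not just the block-pooling ones. We plan to use a dual/certificate argument: the level sets of $\hat g$ partition the domain, each level set $A$ contributes at least $\sum_{x\in A}\Dhat(x)|f(x)-\mu_{\Dhat}(A)|$, and we lower-bound this sum via anticoncentration of the random empirical deviations, combined with a union bound over interval-like partitions.
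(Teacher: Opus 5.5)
\textbf{First part: the construction fails.} With $f \equiv 1/2$ on $S$, both worlds have pooled mean $1/2$ there, so, as you note, there is no gap in calibration distance. Your fix is to set $f = 1/2 \pm \delta$ on the two halves of each pair, aligned with the hidden orientation in the second world. This breaks indistinguishability. The estimator sees $f(x)$, so in the second world a sample at a point with $f(x) = 1/2+\delta$ has label $\Bern(1/2+c)$, not $\Bern(1/2)$. The statistic $\sum_j (y_j - 1/2)\,\sgn(f(x_j) - 1/2)$ over samples landing in $S$ then separates the worlds after $O(1/(c^2\eps))$ samples, with no dependence on $|\calX|$.

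If the orientation is instead independent of $f$, the labels are hidden, but you have no argument that the random world has a \emph{larger} distance; the spread-out values there make calibration easier, not harder. Nor does the hardness section contain a general certification lemma you could invoke: its lower bounds are specific to the Subset Sum gadgets.

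The missing idea is to reverse the roles, so the lower bound falls on the world where it is trivial. The paper's instance has:
\begin{itemize}
\item anchors $x^-, x^+$ of mass $\gamma/4$ each, with $f(x^-) = 1/3$, $f(x^+) = 2/3$, and $\mu = 1/2$;
\item the bulk mass $1-\gamma$ on a point $\bot$ with $f(\bot) = \mu(\bot) = 1/2$;
\item $4k$ points $x_i$ of mass $\gamma/(8k)$ each, with $f(x_i) = 1/2$.
\end{itemize}
In the pure world every $\mu$ equals $1/2$, so the constant $1/2$ is the \emph{only} calibrated predictor and $\CalDist_{\calD}(f) = \gamma/12$ exactly. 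In the mixed world the $x_i$ get i.i.d.\ $\Bern(1/2)$ values, independent of $f$. Move $k$ zero-valued points to $1/3$, $k$ one-valued points to $2/3$, and the rest to their average. This explicit calibrated predictor certifies $\CalDist_{\calD}(f) \le \gamma/16$ with high probability. Only with this orientation does the collision bound $m^2\gamma^2/(32k)$, with $\gamma = 100\eps$, yield $\Omega(\sqrt{|\calX|}/\eps)$.

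\textbf{Second part: right instance, but the rigorous plan does not work as stated.} Your instance is essentially the paper's; the paper uses $f(i) = \mu(i) = 1/3 + i/(3k)$ to keep label variances bounded below, whereas your $i/k$ reaches $\mu = 1$. Your heuristic is also sound. The problems are in the plan for making it rigorous. On $\Dhat$ every element is its own type, so nothing forces the optimal partition to be interval-like. Even a union bound over the $2^{k-1}$ interval partitions would need exponentially small failure probabilities, which single anticoncentration bounds do not give.

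The paper avoids pooling altogether by working at $m = k^3$, i.e., choosing $k = \Theta(m^{1/3})$ rather than fixing $k$ and shrinking $m$. There the per-element noise $\approx 1/k$ matches the spacing $1/(3k)$, which is exactly the regime where your optimal block size is $t = 1$. Call $i$ typical if $n_i \in [m/(2k), 2m/k]$ and $|\Delta_i| \ge 1/(300k)$, and let $T$ be the set of typical elements. Chernoff, Berry--Esseen, and Markov give $|T| \ge 0.9k$ with probability $0.9$. Then, deterministically, every part $A$ costs at least
\[
\Omega(1/k^2)\cdot\bigl(|A \cap T| - \1{|A \cap T| = 1 \wedge |A \cap \overline{T}| \ge 1}\bigr).
\]
A typical singleton pays through its empirical deviation. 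A part with at least two typical elements pays through the spacing, since all but one of them lie at distance at least $1/(6k)$ from the common value. Summing over parts gives $\Omega(1/k)$ for every partition simultaneously, with no union bound.
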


The assumption $|\calX| = \Omega(1/\eps)$ for the second part is necessary; otherwise, the first part of \Cref{thm:sample-UB} gives an upper bound of $O(|\calX| / \eps^2) \ll 1/\eps^3$.

\subsection{Related Work}
\paragraph{Distance from calibration.} The distance from calibration was introduced by B\l{}asiok, Gopalan, Hu, and Nakkiran~\cite{BGHN23} in their systematic study of calibration measures. They identified several ``consistent'' calibration measures that are polynomially related to the distance from calibration on every distance, while satisfying other natural properties such as continuity. The work of~\cite{BGHN23} primarily focused on the \emph{prediction-only access} (PA) model, in which only prediction-label pairs of form $(f(x), y)$ (where $(x, y) \sim \calD$) are observed. It was shown that the distance from calibration is not identifiable in the PA model. In comparison, our lower bounds (\Cref{thm:NP-hard,thm:sample-LB}) establish further barriers---both computational and statistical---to computing the calibration distance in stronger access models, where one is given either $\calD$ or triples of form $(x, f(x), y)$.

Independent and concurrent work of Gopalan, Stavropoulos, Talwar, and Tankala~\cite{GSTT26} shows that, in the PA model, $\Omega(\sqrt{|\calX|})$ samples are necessary for distinguishing whether the \emph{upper distance from calibration}---another consistent calibration measure introduced by~\cite{BGHN23}---is $\Omega(\eps)$ or $O(\eps^2)$. Their lower bound follows from a similar collision argument to the one for the first part of \Cref{thm:sample-LB}.

\paragraph{Sequential calibration.} The calibration distance has also been studied in the sequential prediction setting, where the forecaster makes probabilistic predictions on $T$ binary outcomes that are observed sequentially. Qiao and Zheng~\cite{QZ24} gave an efficient algorithm that approximates the calibration distance up to $O(1/\sqrt{T})$ error, and showed that the optimal error rate is between $O(T^{-1/2})$ and $\Omega(T^{-2/3})$ when the sequence is chosen by an adaptive adversary. Arunachaleswara, Collina, Roth, and Shi~\cite{ACRS25} obtained an efficient and deterministic algorithm that achieves the $O(T^{-1/2})$ upper bound with a leading constant factor of $2$.

\paragraph{Other calibration measures.} Subsequent work introduced calibration measures that satisfy other natural properties such as truthfulness~\cite{HQYZ24,QZ25,HWW25}, testability~\cite{RSBRW25}, and guarantees for downstream decision-making~\cite{KPLST23,NRRX23,RS24,HW24,QZ25,RSBRW25,HWY25}. The optimal error rates in the sequential calibration setting were also studied for several calibration measures, including the ($\ell_1$) expected calibration error (ECE)~\cite{FV98,QV21,DDFGKO25} and its high-dimensional variant~\cite{Peng25,FGMS25}, $\ell_2$-calibration~\cite{FKOPLST25}, U-calibration~\cite{KPLST23,LSS24}, and KL-calibration~\cite{LSS25}.

\paragraph{Hardness of calibration.} On the technical level, closest to our study is a work of Okati, Tsirtsis, and Gomez Rodriguez~\cite{OTGR23} on the problem of partitioning the domain into subgroups subject to a within-group monotonicity constraint given the predictor. They proved that the problem is $\NP$-hard by reduction from the partition problem, and gave an efficient dynamic programming algorithm under an additional constraint that each subgroup must form a contiguous interval.

In comparison, our hardness proofs are more complicated due to the optimization (instead of decision) nature of the problem. Indeed, the crux of the analysis is to argue that, in every ``No'' instance of Partition (of variants thereof), the non-existence of ``good'' partitions leads to a strictly higher distance from calibration. On the algorithmic side, part of our results is based on a similar restriction to contiguous partitions, though the restriction is without loss of generality after proper sparsification of the instance.

\subsection{Organization of the Paper}
A few useful properties of the calibration distance are given in \Cref{sec:prelim}. In \Cref{sec:overview}, we present the proof sketches for our results while highlighting our techniques, followed by a discussion of open problems in \Cref{sec:discussion}. The full proofs of the computational results (\Cref{thm:runtime,thm:NP-hard}) are presented in \Cref{sec:algorithm,sec:hardness}, while the estimation results (\Cref{thm:sample-UB,thm:sample-LB}) are proved in \Cref{sec:sample-UB,sec:sample-LB,sec:one-sided-LB}.

%% file: prelim.tex
\section{Preliminaries}\label{sec:prelim}

\subsection{Calibrated Predictors as Partitions}
Perfectly calibrated predictors naturally correspond to partitions of the domain.
\begin{fact}\label{fact:CalDist-vs-partition}
    Every partition $\{\calX_1, \calX_2, \ldots, \calX_k\}$ of $\calX$ induces a perfectly calibrated predictor
    \[
        f(x) = \mu_{\calD}(\calX_i), ~\forall i \in [k], x \in \calX_i.
    \]
    Conversely, for every $f \in \Cal(\calD)$, the partition
    \[
        \{f^{-1}(\alpha): \alpha \in [0, 1]\}
    =   \{\{x \in \calX: f(x) = \alpha\}: \alpha \in [0, 1]\}
    \]
    induces the predictor $f$ as above.
\end{fact}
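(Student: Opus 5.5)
Both directions follow by unwinding the definition of perfect calibration together with the formula for $\mu_{\calD}(\calX')$. Throughout, I handle the degenerate case of zero-mass parts with the convention from the footnote: such parts impose no constraint, because calibration only conditions on values $\alpha$ with positive probability.

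For the first direction, fix a partition $\{\calX_1,\ldots,\calX_k\}$ and let $f$ be the predictor defined in the statement. Take any value $\alpha$ with $\pr{(x,y)\sim\calD}{f(x)=\alpha}>0$, and let $I_\alpha=\{i\in[k]:\mu_{\calD}(\calX_i)=\alpha\}$. Then $f^{-1}(\alpha)=\bigcup_{i\in I_\alpha}\calX_i$; this needs care, since distinct parts may share a conditional mean. The key step is to write $\pr{}{y=1\mid f(x)=\alpha}$ as a mass-weighted average of the quantities $\mu_{\calD}(\calX_i)$ over $i\in I_\alpha$ with positive mass. To do this, expand numerator and denominator using $\sum_{x\in\calX_i}\calD_x(x)\mu_{\calD}(x)=\calD_x(\calX_i)\,\mu_{\calD}(\calX_i)$. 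Every term in the average equals $\alpha$, so the average equals $\alpha$. Zero-mass parts contribute nothing to either sum, so the value assigned to them is irrelevant. Hence $f\in\Cal(\calD)$.

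For the converse, take $f\in\Cal(\calD)$ and the partition $\{f^{-1}(\alpha)\}$, where only values actually attained are kept so that the parts are nonempty. The induced predictor assigns to each $x\in f^{-1}(\alpha)$ the value $\mu_{\calD}(f^{-1}(\alpha))=\pr{}{y=1\mid f(x)=\alpha}$. When this part has positive mass, perfect calibration gives exactly $\alpha=f(x)$. When the part has zero mass, the two predictors can differ only on a $\calD_x$-null set. So the induced predictor coincides with $f$ everywhere outside $\calD_x$-null points, which is all that matters for $d_{\calD}$ and $\CalDist_{\calD}$. Alternatively, the convention can be adopted that zero-mass parts keep their original value.

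The only mild obstacle is bookkeeping. In the first direction, several parts can share the same mean, so $f^{-1}(\alpha)$ may be a union of parts rather than a single part, and the proof must handle this. In both directions, zero-mass sets must be dealt with. Neither issue requires anything beyond the averaging identity above.
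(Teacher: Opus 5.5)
Your proof is correct. The paper states this as a Fact without proof, treating it as immediate from the definitions, and your argument is exactly that routine unwinding. You also correctly handle two points the paper's statement glosses over: several parts may share a conditional mean, so $f^{-1}(\alpha)$ can be a union of parts; and zero-mass parts have undefined $\mu_{\calD}(\calX')$, so the converse holds only up to $\calD_x$-null points unless one fixes a convention for them.
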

\Cref{fact:CalDist-vs-partition} implies a surjection from the families of partitions to $\Cal(\calD)$. Since $\calX$ is finite, the set $\Cal(\calD)$ is also finite, so the infimum in the definition of $\CalDist_{\calD}(f)$ can always be achieved. Moreover, computing the distance from calibration can be viewed as an optimization problem over partitions of $\calX$. The following notion of the \emph{cost} of a partition will be useful.

\begin{definition}[Cost of partition]\label{def:partition-cost}
    Given $\calD$ and $f$, the cost of $\calX' \subseteq \calX$ is
    \[
        \cost_{\calD, f}(\calX') \coloneqq \sum_{x \in \calX'}\calD_x(x) \cdot \left|f(x) - \mu_{\calD}(\calX')\right|.
    \]
    For a partition $\calP = \{\calX_1, \calX_2, \ldots, \calX_k\}$ of $\calX$, we abuse the notation and write
    \[
        \cost_{\calD, f}(\calP) \coloneqq \sum_{i=1}^{k}\cost_{\calD, f}(\calX_i).
    \]
\end{definition}
Note that $\cost_{\calD, f}(\calP)$ is exactly the distance $d_{\calD}(f, g)$, where $g \in \Cal(\calD)$ is the predictor induced by $\calP$. Therefore, $\CalDist_{\calD}(f)$ is the minimum of $\cost_{\calD, f}(\calP)$ over all partitions of $\calX$.

\subsection{Continuity of the Calibration Distance}

By definition, $\CalDist_{\calD}(f)$ is $1$-Lipschitz in $f$ with respect to $d_{\calD}$.

\begin{fact}\label{fact:continuity-in-f}
    For any distribution $\calD \in \Delta(\calX \times \zo)$ and predictors $f, g: \calX \to [0, 1]$,
    \[\left|\CalDist_{\calD}(f) - \CalDist_{\calD}(g)\right| \le d_{\calD}(f, g).\]
\end{fact}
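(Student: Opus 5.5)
The plan is to prove the one-sided inequality $\CalDist_{\calD}(f) \le \CalDist_{\calD}(g) + d_{\calD}(f, g)$ and then swap the roles of $f$ and $g$. The point to exploit is that $\Cal(\calD)$ depends only on $\calD$ and not on the predictor being measured. This is why the infimum in the definition of $\CalDist_{\calD}$ runs over the same set for both $f$ and $g$.

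First, fix any $h \in \Cal(\calD)$. Since $d_{\calD}(u, v) = \Ex{x \sim \calD_x}{|u(x) - v(x)|}$, the pointwise triangle inequality $|f(x) - h(x)| \le |f(x) - g(x)| + |g(x) - h(x)|$ together with linearity of expectation gives
\[
    \CalDist_{\calD}(f) \le d_{\calD}(f, h) \le d_{\calD}(f, g) + d_{\calD}(g, h).
\]
Taking the infimum over $h \in \Cal(\calD)$ on the right-hand side yields $\CalDist_{\calD}(f) \le d_{\calD}(f, g) + \CalDist_{\calD}(g)$. Alternatively, by the discussion after \Cref{fact:CalDist-vs-partition}, this infimum is attained by some $h^\star$, and we may simply plug in $h = h^\star$.

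Second, exchanging $f$ and $g$ and using the symmetry $d_{\calD}(f, g) = d_{\calD}(g, f)$ gives the reverse inequality. Combining the two bounds gives the absolute-value statement.

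There is no real obstacle here. The only point needing care is that the feasible set $\Cal(\calD)$ is independent of the predictor, which holds because perfect calibration of $h$ is a property of $h$ and $\calD$ alone.
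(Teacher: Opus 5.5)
Your proposal is correct and matches the paper, which justifies this fact simply ``by definition'': the pointwise triangle inequality for $d_{\calD}$ combined with taking the infimum over the predictor-independent set $\Cal(\calD)$, then symmetrizing. Nothing further is needed.
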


We prove similar Lipschitz properties in the distribution $\calD$ with respect to the TV distance. In Appendix~\ref{app:basic}, we prove that the cost of every partition is Lipschitz in $\calD$.

\begin{lemma}\label{lem:partition-cost-continuity-in-D}
    For any $\calD_1, \calD_2 \in \Delta(\calX \times \zo)$, $f: \calX \to [0, 1]$, and partition $\calP$ of $\calX$,
    \[
        \left|\cost_{\calD_1, f}(\calP) - \cost_{\calD_2, f}(\calP)\right| \le 5 \cdot \dTV(\calD_1, \calD_2).
    \]
\end{lemma}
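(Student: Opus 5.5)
The plan is to prove the bound one part $\calX' \in \calP$ at a time and then add up over the parts. For a distribution $\calD$, define the joint weights $p(x) \coloneqq \calD_x(x)$ and $q(x) \coloneqq \calD(x, 1) = \calD_x(x)\mu_{\calD}(x)$. Write $P \coloneqq \sum_{x \in \calX'} p(x)$ and $Q \coloneqq \sum_{x \in \calX'} q(x)$, so that $\mu_{\calD}(\calX') = Q/P$ and
\[
    \cost_{\calD, f}(\calX') = \sum_{x \in \calX'} \left|p(x) f(x) - p(x)\cdot \tfrac{Q}{P}\right|.
\]
Let $p_1, q_1, P_1, Q_1$ and $p_2, q_2, P_2, Q_2$ denote these quantities for $\calD_1$ and $\calD_2$. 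Define the local discrepancies
\[
    \delta(x) \coloneqq |\calD_1(x,0) - \calD_2(x,0)| + |\calD_1(x,1) - \calD_2(x,1)|.
\]
Then $\sum_{x \in \calX} \delta(x) = 2\,\dTV(\calD_1, \calD_2)$. Moreover, $|p_1(x) - p_2(x)| \le \delta(x)$ and $|q_1(x) - q_2(x)| \le \delta(x)$. With $\Delta \coloneqq \sum_{x \in \calX'} \delta(x)$, we get $|P_1 - P_2| \le \Delta$ and $|Q_1 - Q_2| \le \Delta$.

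The key step is to show that $|\cost_{\calD_1,f}(\calX') - \cost_{\calD_2,f}(\calX')| \le C\Delta$ with $C \le 5/2$. Summing over the parts of $\calP$ then gives at most $C \cdot 2\,\dTV = 5\,\dTV$. By the triangle inequality, the per-part difference is at most
\[
    \sum_{x \in \calX'} \left|p_1(x) - p_2(x)\right| f(x) \;+\; \sum_{x \in \calX'} \left|p_1(x)\tfrac{Q_1}{P_1} - p_2(x)\tfrac{Q_2}{P_2}\right|.
\]
The first sum is at most $\Delta$ because $f \in [0,1]$. For the second sum, I will write
\[
    p_1\tfrac{Q_1}{P_1} - p_2\tfrac{Q_2}{P_2} = (p_1 - p_2)\tfrac{Q_1}{P_1} + p_2\left(\tfrac{Q_1}{P_1} - \tfrac{Q_2}{P_2}\right).
\]
The first term contributes at most $\Delta$ since $Q_1/P_1 \le 1$. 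The second term contributes $P_2\,|Q_1/P_1 - Q_2/P_2|$.

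The main obstacle is bounding $P_2\,|\mu_1 - \mu_2|$, where $\mu_i = Q_i/P_i$, because $P_1$ may be tiny and naive division blows up. I will handle it by writing
\[
    P_2(\mu_1 - \mu_2) = (P_2 - P_1)\mu_1 + (P_1\mu_1 - P_2\mu_2) = (P_2 - P_1)\mu_1 + (Q_1 - Q_2).
\]
Since $\mu_1 \le 1$, this is at most $2\Delta$ in absolute value. The per-part total is then at most $\Delta + \Delta + 2\Delta = 4\Delta$, which sums to $8\,\dTV$. That bound is weaker than the claimed constant, so I will tighten it.

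To tighten, I will avoid splitting off $f$ separately. I will bound directly
\[
    \sum_{x \in \calX'} \bigl|p_1(f - \mu_1) - p_2(f - \mu_2)\bigr| \le \sum_{x \in \calX'} |p_1 - p_2|\,|f - \mu_1| + P_2|\mu_1 - \mu_2| \le \Delta + 2\Delta = 3\Delta,
\]
which gives $6\,\dTV$. A further refinement is to bound $|p_1 - p_2|$ by $\delta$ but $|Q_1 - Q_2|$ and $|P_1 - P_2|$ together more carefully. In the expression $(P_2 - P_1)\mu_1 + (Q_1 - Q_2)$, each $x$ contributes $-(\Delta p)\mu_1 + \Delta q$, where $\Delta p = \Delta p_0 + \Delta q$ and $\Delta p_0$ is the change in $\calD(x,0)$. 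This equals $(1-\mu_1)\Delta q - \mu_1 \Delta p_0$, whose absolute value is at most $\delta(x)$. Hence $P_2|\mu_1 - \mu_2| \le \Delta$, the per-part total is at most $2\Delta$, and summing gives $4\,\dTV \le 5\,\dTV$. Degenerate parts with $P_1 = 0$ or $P_2 = 0$ are handled by the convention $\mu = 0$, or directly: in that case the cost is at most $P_i \le \Delta$.
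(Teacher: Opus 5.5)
Your argument is correct and is essentially the paper's: both work part by part, split the change in cost into a weight-change term (bounded by the $\ell_1$ change of $\calD_x$ on the part, using $|f-\mu|\le 1$) and a mean-shift term $P\,|\mu_1-\mu_2|$, and then sum over the parts. The differences are only in bookkeeping. You bound the two-sided difference directly via $\bigl||a|-|b|\bigr|\le|a-b|$, whereas the paper proves one side and invokes symmetry. Your per-point identity $(1-\mu_1)\Delta q-\mu_1\Delta p_0$ also controls the mean-shift term by $\Delta$ rather than the paper's $|m^{(0,1)}-m^{(0,2)}|+2|m^{(1,1)}-m^{(1,2)}|$, so you get the sharper constant $4$ in place of $5$.
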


It immediately follows that the distance from calibration is also Lipschitz in $\calD$.
\begin{lemma}\label{lem:continuity-in-D}
    For any $\calD_1, \calD_2 \in \Delta(\calX \times \zo)$ and $f: \calX \to [0, 1]$,
    \[
        \left|\CalDist_{\calD_1}(f) - \CalDist_{\calD_2}(f)\right| \le 5 \cdot \dTV(\calD_1, \calD_2).
    \]
\end{lemma}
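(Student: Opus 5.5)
The plan is to derive \Cref{lem:continuity-in-D} directly from \Cref{lem:partition-cost-continuity-in-D}, combined with the partition characterization of the calibration distance. By \Cref{fact:CalDist-vs-partition} and the remark after \Cref{def:partition-cost}, for every distribution $\calD$ we have
\[
    \CalDist_{\calD}(f) = \min_{\calP}\cost_{\calD, f}(\calP),
\]
where the minimum ranges over all partitions $\calP$ of $\calX$. Since $\calX$ is finite, this family of partitions is finite, and it does not depend on $\calD$. Therefore both $\CalDist_{\calD_1}(f)$ and $\CalDist_{\calD_2}(f)$ are minima of two collections of functions indexed by the same finite set.

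First, let $\calP_2$ be a partition that achieves $\CalDist_{\calD_2}(f) = \cost_{\calD_2, f}(\calP_2)$. Using $\calP_2$ as a feasible choice for $\calD_1$ and applying \Cref{lem:partition-cost-continuity-in-D} gives
\[
    \CalDist_{\calD_1}(f) \le \cost_{\calD_1, f}(\calP_2) \le \cost_{\calD_2, f}(\calP_2) + 5\,\dTV(\calD_1, \calD_2) = \CalDist_{\calD_2}(f) + 5\,\dTV(\calD_1, \calD_2).
\]
Second, the same argument with the roles of $\calD_1$ and $\calD_2$ swapped, now starting from a minimizer $\calP_1$ for $\calD_1$, gives the reverse inequality. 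Together the two inequalities yield the stated absolute-value bound.

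The only point that needs care is that the feasible set of partitions is the same for both distributions. This holds even if some $x$ has zero mass under one of them: the cost is still well defined because of the convention $\mu_{\calD}(x) = 0$, and blocks of zero total mass contribute zero cost. So the minimum over partitions is taken over the same set in both cases. There is no real obstacle here, since all the analytic work lies in \Cref{lem:partition-cost-continuity-in-D}, and the step above is the standard fact that a pointwise $L$-Lipschitz bound is preserved under taking a minimum.
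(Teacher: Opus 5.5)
Your proposal is correct and follows the same approach as the paper: write $\CalDist_{\calD_i}(f) = \min_{\calP}\cost_{\calD_i, f}(\calP)$ over the common finite family of partitions of $\calX$, then apply \Cref{lem:partition-cost-continuity-in-D} to a minimizer for one distribution and use it as a feasible partition for the other. You spell out the two one-sided inequalities and the point about zero-mass parts explicitly, whereas the paper compresses this into a single sentence.
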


\begin{proof}
    By \Cref{fact:CalDist-vs-partition}, $\CalDist_{\calD_i}(f) = \min_{\calP}\cost_{\calD_i, f}(\calP)$ holds for each $i \in \{1, 2\}$, where $\calP$ ranges over all partitions of $\calX$. The lemma then follows from \Cref{lem:partition-cost-continuity-in-D}.
\end{proof}

%% file: overview.tex
\section{Technical Overview}\label{sec:overview}
\subsection{Calibration Distance as Optimal Transport}
It will be helpful to view the distance from calibration from an optimal transport perspective. Suppose that there are $|\calX|$ point masses on the line segment $[0, 1]$: each element $x \in \calX$ corresponds to a point of \emph{mass} $\calD_x(x)$ and \emph{value} $\mu_{\calD}(x)$ located at $f(x)$. In the definition of $\CalDist_{\calD}(f)$, the alternative predictor $g$ specifies the destinations of these elements: each point $x$ is transported from location $f(x)$ to $g(x)$ at a cost of $\calD_x(x) \cdot |f(x) - g(x)|$---the product of the mass and the distance. The constraint $g \in \Cal(\calD)$ corresponds to a requirement for the resulting configuration: at any location $\alpha \in [0, 1]$, the mass-weighted average value---namely, the average of $\mu_{\calD}(x)$ with weights proportional to $\calD_x(x)$---should be equal to $\alpha$.

Given the equivalence above, in the rest of this section, we will sometimes specify the perfectly calibrated predictor $g$ by describing a way of ``moving'' the point masses defined by the problem instance $(\calD, f)$.

\subsection{Sparsification for Efficient Computation}\label{sec:overview-algorithm}
As a warm-up, we prove the first part of \Cref{thm:runtime}. From the optimal transport perspective, each of the $|\calX|$ points has mass $\calD_x(x) = 1/|\calX|$ and binary value $\mu_{\calD}(x) \in \{0, 1\}$. Let $x^{(0)}_1, x^{(0)}_2, \ldots, x^{(0)}_{n_0} \in \calX$ be the $0$-valued points from left to right, i.e., $f(x^{(0)}_1) \le f(x^{(0)}_2) \le \cdots \le f(x^{(0)}_{n_0})$. Define $n_1 = |\calX| - n_0$ and $x^{(1)}_1, x^{(1)}_2, \ldots, x^{(1)}_{n_1}$ for $1$-valued points analogously.

Let $g \in \Cal(\calD)$ be a predictor that witnesses $\CalDist_{\calD}(f)$. Let $p^{(0)}_1 \le p^{(0)}_2 \le \cdots \le p^{(0)}_{n_0}$ (resp., $p^{(1)}_1 \le p^{(1)}_2 \le \cdots \le p^{(1)}_{n_1}$) denote the destinations of the $0$-valued (resp., $1$-valued) points specified by $g$. Since the optimal transport on the line is order-preserving, we may assume that each point mass $x^{(b)}_i$ is moved to $p^{(b)}_i$.

Switching to the partition view of the calibration distance (\Cref{fact:CalDist-vs-partition}), the observation above allows us to focus on \emph{contiguous} partitions of $\calX$. The part that corresponds to the largest predicted value must be of form $\{x_{m_0 + 1}^{(0)}, x_{m_0 + 2}^{(0)}, \ldots, x_{n_0}^{(0)}\} \cup \{x_{m_1 + 1}^{(1)}, x_{m_1 + 2}^{(1)}, \ldots, x_{n_1}^{(1)}\}$. Then, the part with the second largest prediction must be the union of a suffix of $x^{(0)}_1, x^{(1)}_2, \ldots, x^{(0)}_{m_0}$ and a suffix of $x^{(1)}_1, x^{(1)}_2, \ldots, x^{(1)}_{m_1}$, and so on.

Therefore, the optimal partition can be found in $O(|\calX|^5)$ time via straightforward dynamic programming. Let $\opt(m_0, m_1)$ denote the optimal cost among all contiguous partitions of the $m_0$ left-most $0$-valued elements and the $m_1$ left-most $1$-valued ones. There are $O(|\calX|^2)$ such values to be computed, while each $\opt(m_0, m_1)$ can be found by enumerating $O(m_0 m_1) = O(|\calX|^2)$ choices of the right-most part. Furthermore, the cost of the right-most part can be computed in $O(|\calX|)$ time. With careful pre-processing, we can compute each cost in constant time, thereby improving the runtime to $O(|\calX|^4)$.

\paragraph{A QPTAS via type-sparsity.} The key observation above was that all $0$-valued (resp., $1$-valued) elements are indistinguishable: swapping any two $0$-valued (resp., $1$-valued) points in the resulting configuration maintains calibration. This observation allows us to focus on contiguous partitions, thereby reducing the number of sub-problems to $O(|\calX|^2)$.

This approach naturally extends to distributions with only a few \emph{types} of indistinguishable elements. Formally, $x_1, x_2 \in \calX$ are of the same type if $\calD_x(x_1) = \calD_x(x_2)$ and $\mu_{\calD}(x_1) = \mu_{\calD}(x_2)$. We may restrict our attention to \emph{type-monotone} predictors that are order-preserving within each type. Assuming there are at most $k$ different types, we can compute $\CalDist_{\calD}(f)$ using a dynamic programming algorithm with $O(|\calX|^k)$ states in $O(|\calX|^{2k})$ time.

We handle the general case by approximating the distribution $\calD$ with a \emph{type-sparse} one that is $O(\eps)$-close in the TV distance. By \Cref{lem:continuity-in-D}, this gives an additive approximation with error $\eps$. If $\calD$ has a uniform $\calX$-marginal, it suffices to round the conditional probabilities to integer multiples of $O(\eps)$, resulting in an $O(1/\eps)$-type-sparse distribution. For general $\calD$, we additionally discretize the marginal probabilities by removing all probability masses below $O(\eps / |\calX|)$ and rounding the remaining ones to integer powers of $1 + O(\eps)$. This leads to at most $O(\log(|\calX| / \eps) / \eps)$ different marginal probabilities, and thus a type-sparsity of $O(\log(|\calX| / \eps) / \eps^2)$.

The runtime of the resulting approximation algorithm is $|\calX|^{O(1/\eps)}$ in the uniform case, which gives the second part of \Cref{thm:runtime}. In the general case, we obtain a runtime of $|\calX|^{O(\log(|\calX| / \eps) / \eps^2)}$ and thus only a quasi-polynomial-time approximation scheme (QPTAS).

\paragraph{A PTAS via prediction-sparsity.} We obtain a PTAS for the general case by exploiting a different type of sparsity. Let $g \in \Cal(\calD)$ be a predictor that witnesses the calibration distance. While $g$ might predict many different values, we can reduce the size of its range while maintaining both perfect calibration and the distance from $f$.

We partition the interval $[0, 1]$ into $O(1/\eps)$ intervals of length $O(\eps)$. Consider the points in each interval in the configuration specified by $g$, and move all of them to the same location---the mass-weighted average of all point values. This introduces $O(\eps)$ units of additional transportation for each point mass, and thus an $O(\eps)$ increase in the total cost.

Therefore, we may restrict our attention to partitions of the following form: There are exactly $O(1/\eps)$ parts, one for each length-$O(\eps)$ interval. Furthermore, each interval corresponds to a part with a mass-weighted average value that falls into that interval. Intuitively, optimizing over such partitions resembles a multiple knapsack problem with $O(1/\eps)$ knapsacks, and should admit a PTAS. There are two remaining difficulties: First, to verify the validity of each part, we need to track the mass-weighted average value. Second, the transportation cost for a point depends on its final destination (namely, the mass-weighted average), which is not determined until we identify all elements in the part.

The first can be easily handled by rounding the probability masses of $\calD$ to integer multiples of $O(\eps / |\calX|)$ at the cost of an $O(\eps)$ change to the distribution in the TV distance, and thus the calibration distance (by \Cref{lem:continuity-in-D}). We address the second issue by using the middle point of the interval as a proxy for the destination. As long as the actual destination is inside the interval, doing so only changes the cost by $O(\eps)$, which is again absorbed into the approximation error.

Therefore, each state in the dynamic programming needs to track the total $0$-valued and $1$-valued masses in each of the $O(1/\eps)$ parts. Due to the discretization, there are at most $(|\calX| / \eps)^{O(1/\eps)}$ states. The transitions are naturally determined by enumerating the part to which the element in question belongs. This leads to a PTAS with runtime $(|\calX| / \eps)^{O(1/\eps)}$ and completes \Cref{thm:runtime}.

\subsection{$\NP$-Hardness from Subset Sum}\label{sec:overview-hardness}
In light of the connection between the calibration distance and set partitions (\Cref{fact:CalDist-vs-partition}), we show that the problem is intractable by reduction from partition-type problems like the Subset Sum Problem (SSP). The numbers in an SSP instance can be naturally embedded into the various probabilities in a calibration distance instance.

We will start by reviewing an example introduced by~\cite{BGHN23} for separating the calibration distance from other calibration measures. We then show how similar constructions can be used to embed SSP instances, while sketching several techniques for certifying a large calibration distance.

\paragraph{Example of~\cite{BGHN23}.} Suppose that $\calD$ has a uniform marginal over $\calX = \{x_{0^-}, x_{1^-}, x_{0^+}, x_{1^+}\}$ and conditional probabilities $\mu_{\calD}(x_{0^-}) = \mu_{\calD}(x_{0^+}) = 0$ and $\mu_{\calD}(x_{1^-}) = \mu_{\calD}(x_{1^+}) = 1$. The predictions are given by $f(x_{0^-}) = f(x_{1^-}) = 1/2 - \eps$ and $f(x_{0^+}) = f(x_{1^+}) = 1/2 + \eps$. From the transportation perspective, we have four points of equal mass $1/4$. The bit ($0$ or $1$) indicates the value of the point, while the sign ($-$ or $+$) denotes whether the initial location is slightly below or above $1/2$. When $\eps$ is small enough, the optimal transport is given by moving all points to $1/2$ at a total cost of $\eps$, since any other strategy must move at least one point to $\{0, 1/3, 2/3, 1\}$ and incur a cost $\gg \eps$.

Imagine, for now, that each point mass is broken into infinitely many points of infinitesimal masses, each with the same value and at the same location. Equivalently, we can freely break each point mass into pieces, and transport each piece to a different location.\footnote{This corresponds to the \emph{lower distance from calibration} introduced by~\cite{BGHN23}.} Then, for some $\delta = \Theta(\eps)$, once we remove a $\delta$-mass piece of $x_{1^-}$ away from its location $1/2 - \eps$, the calibration condition is satisfied at $1/2 - \eps$. Similarly, we may remove a $\delta$-portion of $x_{0^+}$ to satisfy the condition at $1/2 + \eps$. The two removed pieces have identical masses and opposite values, so we can achieve calibration by moving both pieces to $1/2$. The total cost is then $2\delta\eps = \Theta(\eps^2)$, which is much lower than the original calibration distance of $\eps$.

Suppose, instead, that $x_{1^-}$ and $x_{0^+}$ are broken into finitely many point masses. Then, whether the distance from calibration is closer to $\eps$ or $2\delta\eps$ depends on the ``granularity'' of the point masses. If there is a subset with a total mass of exactly $\delta$ on each side, we can still use the strategy above to obtain a calibration distance of $2\delta\eps$. If the opposite is also true, we may reduce the SSP to computing the calibration distance, thereby proving the $\NP$-hardness of the latter problem.

\paragraph{Reduction for the noiseless case.} Recall that an SSP instance consists of positive integers $a_1, a_2, \ldots, a_n$ and $\theta$, and the goal is to decide whether there is a subset with sum $\theta$. In our actual reduction, we slightly modify the example of~\cite{BGHN23}. Let $S$ be the sum of the $n$ numbers. The construction involves small parameters $\pstar, \alpha > 0$ and the following four groups of elements:
\begin{itemize}
    \item Element $x_0$ with $f(x_0) = 1/2$, $\calD_x(x_0) = 1/4 - \pstar/2$, and $\mu(x_0) = 0$.
    \item Element $x_1$ with $f(x_1) = 1/2$, $\calD_x(x_1) = 1/4 + \pstar/2$, and $\mu(x_1) = 1$.
    \item A set of $n$ elements $X_{0^+} \coloneqq \{x_{0^+, 1}, x_{0^+, 2}, \ldots, x_{0^+, n}\}$ with $f(x_{0^+, i}) = 1/2 + \alpha$, $\calD_x(x_{0^+, i}) = \frac{1}{4} \cdot \frac{a_i}{S}$, and $\mu(x_{0^+, i}) = 0$ for every $i \in [n]$.
    \item Element $x_{1^+}$ with $f(x_{1^+}) = 1/2 + \alpha$, $\calD_x(x_{1^+}) = 1/4$, and $\mu(x_{1^+}) = 1$.
\end{itemize}
The parameters $\pstar = \theta / (4S)$ and $\alpha = \pstar / (1 - \pstar)$ are chosen such that, if the numbers in $\{a_i: i \in \calI\}$ sum up to $\theta$, the corresponding set $\{x_{0^+,i}: i \in \calI\} \subseteq X_{0^+}$ has a total mass of $\pstar$. Then, by moving these elements from $1/2 + \alpha$ to $1/2$, we obtain a perfectly calibrated configuration at a cost of $\alpha\pstar$. Note that, by \Cref{fact:CalDist-vs-partition}, this transportation corresponds to the partition
\begin{equation}\label{eq:noiseless-yes-instance-partition}
    \calX = (\{x_0, x_1\} \cup \{x_i: i \in \calI\}) \cup (\{x_{1^+}\} \cup \{x_i: i \in \overline{\calI}\}).
\end{equation}
This suggests a natural reduction from the SSP to the noiseless case: we construct the aforementioned noiseless instance and output ``Yes'' if $\CalDist_{\calD}(f) \le \alpha\pstar$ and ``No'' otherwise.

The harder part is to argue that every ``No'' instance of the SSP leads to $\CalDist_{\calD}(f) > \alpha\pstar$. In principle, we need to lower bound the cost of every partition of $\calX = \{x_0, x_1, x_{1^+}\} \cup X_{0^+}$. Fortunately, we can rule out many inefficient partitions and focus on a few ``structured'' ones. Concretely, every partition with cost $\le \alpha\pstar$ must be of a similar form to \Cref{eq:noiseless-yes-instance-partition}:
\[
    \calX = (\{x_0, x_1\} \cup X_1) \cup (\{x_{1^+}\} \cup X_2) \cup X_3,
\]
where $\{X_1, X_2, X_3\}$ is a partition of $X_{0^+}$.

Then, we write $p_i = \calD_x(X_i)$ for $i \in \{1, 2, 3\}$ and note $p_1 + p_2 + p_3 = \calD_x(X_{0^+}) = 1/4$, so there are only two degrees of freedom in the partition. By expressing the cost as a function of $(p_1, p_2, p_3)$ and analyzing it via elementary but involved calculus, we show that the cost is $\le \alpha\pstar$ only if $p_1 = \pstar$, which is impossible for a ``No'' instance of the SSP.

\paragraph{Reduction for the uniform case.} When $\calD$ has a uniform $\calX$-marginal, a natural idea is to encode SSP instances using the conditional probabilities instead. Both the construction and the analysis of ``No'' instances turn out to be more involved.

We will reduce from a ``balanced'' variant of the SSP, where there are $n = 2k$ positive integers $a_1, a_2, \ldots, a_n$, and the goal is to decide whether there are $k$ numbers that sum up to exactly half of $S \coloneqq \sum_{i=1}^{n}a_i$. In addition, the numbers $a_1, a_2, \ldots, a_n$ are promised to be within a factor of $1 + 1/\poly(k)$. This ``Balanced SSP'' problem is $\NP$-hard by reduction from the partition problem.

Given an instance of Balanced SSP, we set $\eps_i \coloneqq \frac{a_i}{S}$, $\eps \coloneqq \frac{1}{6k}$, and $\delta = \frac{1}{4k}$, and consider an instance $(\calD, f)$ with the following three groups of elements:
\begin{itemize}
    \item $x_1, x_2, \ldots, x_n$ with $f(x_i) = 1/2$ and $\mu(x_i) = 1/2 + \eps_i$.
    \item $x'_1, x'_2, \ldots, x'_n$ with $f(x'_i) = 1/2$ and $\mu(x'_i) = 1/2 - \delta$.
    \item $x''_1, x''_2, \ldots, x''_n$ with $f(x''_i) = 1/2 + \eps$ and $\mu(x''_i) = 1/2$.
\end{itemize}
Again, the easier direction is to show that every ``Yes'' instance of Balanced SSP leads to a small distance from calibration. If $\calI \subseteq [n]$ is a size-$k$ set with $\sum_{i \in \calI}a_i = S / 2$, we can achieve calibration by moving the elements $\{x_i: i \in \overline{\calI}\}$ from $1/2$ to $1/2 + \eps$ at a cost of $\frac{k}{6k} \cdot \eps = \frac{1}{36k}$. This transportation corresponds to the partition
\begin{equation}\label{eq:uniform-yes-instance-partition}
    \calX = \left(\{x_i: i \in \calI\} \cup \{x'_i: i \in [n]\}\right) \cup \left(\{x_i: i \in \overline{\calI}\} \cup \{x''_i: i \in [n]\}\right).
\end{equation}

It remains to show that every ``No'' instance leads to a strictly higher calibration distance. Unlike in the noiseless case, where every efficient partition contains at most three parts, we now need to consider partitions with many parts, as there is no obvious way of reducing the number without increasing the cost. In principle, the cost of each part depends on the numbers of $x$-, $x'$-, and $x''$-elements in it, as well as the sum of $\eps_i$s from the $x$-elements. This leads to a cost that depends on many variables---both discrete and continuous---and is thus hard to analyze.

Our workaround is to use the promise of Balanced SSP, which implies $\eps_i \approx \frac{1}{2k}$ up to a factor of $1 \pm 1/\poly(k)$. Thanks to \Cref{lem:partition-cost-continuity-in-D}, rounding each $\eps_i$ to $\frac{1}{2k}$ only perturbs the distribution (and thus the cost of each partition) by $1/\poly(k)$ additively. After the rounding, there are only three \emph{types} of elements (namely, $x$-, $x'$-, and $x''$-elements) in the sense of \Cref{sec:overview-algorithm}. Each part in the partition can then be specified by its \emph{signature}---the numbers of $x$-, $x'$-, and $x''$-elements.

Recall that the partition in \Cref{eq:uniform-yes-instance-partition} involves two parts with signatures $(k, 2k, 0)$ and $(k, 0, 2k)$ respectively. We consider a slight generalization of such partitions, termed \emph{regular} partitions, in which each part has a signature of form either $(t, 2t, 0)$ or $(t, 0, 2t)$. Our analysis consists of the following three steps:
\begin{itemize}
    \item First, we show that every \emph{irregular} partition has a cost of $\ge \frac{1}{36k} + \Omega(1/k^2)$, via an involved case analysis on all possible signatures. This $\Omega(1/k^2)$ sub-optimality gap dominates the $1/\poly(k)$ error incurred by rounding the $\eps_i$s.
    \item Then, by merging parts in a regular partition, we show that it is without loss of generality to assume that the partition has only two parts with signatures $(k, 2k, 0)$ and $(k, 0, 2k)$ respectively. In other words, every efficient partition must be in the form of \Cref{eq:uniform-yes-instance-partition}.
    \item Finally, the cost becomes a univariate function, e.g., of the sum of $\eps_i$s in the part with signature $(k, 2k, 0)$. It again follows from elementary calculus that a cost of $\le 1/(36k)$ can be achieved only if the Balanced SSP instance admits a solution.
\end{itemize}

\subsection{One-Sided Convergence of the Empirical Distance}
Now, we switch to sample-based estimation and focus on the convergence guarantee of the empirical calibration distance. Let $\Dhat \in \Delta(\calX \times \zo)$ be the empirical distribution of a size-$m$ sample. The first part of \Cref{thm:sample-UB} follows from \Cref{lem:continuity-in-D} and the well-known fact that $O(|\calX| / \eps^2)$ samples suffice for $\dTV(\calD, \Dhat) \le \eps$ to hold with high probability. We then focus on the one-sided convergence rate of $\CalDist_{\Dhat}(f)$ (the second parts of \Cref{thm:sample-UB,thm:sample-LB}).

\paragraph{The $O(1/\eps^3)$ upper bound.} Let $k \ge 1$ be an integer to be chosen later, and $g \in \Cal(\calD)$ be a predictor that witnesses $\CalDist_{\calD}(f)$. By the same reasoning as in \Cref{sec:overview-algorithm}, there is another predictor $\tilde g \in \Cal(\calD)$ that predicts at most $k$ different values such that $d_{\calD}(f, \tilde g) \le \CalDist_{\calD}(f) + 1/k$. By \Cref{fact:continuity-in-f}, we have $\CalDist_{\Dhat}(f) \le d_{\Dhat}(f, \tilde g) + \CalDist_{\Dhat}(\tilde g)$, so it suffices to upper bound $d_{\Dhat}(f, \tilde g)$ and $\CalDist_{\Dhat}(\tilde g)$.

The first term is easy to control: $d_{\Dhat}(f, \tilde g)$ is simply the average of $m$ independent $[0, 1]$-valued random variables with mean $d_{\calD}(f, \tilde g)$, so we have $d_{\Dhat}(f, \tilde g) \le d_{\calD}(f, \tilde g) + O(1/\sqrt{m}) \le \CalDist_{\calD}(f) + 1/k + O(1/\sqrt{m})$ with high probability.

We control the second term $\CalDist_{\Dhat}(\tilde g)$ by arguing that $\tilde g$ has a small bias at each predicted value over $\Dhat$. Suppose for simplicity that $\tilde g$ predicts each of the $k$ values exactly $m / k$ times over the sample. Then, the empirical average of the labels deviates from the prediction by $O(1/\sqrt{m/k})$ in expectation. Therefore, $\tilde g$ can be made perfectly calibrated over $\Dhat$ by changing every prediction by $O(1/\sqrt{m/k})$. The general case---that each of the $k$ values might be predicted a different number of times---can be analyzed similarly.

Therefore, it holds with high probability that $\CalDist_{\Dhat}(f) \le \CalDist_{\calD}(f) + 1/k + O(1/\sqrt{m / k})$. Setting $k = m^{1/3}$ gives an $O(m^{-1/3})$ bound on the over-estimation.\footnote{We note that a similar analysis and the same trade-off appears in a proof of Sergiu Hart~\cite{FV98,Hart25}, which shows that a horizon length of $O(1/\eps^3)$ is sufficient for achieving an $\eps$ ECE in sequential calibration.} Equivalently, a size-$O(1/\eps^3)$ sample suffices for bounding the over-estimation by $\eps$.

\paragraph{Tightness of the $1/\eps^3$ dependence.} The $1/\eps^3$ dependence is not an artifact of the upper bound proof; we prove a matching lower bound essentially via reverse engineering the analysis.  Consider the domain $\calX = [k] = \{1, 2, \ldots, k\}$ and instance $(\calD, f)$ specified by
\[
    \calD_x(i) = 1/k
\quad \text{and} \quad
    \mu_{\calD}(i) = f(i) = \frac{1}{3} + \frac{i}{3k}, ~\forall i \in [k].
\]
All the conditional probabilities lie in $[1/3, 2/3]$ and are separated by $\Omega(1/k)$. Note that $f$ is perfectly calibrated over $\calD$, so $\CalDist_{\calD}(f) = 0$. We will show that a sample of size $m = k^3$ can still lead to $\CalDist_{\Dhat}(f) = \Omega(1/k)$, i.e., an $\Omega(m^{-1/3})$ over-estimation. 

We say that element $i \in [k]$ is \emph{typical} if both of the following hold: (1) $i$ appears $\approx m / k = k^2$ times in the sample, which implies $\Dhat(i) = \Omega(1/k)$; (2) the empirical average of labels at $i$ (namely, $\mu_{\Dhat}(i)$) deviates from $f(i) = \mu_{\calD}(i)$ by $\Omega(1/k)$. By standard concentration and anti-concentration bounds, each element is typical with high probability.

Assume for simplicity that all elements are typical. We consider $\CalDist_{\Dhat}(f)$ from the partition perspective. For each part $\calX' = \{i\}$ of size $1$, the cost is given by $\Dhat(i) \cdot |f(i) - \mu_{\Dhat}(i)| = \Omega(1/k^2)$. For each part $\calX'$ with $l \ge 2$ elements, since the predictions $f(1), f(2), \ldots, f(k)$ are separated by $\Omega(1/k)$, we need to move at least $l - 1 \ge l/2$ predictions by $\Omega(1/k)$ regardless of the value of $\mu_{\Dhat}(\calX')$. The cost of $\calX'$ is then at least $\Omega(l / k^2)$. We conclude that the cost of each part is lower bounded by $\Omega(1/k^2)$ times its size, so every partition of $\calX$ has a cost of at least $k \cdot \Omega(1/k^2) = \Omega(1/k)$.

In the actual proof, we can only guarantee that a large fraction of the elements are typical. Fortunately, the argument above is robust enough to handle this case as well.

\subsection{Lower Bound for Sampled-Based Estimation}
For brevity, we assume that $\eps = \Omega(1)$. We will construct two distributions that are indistinguishable with $\ll \sqrt{|\calX|}$ samples, but lead to calibration distances that differ by $\Omega(1)$. The construction is again inspired by the example of~\cite{BGHN23} (see \Cref{sec:overview-hardness}).

Consider the domain $\calX = \{x^{-}, x^{+}\} \cup \{x_1, x_2, \ldots, x_k\}$ and instance $(\calD, f)$ defined as follows:
\begin{itemize}
    \item $\calD_x(x^{-}) = 1/4$, $\mu_{\calD}(x^{-}) = 1/2$, and $f(x^{-}) = 1/3$.
    \item $\calD_x(x^{+}) = 1/4$, $\mu_{\calD}(x^{+}) = 1/2$, and $f(x^{-}) = 2/3$.
    \item $\calD_x(x_i) = 1/(2k)$ and $f(x_i) = 1/2$.
\end{itemize}
From the transportation perspective, we have $k$ points $x_1, x_2, \ldots, x_k$ with equal mass $1/(2k)$ at $1/2$. Two additional points, each with mass $1/4$ and value $1/2$, are on the two sides of $1/2$. For the conditional probabilities (i.e., values) of $x_1, x_2, \ldots, x_k$, we consider the following two cases:
\begin{itemize}
    \item \textbf{The pure case:} $\mu_{\calD}(x_i) = 1/2$ for every $i$.
    \item \textbf{The mixed case:} Each $\mu_{\calD}(x_i)$ is independently drawn from $\Bern(1/2)$.
\end{itemize}

We first note that the two cases above are indistinguishable unless a collision occurs, i.e., some element $x_i$ appears more than once in the sample. This is because, conditioning on that there are no collisions, the labels in the sample independently follow $\Bern(1/2)$. We need a sample size of $\Omega(\sqrt{|\calX|})$ for a collision to happen.

Next, we note an $\Omega(1)$ gap in the calibration distances. In the pure case, since every point is of value $1/2$, we have to move all points to $1/2$, at a total cost of $(1/4) \cdot |1/3 - 1/2| + (1/4) \cdot |2/3 - 1/2| = 1/12$. It follows that $\CalDist_{\calD}(f) = 1/12$.

In the mixed case, we expect to have $\approx k/2$ points with value $0$ and $1$ each. Then, once we move $k/4$ $0$-valued points from $1/2$ to $1/3$ and another $k/4$ $1$-valued points to $2/3$, the configuration becomes calibrated at both $1/3$ and $2/3$. We move the remaining $k/2$ points to the average of their values, which is typically $1/2 \pm O(1/\sqrt{k})$. Therefore, we have
\begin{align*}
    \CalDist_{\calD}(f)
&\le(k/4) / (2k) \cdot |1/3 - 1/2| + (k/4) / (2k) \cdot |2/3 - 1/2| + (k/2) / (2k) \cdot O(1/\sqrt{k})\\
&=  1/24 + O(1/\sqrt{k}),
\end{align*}
establishing an $\Omega(1)$ gap between the pure and mixed cases.

%% file: discussion.tex
\section{Discussion}\label{sec:discussion}
We discuss the consequences of our results and highlight a few open problems.

\paragraph{Certifying high calibration distances.} The complexity of our $\NP$-hardness proofs arises from the lack of efficient certificates for high calibration distances: a predictor $g \in \Cal(\calD)$ with $d_{\calD}(f, g) \le \gamma$ serves as a ``proof'' of $\CalDist_{\calD}(f) \le \gamma$; yet, there seem to be no obvious proofs of $\CalDist_{\calD}(f) > \gamma$. One may hope that a proof can be obtained by carefully choosing a ``test'' function or a ``potential''---similarly to certifying a large \emph{smooth calibration error}~\cite{KF08} using a $1$-Lipschitz function that witnesses a high weighted bias.

The $\NP$-hardness results (\Cref{thm:NP-hard}) suggest that this difficulty is unavoidable in general: the same proofs show that the decision version---namely, deciding whether $\CalDist_{\calD}(f) \le \gamma$ holds---is $\NP$-complete.\footnote{The problem is in $\NP$ since a perfectly calibrated predictor gives an efficiently-verifiable proof.} This implies that the complement---deciding whether $\CalDist_{\calD}(f) > \gamma$ holds---is $\coNP$-complete and thus not in $\NP$ unless $\NP = \coNP$.

The same asymmetry arises in the estimation problem. The empirical calibration distance is unlikely to be over-estimating (\Cref{thm:sample-UB}), since the closest calibrated predictor gives a certificate that is preserved in the empirical distribution. In contrast, \Cref{thm:sample-LB} shows that there is no succinct ``proof'' of a lower bound on $\CalDist_{\calD}(f)$ that leads to a $\poly(1/\eps)$ sample complexity for bounding under-estimation by $\eps$.

\paragraph{An FPTAS?} An obvious open question on the computation side is whether a fully polynomial-time approximation scheme (FPTAS) exists. Recall that the current PTAS essentially solves a multiple knapsack problem with $O(1/\eps)$ knapsacks. One promising avenue is to prove a structural result that further prunes the search space for the partition that (approximately) witnesses $\CalDist_{\calD}(f)$. For example, it might be true that, once the elements are sorted in increasing order of predictions, there exists an efficient partition that does not involve overlaps between parts that are far apart. This would then allow the algorithm to track only at most $O(1)$ different knapsacks at a time, resulting in an FPTAS.

\paragraph{Finer-grained results for estimation and testing.} For the estimation problem, we leave a quadratic gap in the $|\calX|$-dependence of the sample complexity. Tighter bounds may be possible by drawing connection to related problems such as entropy estimation. By the type-sparsification approach (\Cref{sec:overview-algorithm}), the calibration distance only depends on the \emph{histogram} of the marginal distribution---namely, the number of elements with probability mass $\approx (1 + \eps)^{-k}$ for each $k = 0, 1, 2, \ldots$---which underlies the sublinear estimators for entropy and domain size~\cite{VV17}.

One may also consider the related problem of \emph{tolerant testing}, i.e., to distinguish the two cases $\CalDist_{\calD}(f) \le \eps_1$ and $\CalDist_{\calD}(f) \ge \eps_2$ for some $\eps_2 > \eps_1 > 0$. The $\Omega(\sqrt{|\calX|} / \eps)$ lower bound in \Cref{thm:sample-LB} applies to the case that $\eps_1, \eps_2 = \Theta(\eps)$. In contrast, the sample complexity significantly drops to $O(1/\eps^2)$ in the regime that $\eps_1 = \eps$ and $\eps_2 \gg \sqrt{\eps}$, where the problem can be reduced to estimating the smooth calibration error to error $O(\eps)$. Much is left to be understood regarding the landscape between these two cases.

%% file: algorithm.tex
\section{Exact Computation and Approximation}\label{sec:algorithm}
With respect to distribution $\calD \in \Delta(\calX \times \zo)$, the \emph{type} of $x \in \calX$ is the pair $(\calD_x(x), \mu_{\calD}(x))$. A distribution $\calD$ is \emph{$k$-type-sparse} if there are at most $k$ different types of elements. For instance, every uniform and noiseless instance $\calD$ is $2$-type-sparse, as the only possible types are $(1/|\calX|, 0)$ and $(1/|\calX|, 1)$.

\subsection{Algorithm for the Type-Sparse Case}
The distance from calibration can be efficiently computed on type-sparse instances.

\begin{lemma}\label{lem:sparse-case}
    Given a $k$-type-sparse distribution $\calD$ and predictor $f: \calX \to [0, 1]$, $\CalDist_{\calD}(f)$ can be computed exactly in $O\left(\poly(k) \cdot |\calX|^{2k}\right)$ time.
\end{lemma}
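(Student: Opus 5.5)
Let the $k$ types be $\tau_1, \ldots, \tau_k$, with $\tau_j = (w_j, v_j)$, and let $n_j$ be the number of elements of type $\tau_j$. Within each type, sort the elements by prediction, $f(x^{(j)}_1) \le \cdots \le f(x^{(j)}_{n_j})$.

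\textbf{Structural claim.} Some optimal partition is \emph{contiguous within each type}. Concretely, there is an ordering $\calX_1, \ldots, \calX_r$ of its parts, with $\mu_{\calD}(\calX_1) \le \cdots \le \mu_{\calD}(\calX_r)$, such that for every type $j$ the parts take consecutive blocks of $x^{(j)}_1, \ldots, x^{(j)}_{n_j}$ in order. To prove this, fix an optimal partition $\calP$ with induced predictor $g \in \Cal(\calD)$, via \Cref{fact:CalDist-vs-partition}. Suppose two elements $x, x'$ of the same type satisfy $f(x) < f(x')$ but $g(x) > g(x')$. Swap their destinations. Because $x$ and $x'$ have equal mass and equal value, each part keeps the same total mass and the same value-weighted sum, so every $\mu_{\calD}(\calX_i)$ and calibration are preserved. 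The cost does not increase, since for equal weights $|a - c| + |b - d| \le |a - d| + |b - c|$ whenever $a \le b$ and $c \le d$. Each swap strictly reduces the number of inversions, so repeating it yields a type-monotone optimal $g$. Grouping elements by value of $g$ in increasing order then gives the contiguous structure. Ties in $f$ are harmless: fix any tie-breaking order and swap freely.

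\textbf{Dynamic program.} A state is a vector $m = (m_1, \ldots, m_k)$ with $0 \le m_j \le n_j$, meaning the first $m_j$ elements of each type have been covered. There are $\prod_j (n_j + 1) \le (|\calX| + 1)^k$ states. Define $\opt(m)$ as the minimum cost of partitioning the prefix set $\bigcup_j \{x^{(j)}_1, \ldots, x^{(j)}_{m_j}\}$ into parts. Then $\opt(0) = 0$, and
\[
\opt(m) = \min_{m' \le m,\, m' \ne m} \left[\opt(m') + \cost_{\calD, f}(B(m', m))\right],
\]
where $B(m', m)$ is the block $\bigcup_j\{x^{(j)}_{m'_j + 1}, \ldots, x^{(j)}_{m_j}\}$. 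The answer is $\opt(n_1, \ldots, n_k)$. For soundness, every value the recursion computes is the cost of an actual partition, so it is at least $\CalDist_{\calD}(f)$. For completeness, the structural claim shows the contiguous optimal partition is among the candidates. We do not need to enforce that the $\mu$-values of successive blocks are increasing, because any partition is allowed.

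\textbf{Runtime.} There are at most $|\calX|^{2k}$ pairs $(m', m)$. The block mass and $\mu_{\calD}(B)$ follow in $O(k)$ time from per-type prefix counts. The cost $\sum_{x \in B} w\,|f(x) - \mu|$ is the main obstacle, since summing it directly costs $O(|\calX|)$ per pair. I would handle it type by type. For each type $j$ and each contiguous range of that type, precompute the sorted predictions and prefix sums of $f$. Given $\mu$, the cost contributed by type $j$ is
\[
w_j\left[\mu(c - a) - (F(c) - F(a)) + (F(b) - F(c)) - \mu(b - c)\right],
\]
where the range is $(a, b]$, $F$ is the prefix-sum array of $f$ for that type, and $c$ is the split index where $f$ crosses $\mu$. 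Finding $c$ by binary search costs $O(\log |\calX|)$ per type. This can be avoided with a two-pointer sweep over the states, or simply charged as a $\poly(k)$-style overhead. Either way, each transition costs $O(\poly(k))$, or $O(k \log|\calX|)$, which is absorbed since $|\calX|^{2k}$ dominates. This gives total time $O(\poly(k) \cdot |\calX|^{2k})$.
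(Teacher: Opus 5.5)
Your structural argument (swapping same-type inversions to get a type-monotone optimal $g$, hence a partition that is contiguous within each type) is exactly the paper's, and so is your dynamic program over prefix vectors with $\opt(m) = \min_{m'}[\opt(m') + \cost_{\calD,f}(B(m',m))]$. Both are correct. The gap is in the runtime accounting, which is what the bound $O(\poly(k)\cdot|\calX|^{2k})$ is about. With a binary search inside every transition you pay $O(k\log|\calX|)$ per pair $(m',m)$, so the total is $O(k\log|\calX|\cdot|\calX|^{2k})$. A $\log|\calX|$ factor that multiplies all $|\calX|^{2k}$ transitions is not absorbed by $|\calX|^{2k}$, and it is not a $\poly(k)$ overhead either. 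Your alternative, a ``two-pointer sweep over the states,'' is asserted without saying what is swept or why the pointers move monotonically. As written, neither option proves the stated bound.

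The missing observation, which the paper uses, is this: elements of one type share both mass and value, so $\mu_{\calD}(B(m',m))$ depends only on the difference $d = m - m'$. There are therefore at most $\prod_j (n_j+1) = O(|\calX|^k)$ distinct block averages. For each such $d$ and each type $j$, precompute once the split index $c_j(d)$, the largest index with $f(x^{(j)}_c) < \mu(d)$. This takes $O(k\log|\calX|)$ per $d$ by binary search, or $O(|\calX|)$ by a scan. The total is $O(k|\calX|^k\log|\calX|)$ or $O(|\calX|^{k+1})$, and both are dominated by $|\calX|^{2k}$. Each transition then clamps $c_j(m-m')$ into $[m'_j, m_j]$ and applies your prefix-sum formula in $O(k)$ time, which gives $O(\poly(k)\cdot|\calX|^{2k})$ overall. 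Your sweep idea can also be made rigorous: for fixed $m'$, increasing only $m_j$ moves $\mu$ monotonically toward the type-$j$ value, so every split pointer is monotone along such a line. That route, however, requires an explicit amortization argument that you did not supply.
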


Let $\calD$ be a $k$-type-sparse distribution. For each $i \in [k]$, let $n_i$ be the number of elements of the $i$th type, and re-label these $n_i$ elements as $x_{i, 1}, x_{i, 2}, \ldots, x_{i, n_i}$ such that
\[
    f(x_{i, 1}) \le f(x_{i, 2}) \le \cdots \le f(x_{i, n_i}).
\]
We focus on predictors that preserve the ordering of $f$ within each type.

\begin{definition}[Type-monotonicity]
    A predictor $g: \calX \to [0, 1]$ is type-monotone if $g(x_{i,j}) \le g(x_{i, j+1})$ holds for every $i \in [k]$ and $j \in \{1, 2, \ldots, n_i - 1\}$.
\end{definition}

The following lemma states that $\CalDist_{\calD}(f)$ is witnessed by a type-monotone predictor. The lemma is a simple consequence of optimal transport on the line, and its proof is deferred to \Cref{app:algorithm}.

\begin{lemma}\label{lem:type-monotonicity}
    For any distribution $\calD$ and predictor $f$, there exists a type-monotone predictor $g \in \Cal(\calD)$ such that $\CalDist_{\calD}(f) = d_{\calD}(f, g)$.
\end{lemma}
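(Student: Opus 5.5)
We start from any $g^\star \in \Cal(\calD)$ with $d_{\calD}(f, g^\star) = \CalDist_{\calD}(f)$; one exists because $\Cal(\calD)$ is finite by \Cref{fact:CalDist-vs-partition}. Within each type we then permute the values of $g^\star$ so that they become sorted in the same order as $f$. The key observation is that such a permutation preserves perfect calibration. Let $\sigma$ be a permutation of $\calX$ that maps each $x_{i,j}$ to some $x_{i,j'}$ with the same type $i$, and set $g = g^\star \circ \sigma^{-1}$. For every value $\alpha$, the level set $g^{-1}(\alpha) = \sigma(g^{\star -1}(\alpha))$ has exactly the same multiset of types as $g^{\star -1}(\alpha)$. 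Since $\mu_{\calD}$ of a set depends only on the masses and conditional probabilities of its elements, $\mu_{\calD}(g^{-1}(\alpha)) = \mu_{\calD}(g^{\star -1}(\alpha)) = \alpha$, so $g \in \Cal(\calD)$.

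Next we choose $\sigma$ separately within each type $i$. Let $g_i$ be the multiset $\{g^\star(x_{i,1}), \ldots, g^\star(x_{i,n_i})\}$ listed in nondecreasing order as $v_1 \le \cdots \le v_{n_i}$. Define $g(x_{i,j}) = v_j$. This $g$ is type-monotone by construction and is a within-type rearrangement of $g^\star$, so it is calibrated by the previous step. Since all elements of type $i$ share the same mass $w_i = \calD_x(x_{i,\cdot})$,
\[
    d_{\calD}(f, g) = \sum_{i=1}^{k} w_i \sum_{j=1}^{n_i} |f(x_{i,j}) - v_j|,
\]
and the analogous expression holds for $g^\star$ with the $v_j$ matched to the $f(x_{i,j})$ in some arbitrary order.

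It remains to apply the rearrangement inequality for the absolute-value cost. For sorted reals $a_1 \le \cdots \le a_n$ and any permutation $\pi$ of sorted $v_1\le\cdots\le v_n$, we have $\sum_j |a_j - v_j| \le \sum_j |a_j - v_{\pi(j)}|$. We prove this by an exchange argument. If $\pi$ has an inversion $j<j'$ with $v_{\pi(j)} > v_{\pi(j')}$, swapping the two values does not increase the cost, because $|a - c| + |b - d| \le |a - d| + |b - c|$ whenever $a \le b$ and $c \le d$. This four-point inequality can be checked by cases, or by viewing $|\cdot|$ as convex so that $h(b,\cdot)-h(a,\cdot)$ is nonincreasing. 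Each swap reduces the number of inversions, so finitely many swaps reach the identity. Summing over types gives $d_{\calD}(f, g) \le d_{\calD}(f, g^\star) = \CalDist_{\calD}(f)$. The reverse inequality holds because $g \in \Cal(\calD)$, so equality follows.

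The only potential subtlety is ties in $f$ within a type, which make the labelling $x_{i,1},\ldots,x_{i,n_i}$ non-unique. The argument above works for whatever fixed labelling is chosen, since the exchange inequality only requires $a_j \le a_{j'}$. The main step to verify carefully is the calibration-preservation claim, which uses exactly that same-type elements have identical $\calD_x$ and $\mu_{\calD}$.
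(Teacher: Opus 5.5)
Your proof is correct and follows essentially the same route as the paper: within-type rearrangements of an optimal $g^\star$ preserve calibration because same-type elements share $\calD_x$ and $\mu_{\calD}$, and removing inversions does not increase the $\ell_1$ distance, by the rearrangement (exchange) inequality for $|\cdot|$. The only cosmetic difference is that you sort each type's values in one step and then justify the sorted matching by an exchange argument, whereas the paper swaps adjacent inverted pairs directly on $g$ and re-checks calibration after each swap.
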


Now, we are ready to prove \Cref{lem:sparse-case}.

\begin{proof}[Proof of \Cref{lem:sparse-case}]
    By \Cref{lem:type-monotonicity}, $\CalDist_{\calD}(f)$ is witnessed by a type-monotone predictor, which induces (by \Cref{fact:CalDist-vs-partition}) a \emph{contiguous} partition $\calP = \{\calX_1, \calX_2, \ldots, \calX_l\}$ in which every part $\calX_t$ is of form
    \[
        \bigcup_{i = 1}^{k}\{x_{i, j}: s_{i, t-1} < j \le s_{i,t}\},
    \]
    where $0 = s_{i, 0} \le s_{i, 1} \le \cdots \le s_{i,l} = n_i$ for each $i \in [k]$. In words, the $n_i$ elements of type~$i$ are divided into $l$ (possibly empty) contiguous subsets, and each of the $l$ parts in $\calP$ is formed by combining the corresponding subsets of the $k$ types.
    
    Therefore, $\CalDist_{\calD}(f)$ is given by the minimum of $\cost_{\calD, f}(\calP)$ over all contiguous partitions, which can be computed via dynamic programming as follows. For integers $m_1 \in [0, n_1], m_2 \in [0, n_2], \ldots, m_k \in [0, n_k]$, let $\opt(m_1, m_2, \ldots, m_k)$ denote the minimum total cost among all contiguous partitions of $\bigcup_{i=1}^{k}\{x_{i, j}: 1 \le j \le m_i\}$. By definition, every such contiguous partition must contain a part of form
    \[
        \calX(m', m) \coloneqq \bigcup_{i=1}^{k}\{x_{i,j}: m'_i < j \le m_i\}
    \]
    for some vector $m' \ne m$ that is upper bounded by $m$ entrywise. This gives the recurrence relation
    \[
        \opt(m)
    =   \min_{m'}\left[\opt(m') + \cost_{\calD, f}(\calX(m', m))\right].
    \]

    The dynamic programming algorithm involves $O(|\calX|^k)$ values, each of which can be computed by enumerating $O(|\calX|^k)$ possible choices of $m'$. Computing $\cost_{\calD, f}(\calX(m', m))$ in a straightforward way takes $O(|\calX|)$ time, since we need to compute
    \[
        \mu(\calX(m', m)) = \frac{\sum_{i=1}^{k}\sum_{j=m'_i + 1}^{m_i}\calD_x(x_{i,j}) \cdot \mu(x_{i,j})}{\sum_{i=1}^{k}\sum_{j=m'_i + 1}^{m_i}\calD_x(x_{i,j})}
    \]
    and then
    \[
        \sum_{i=1}^{k}\sum_{j=m'_i+1}^{m_i}\calD_x(x_{i,j}) \cdot |f(x_{i,j}) - \mu(\calX(m', m))|.
    \]
    Each double summation takes $O(|\calX|)$ time to compute.

    This $O(|\calX|)$ runtime for evaluating $\cost_{\calD, f}(\calX(m', m))$ can be improved to $O(k)$ via proper pre-processing. Fix $i \in [k]$ and recall that $f(x_{i,j})$ is non-decreasing in $j$. If we know the maximum index $j_0$ such that $f(x_{i,j_0}) < \mu(\calX(m', m))$, we can compute the sum
    \[
        \sum_{j=m'_i+1}^{m_i}\calD_x(x_{i,j}) \cdot \left|f(x_{i,j}) - \mu(\calX(m', m))\right|
    \]
    from $\mu(\calX(m', m))$ and partial sums of $\calD_x(x_{i,j})$ and $\calD_x(x_{i,j}) \cdot f(x_{i,j})$ in $O(1)$ time. The entire cost can then be computed in $O(k)$ time.
    
    Since both $\calD_x(x_{i,j})$ and $\mu(x_{i,j})$ depend only on $i$ (and not $j$), we can re-write $\mu(\calX(m', m))$ as
    \[
        \frac{\sum_{i=1}^{k}(m_i - m'_i)\cdot \calD_x(x_{i,1}) \cdot \mu(x_{i,1})}{\sum_{i=1}^{k}(m_i - m'_i)\cdot \calD_x(x_{i,1})},
    \]
    which is a function of $m - m'$ and thus takes at most $O(|\calX|^k)$ different values. Therefore, we can compute the maximum index $j_0$ such that $f(x_{i,j_0}) < \mu(\calX(m', m))$ for every type $i \in [k]$ and every possible value of $\mu(\calX(m', m))$ in time $O_k(|\calX|^k \cdot |\calX|) = O_k(|\calX|^{k+1})$, which will be dominated by the $O_k(|\calX|^{2k})$ runtime bound of dynamic programming. Here, the $O_k$ notation hides a $\poly(k)$ factor that accounts for the overhead in implementing $k$ levels of nested loops and accessing entries of a $k$-dimensional array, which depends on the exact model of computation. The final runtime of the algorithm is then $O(\poly(k) \cdot |\calX|^{2k})$.
\end{proof}

\subsection{A Type-Sparsification Lemma}
Next, we show that every distribution $\calD$ can be transformed into a type-sparse one that is close in the TV distance.

\begin{lemma}\label{lem:type-sparsification}
    For any $\calD \in \Delta(\calX \times \zo)$ and $\eps > 0$, there is an $O(\log(|\calX| / \eps) / \eps^2)$-type-sparse distribution $\calD'$ such that $\dTV(\calD, \calD') \le \eps$. The sparsity is improved to $O(1/\eps)$ in the uniform case.
\end{lemma}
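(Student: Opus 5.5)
The construction is explicit. We modify the two ingredients of a type, the marginal mass $\calD_x(x)$ and the conditional probability $\mu_{\calD}(x)$, separately, and bound the TV distance of each modification. A convenient way to do this is to write $\calD(x,1) = \calD_x(x)\mu(x)$ and $\calD(x,0) = \calD_x(x)(1-\mu(x))$. Then changing $\mu$ at $x$ by $\eta$ while keeping $\calD_x(x)$ fixed moves exactly $\calD_x(x)\cdot\eta$ mass from one label to the other, so it contributes $\calD_x(x)\cdot\eta$ to the TV distance. Summed over $x$, rounding every $\mu(x)$ to the nearest integer multiple of $\eps$ costs at most $\eps/2$ in TV, and afterwards at most $1/\eps + 1$ distinct conditional probabilities remain. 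In the uniform case $\calD_x$ is untouched, so the number of types is $O(1/\eps)$ and we are done. We will run this step with $\eps/2$ so that it leaves room for the next one.

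In the general case we also discretize the marginal. Call $x$ light if $\calD_x(x) < \eps/(4|\calX|)$. The light elements carry total mass at most $\eps/4$. For every heavy $x$, round $\calD_x(x)$ down to the nearest power $(1+\eps/4)^{-j}$. The rounding loses at most a $\eps/4$-fraction of each mass, so the total mass loss is also at most $\eps/4$. The resulting heavy masses lie in $[\eps/(4|\calX|)\cdot(1+\eps)^{-1}, 1]$, so they take at most $O(\log(|\calX|/\eps)/\eps)$ distinct values.

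Two issues remain: the total mass must be restored to $1$, and the light elements must be handled. First assign every light element marginal mass $0$, with $\mu$ arbitrary, say $0$. Their type is then $(0,0)$, which adds a single type. (If one dislikes mass-$0$ elements, keep them at the smallest grid value instead.) Then put the lost mass, at most $\eps/2$, onto a single designated element, or better, spread it over one fixed heavy element. To keep sparsity, we could instead add it to one element, which then has its own unique type and adds only one type. The TV cost of all these changes is at most the total mass moved, $O(\eps)$. Throughout, the conditional probabilities of the rounded elements are kept fixed at their rounded values, so the label split scales with the mass and the TV accounting is just the change in marginal mass. Combining with the $\mu$-rounding gives $O(\log(|\calX|/\eps)/\eps)\cdot O(1/\eps) + O(1)$ types and TV distance $O(\eps)$. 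Rescaling $\eps$ by a constant gives the statement.

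The only delicate point is the renormalization bookkeeping: the total moved mass has to be charged correctly, because the rounded masses no longer sum to $1$, and the renormalization must not blow up the number of distinct masses. Dumping the deficit onto a single element, rather than rescaling all masses multiplicatively, avoids both problems. A multiplicative rescaling would also work, since it preserves the number of distinct values, but it requires checking that rescaling by a factor $1\pm O(\eps)$ costs only $O(\eps)$ in TV. Either approach gives an elementary argument.
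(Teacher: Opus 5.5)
Your proof is correct and follows essentially the same construction as the paper: round the conditional probabilities to an $\eps$-grid, zero out marginal masses below $\Theta(\eps/|\calX|)$, and round the remaining masses down to powers of $1+\Theta(\eps)$. As in the paper, the TV cost of the marginal step is just the change in marginal mass because the conditionals are held fixed. The only difference is renormalization: the paper uses the multiplicative rescaling $a(x)/S$ you mention as an alternative, which keeps the number of distinct masses unchanged and costs $1-S \le 2\delta$ in $\ell_1$. Your dumping of the deficit on one element instead costs one extra type and the same $O(\eps)$ TV.
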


\Cref{lem:sparse-case,lem:type-sparsification} together imply part of \Cref{thm:runtime}, with a slower runtime for the general case.
\begin{corollary}\label{cor:runtime-from-type-sparsity}
    Given $\calD$ and $f$, $\CalDist_{\calD}(f)$ can be computed:
    \begin{itemize}
        \item Exactly and in $O(|\calX|^4)$ time if $\calD$ is both uniform and noiseless.
        \item Up to $\eps$ error and in $|\calX|^{O(1/\eps)}$ time in the uniform case.
        \item Up to $\eps$ error and in $|\calX|^{O(\log(|\calX|/\eps)/\eps^2)}$ time in the general case.
    \end{itemize}
\end{corollary}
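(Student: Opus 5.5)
The corollary follows by combining \Cref{lem:sparse-case} with \Cref{lem:type-sparsification}, and then using \Cref{lem:continuity-in-D} to transfer the error. I will handle the three bullets in order.

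\textbf{Uniform and noiseless case.} Here $\calD$ is $2$-type-sparse, since the only possible types are $(1/|\calX|, 0)$ and $(1/|\calX|, 1)$. Applying \Cref{lem:sparse-case} with $k = 2$ computes $\CalDist_{\calD}(f)$ exactly in $O(\poly(2)\cdot|\calX|^{4}) = O(|\calX|^4)$ time. No sparsification is needed.

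\textbf{Approximate cases.} Given $\eps$, I apply \Cref{lem:type-sparsification} with parameter $\eps/5$ to obtain a distribution $\calD'$ with $\dTV(\calD,\calD') \le \eps/5$. By \Cref{lem:continuity-in-D},
\[
\bigl|\CalDist_{\calD'}(f) - \CalDist_{\calD}(f)\bigr| \le 5\cdot \eps/5 = \eps,
\]
so outputting $\CalDist_{\calD'}(f)$, computed exactly by \Cref{lem:sparse-case}, meets the error guarantee. The type-sparsity is
\[
k = O(1/\eps) \text{ in the uniform case}, \qquad k = O\bigl(\log(|\calX|/\eps)/\eps^2\bigr) \text{ in general}.
\]
Then \Cref{lem:sparse-case} runs in $\poly(k)\cdot|\calX|^{2k}$ time. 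Since $\poly(k) \le |\calX|^{O(k)}$ whenever $|\calX| \ge 2$, this is $|\calX|^{O(1/\eps)}$ and $|\calX|^{O(\log(|\calX|/\eps)/\eps^2)}$ respectively. Two degenerate cases need care:
\begin{itemize}
\item If $|\calX| = 1$, the answer is trivial.
\item If $\eps$ is so small that $k$ exceeds $|\calX|$, we can run the exact algorithm on $\calD$ itself. It has at most $|\calX|$ types, so its runtime is $|\calX|^{O(|\calX|)} \le |\calX|^{O(k)}$.
\end{itemize}

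\textbf{Remaining checks.} Two points need verifying rather than citing. First, the sparsified $\calD'$ must be computable from $\calD$ in time dominated by the bound above. I expect this to hold because the sparsification (rounding conditional probabilities, dropping tiny masses, rounding to powers of $1+O(\eps)$, renormalizing) is a linear-time transformation. Second, in the uniform case, $\calD'$ must remain a valid input to \Cref{lem:sparse-case}, and elements whose mass is dropped to zero must be harmless. Zero-mass elements contribute nothing to any cost. They can be deleted, or treated as a separate type, which adds at most one to $k$.
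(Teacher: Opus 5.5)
Your proposal is correct and matches the paper's argument. The paper gives no separate proof and obtains the corollary exactly as you do: it combines \Cref{lem:sparse-case} with \Cref{lem:type-sparsification} and transfers the error via \Cref{lem:continuity-in-D}. Your degenerate-case discussion is harmless but unnecessary: the number of types of $\calD'$ can never exceed $|\calX|$, and $\poly(k) \le 2^{O(k)} \le |\calX|^{O(k)}$ holds regardless.
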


\begin{proof}[Proof of \Cref{lem:type-sparsification}]
    We first prove the uniform case by rounding the conditional probabilities. We then handle the more general case with an additional rounding of the marginal probabilities.

    \paragraph{The uniform case.} Let $\calD'$ be the distribution with the same $\calX$-marginal as $\calD$ and
    \[
        \mu_{\calD'}(x) = \left\lfloor\mu_{\calD}(x) / \eps\right\rfloor \cdot \eps
    \]
    for each $x \in \calX$. Then, $\calD'$ is $O(1/\eps)$-type-sparse and
    \[
        \dTV(\calD, \calD')
    =   \sum_{x \in \calX}\calD_x(x) \cdot |\mu_{\calD}(x) - \mu_{\calD'}(x)|
    \le \sum_{x \in \calX}\calD_x(x) \cdot \eps
    =   \eps.
    \]

    \paragraph{The general case.} The same construction allows us to transform a general $\calD$ into $\calD'$ such that $\calD_x = \calD'_x$, $\dTV(\calD, \calD') \le \eps / 2$, and $|\{\mu_{\calD'}(x): x \in \calX\}| = O(1/\eps)$.

    Next, we transform $\calD'$ into another distribution $\calD''$ by rounding the marginal probabilities while keeping the conditional ones. Let $\delta \in (0, 1/2)$ be a parameter to be chosen later. For each $x \in \calX$, define
    \[
        a(x) \coloneqq \begin{cases}
            0, & \calD'_x(x) \le \delta / |\calX|,\\
            (1 + \delta)^{\lfloor \log_{1+\delta}\calD'_x(x)\rfloor}, & \calD'_x(x) > \delta / |\calX|.
        \end{cases}
    \]
    Note that $a(x) \le \calD'_x(x)$ holds for all $x \in \calX$. In the former case, $|a(x) - \calD'_x(x)| \le \delta / |\calX|$. In the latter case, we have $a(x) \ge \calD'_x(x) / (1 + \delta)$, which implies
    \[
        \left|a(x) - \calD'_x(x)\right| \le \calD'_x(x) \cdot \left(1 - \frac{1}{1 + \delta}\right)
    \le \delta \cdot \calD'_x(x).
    \]
    Therefore,
    \begin{equation}\label{eq:a-vs-Dp}
        \sum_{x \in \calX}|a(x) - \calD'_x(x)|
    \le \sum_{x \in \calX}\frac{\delta}{|\calX|} + \delta \cdot \sum_{x \in \calX}\calD'_x(x)
    =   2\delta.
    \end{equation}

    Let $S \coloneqq \sum_{x \in \calX}a(x)$. We have $S \ge \sum_{x \in \calX}\calD'_x(x) - 2\delta = 1 - 2\delta$. Let $\calD''$ be the distribution with $\calD''_x(x) = a(x) / S$ and $\mu_{\calD''}(x) = \mu_{\calD'}(x)$ for every $x \in \calX$. Since $S \in [1 - 2\delta, 1]$, we have
    \begin{equation}\label{eq:a-vs-Dpp}
        \sum_{x \in \calX}|a(x) - \calD''_x(x)|
    =   \sum_{x \in \calX}\calD''_x(x) \cdot |S - 1|
    =   1 - S
    \le 2\delta.
    \end{equation}

    Since $\calD'$ and $\calD''$ have the same conditional probabilities,
    \begin{align*}
        \dTV(\calD', \calD'')
    &=   \dTV(\calD'_x, \calD''_x)
    =   \frac{1}{2}\sum_{x \in \calX}|\calD'_x(x) - \calD''_x(x)|\\
    &\le\frac{1}{2}\left(\sum_{x \in \calX}|a(x) - \calD'_x(x)| + \sum_{x \in \calX}|a(x) - \calD''_x(x)|\right)\\
    &\le \frac{1}{2}\cdot (2\delta + 2\delta)
    =   2\delta. \tag{\Cref{eq:a-vs-Dp,eq:a-vs-Dpp}}
    \end{align*}
    Setting $\delta = \eps / 4$ leads to $\dTV(\calD', \calD'') \le \eps / 2$ and
    \[
        \dTV(\calD, \calD'')
    \le \dTV(\calD, \calD') + \dTV(\calD', \calD'')
    \le \eps.
    \]

    It remains to bound the type-sparsity of $\calD''$. Recall that $\{\mu_{\calD''}(x): x \in \calX\}$ is of size $O(1/\eps)$. Moreover, $\{\calD''_x(x): x \in \calX\}$ is a subset of
    \[
        \{0\} \cup \{(1 + \delta)^k: k \in \{{\lfloor \log_{1+\delta}(\delta / |\calX|)\rfloor}, \ldots, -1, 0\}\},
    \]
    which is of size $O(\log_{1+\delta}(|\calX| / \delta)) = O(\log(|\calX| / \eps) / \eps)$. Thus, $\calD''$ is $O(\log(|\calX| / \eps) / \eps^2)$-type-sparse.
\end{proof}

\subsection{A PTAS via Prediction-Sparsity}
In this section, we improve the QPTAS for the general case in \Cref{cor:runtime-from-type-sparsity} to a PTAS by using a slightly different approach. Instead of grouping the domain elements into a few types, we restrict the ``target'' predictor in the definition of the calibration distance. Specifically, it suffices to consider $g \in \Cal(\calD)$ that maps to at most $O(1/\eps)$ different values.

\begin{lemma}\label{lem:prediction-sparsification}
    For every distribution $\calD$, predictor $f$, and integer $k \ge 1$, there exists $g \in \Cal(\calD)$ such that: (1) $d_{\calD}(f, g) \le \CalDist_{\calD}(f) + 1/k$; (2) $g$ predicts at most one value in each of $\calI_1, \calI_2, \ldots, \calI_k$, where $\calI_i = [(i - 1) / k, i / k)$ for $i \in [k - 1]$ and $\calI_k = [(k-1)/k, 1]$.
\end{lemma}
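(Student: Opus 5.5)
Start from an optimal calibrated predictor $g^\star \in \Cal(\calD)$ with $d_{\calD}(f, g^\star) = \CalDist_{\calD}(f)$; it exists because $\Cal(\calD)$ is finite (\Cref{fact:CalDist-vs-partition}). Take its induced partition $\{(g^\star)^{-1}(\alpha)\}$ and coarsen it by grouping according to the intervals $\calI_1, \ldots, \calI_k$. For each $i \in [k]$, let $\calX_i \coloneqq \{x \in \calX : g^\star(x) \in \calI_i\}$. The sets $\calX_i$ partition $\calX$ because the intervals partition $[0,1]$. Define $g(x) \coloneqq \mu_{\calD}(\calX_i)$ for $x \in \calX_i$. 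By \Cref{fact:CalDist-vs-partition}, $g \in \Cal(\calD)$.

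The key step is to show that $\mu_{\calD}(\calX_i)$ lies in the closure of $\calI_i$, i.e., in $[(i-1)/k, i/k]$. This holds because $\calX_i$ is a disjoint union of level sets $(g^\star)^{-1}(\alpha)$ with $\alpha \in \calI_i$. Since $g^\star$ is perfectly calibrated, each such level set has $\mu_{\calD}((g^\star)^{-1}(\alpha)) = \alpha$. Hence $\mu_{\calD}(\calX_i)$ is a mass-weighted average of values $\alpha \in \calI_i$ and falls in $[(i-1)/k, i/k]$. Consequently $|g(x) - g^\star(x)| \le 1/k$ for every $x$. The triangle inequality then gives
\[
d_{\calD}(f, g) \le d_{\calD}(f, g^\star) + d_{\calD}(g^\star, g) \le \CalDist_{\calD}(f) + 1/k,
\]
which is property (1).

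The one subtlety, and the main obstacle, is property (2). The value $\mu_{\calD}(\calX_i)$ may equal the right endpoint $i/k$, which belongs to $\calI_{i+1}$ rather than $\calI_i$ when $i < k$. In that case two groups could both predict values in $\calI_{i+1}$. This can happen only when every level set in $\calX_i$ averages exactly $i/k$, which is impossible because those $\alpha$ are strictly below $i/k$. More precisely, if $\calD_x(\calX_i) > 0$, the average of values in $[(i-1)/k, i/k)$ with positive weights is strictly less than $i/k$, so $\mu_{\calD}(\calX_i) \in \calI_i$. Groups of zero mass (including empty ones) can be merged into any other group without affecting calibration or cost, or simply dropped. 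Hence each $\calI_i$ contains at most one predicted value of $g$, namely $\mu_{\calD}(\calX_i)$. For $i = k$ the interval is closed, so no issue arises. This completes the plan.
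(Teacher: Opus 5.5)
Your proof is correct and follows essentially the same route as the paper: group the level sets of an optimal calibrated predictor by the intervals $\calI_i$, re-average each group, and use that each new value is a convex combination of values in $\calI_i$, so predictions move by at most $1/k$. Your handling of the right endpoint and of zero-mass groups is sound. The paper covers the endpoint in one line: since $\calI_i$ is convex, a positive-weight average of points in $\calI_i$ stays in $\calI_i$.
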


\begin{proof}[Proof of \Cref{lem:prediction-sparsification}]
    By definition, there exists $\tilde f \in \Cal(\calD)$ such that $d_{\calD}(f, \tilde f) = \CalDist_{\calD}(f)$. For each $i \in [k]$, define
    \[
        \calX_i \coloneqq \tilde f^{-1}(\calI_i) = \{x \in \calX: \tilde f(x) \in \calI_i\}.
    \]
    Clearly, $\{\calX_1, \calX_2, \ldots, \calX_k\}$ is a partition of $\calX$ and induces a perfectly calibrated predictor
    \[
        g(x) \coloneqq \mu(\calX_i), ~\forall i \in [k], x \in \calX_i.
    \]
    Note that each $\mu(\calX_i)$ is a convex combination of values in the length-$(1/k)$ interval $\calI_i$, so we have $d_{\calD}(\tilde f, g) \le \|\tilde f - g\|_{\infty} \le 1 / k$. It follows that
    \[
        d_{\calD}(f, g)
    \le d_{\calD}(f, \tilde f) + d_{\calD}(\tilde f, g)
    \le \CalDist_{\calD}(f) + 1/k.
    \]
\end{proof}

Another ingredient of the algorithm is to round the probabilities to integer multiples of $\Theta(\eps / |\calX|)$. We say that a distribution is $M$-discrete, if the probability mass on each element is an integer multiple of $1/M$.

\begin{lemma}\label{lem:discretization}
    For every distribution $\calD \in \Delta(\calX \times \zo)$ and $\eps > 0$, there exists a distribution $\calD'$ such that $\dTV(\calD, \calD') \le \eps$ and $\calD'$ is $M$-discrete for $M = O(|\calX| / \eps)$.
\end{lemma}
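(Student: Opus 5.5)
We will prove \Cref{lem:discretization} by rounding each of the $2|\calX|$ probability masses $\calD(x, y)$, for $(x, y) \in \calX \times \zo$, down to a multiple of $1/M$. Any leftover mass is then dumped onto a single atom.

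Concretely, let $N \coloneqq 2|\calX|$ be the number of atoms and set $M \coloneqq \lceil N/\eps \rceil = O(|\calX|/\eps)$. For every atom $(x, y) \ne (x_0, y_0)$, where $(x_0, y_0)$ is an arbitrary fixed atom, define
\[
    \calD'(x, y) \coloneqq \lfloor M \cdot \calD(x, y) \rfloor / M .
\]
Set $\calD'(x_0, y_0) \coloneqq 1 - \sum_{(x,y) \ne (x_0,y_0)} \calD'(x, y)$. Every rounded value is a nonnegative multiple of $1/M$. The leftover is also a multiple of $1/M$, since $1 = M/M$, and it is at least $\calD(x_0, y_0) \ge 0$ because rounding down only decreases the other entries. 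Hence $\calD'$ is a valid $M$-discrete distribution.

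For the TV bound, let $r \coloneqq \sum_{(x,y)\ne(x_0,y_0)}\bigl(\calD(x,y) - \calD'(x,y)\bigr)$. Each of the $N - 1$ rounded atoms loses at most $1/M$, so $0 \le r \le (N-1)/M$. The special atom gains exactly $r$. Therefore
\[
    \dTV(\calD, \calD') = \tfrac12 (r + r) = r \le (N-1)/M < \eps .
\]

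There is no real obstacle. The only point needing care is that the leftover atom stays a valid (nonnegative, multiple of $1/M$) probability, which holds because we round down everywhere else. Note also that the discretization is of the joint masses on $\calX \times \zo$, which is the notion of $M$-discreteness needed later. It simultaneously makes each $\calD_x(x)$ and each $\calD_x(x)\mu_{\calD}(x)$ a multiple of $1/M$, so the $0$-valued and $1$-valued masses tracked by the PTAS take values in a grid of size $M + 1$.
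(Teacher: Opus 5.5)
Your proof is correct and takes essentially the same approach as the paper: round every joint mass $\calD(x, y)$ down to a grid of spacing $O(\eps/|\calX|)$, then charge the at most $2|\calX|$ rounding losses to the TV distance. The only difference is how total mass $1$ is restored: you put the residual $r$ on one fixed atom, which keeps the denominator $M = \lceil 2|\calX|/\eps \rceil$ fixed, whereas the paper renormalizes by the total rounded mass, so its $\calD'$ is discrete with respect to the data-dependent denominator $\sum_{(x,y)}\lfloor \calD(x,y)\cdot N\rfloor$ (at most the paper's grid size $N = O(|\calX|/\eps)$, not your $N = 2|\calX|$), and both routes give the same TV bound up to constants.
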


\begin{proof}[Proof of \Cref{lem:discretization}]
    Let $N$ be a positive integer to be chosen later. For each $(x, y) \in \calX \times \zo$, let
    \[
        a(x, y) \coloneqq \frac{\lfloor \calD(x, y) \cdot N\rfloor}{N} \in [\calD(x, y) - 1/N, \calD(x, y)].
    \]
    We have
    \[
        \sum_{(x, y) \in \calX \times \zo}\left|a(x, y) - \calD(x, y)\right|
    \le \frac{2|\calX|}{N}.
    \]
    
    Let $S \coloneqq \sum_{(x,y) \in \calX \times \zo}a(x, y)$ and note that $S \in \left[1 - \frac{2|\calX|}{N}, 1\right]$. Consider the distribution $\calD'$ defined as $\calD'(x, y) = a(x, y) / S$. We have
    \[
        \sum_{(x, y) \in \calX \times \zo}\left|a(x, y) - \calD'(x, y)\right|
    =   \sum_{(x, y) \in \calX \times \zo}\calD'(x, y) \cdot |1 - S|
    \le \frac{2|\calX|}{N}.
    \]
    It follows that
    \[
        \dTV(\calD, \calD')
    =   \frac{1}{2}\sum_{(x, y) \in \calX \times \zo}\left|\calD(x, y) - \calD'(x, y)\right|
    \le \frac{2|\calX|}{N}.
    \]

    Therefore, there exists $N = O(|\calX| / \eps)$ such that $\dTV(\calD, \calD') \le \eps$. Note that $\calD'$ can be equivalently written as
    \[
        \calD'(x, y)
    =   \frac{\lfloor \calD(x, y) \cdot N\rfloor}{\sum_{(x', y') \in \calX \times \zo}\lfloor \calD(x', y') \cdot N\rfloor},
    \]
    and is thus $M$-discrete for $M = \sum_{(x, y) \in \calX \times \zo}\lfloor \calD(x, y) \cdot N\rfloor \le N = O(|\calX| / \eps)$.
\end{proof}

Now, we give the PTAS for the general case in \Cref{thm:runtime}.

\begin{proof}[Proof of \Cref{thm:runtime}]
    The $O(|\calX|^4)$-time exact algorithm for the uniform and noiseless case and the PTAS for the uniform case follow from \Cref{cor:runtime-from-type-sparsity}. For the general case, we first apply \Cref{lem:discretization} to obtain an $M$-discrete distribution $\calD'$ for some $M = O(|\calX| / \eps)$ such that $\dTV(\calD, \calD') \le \eps / 15$. By \Cref{lem:continuity-in-D}, $|\CalDist_{\calD}(f) - \CalDist_{\calD'}(f)| \le \eps / 3$, so it suffices to approximate $\CalDist_{\calD'}(f)$ up to an error of  $2\eps / 3$.

    \paragraph{Special partitions.} By \Cref{lem:prediction-sparsification}, for some $k = O(1/\eps)$, there exists a predictor $g \in \Cal(\calD')$ that predicts at most one value from each of $\calI_1, \calI_2, \ldots, \calI_k$ and satisfies $d_{\calD'}(f, g) \le \CalDist_{\calD'}(f) + \eps / 3$.
    
    We define a sub-class of partitions of $\calX$ that induce predictors similar to $g$. A partition $\calP$ of $\calX$ is \emph{$k$-special} if $\calP$ can be written as $\{\calX_1, \calX_2, \ldots, \calX_k\}$---where we allow some parts to be empty---such that we have either $\calX_i = \emptyset$ or $\mu_{\calD'}(\calX_i) \in \calI_i$ for each $i \in [k]$.
    
    Clearly, every $k$-special partition $\calP$ satisfies $\cost_{\calD', f}(\calP) \ge \CalDist_{\calD'}(f)$. Moreover, there exists a $k$-special partition $\calP$ (specifically, the partition induced by $g$) with $\cost_{\calD', f}(\calP) \le \CalDist_{\calD'}(f) + \eps / 3$.

    \paragraph{Cost upper bound.} Next, we introduce an upper bound for the partition cost that is easier to track in dynamic programming. For each $k$-special partition $\calP = \{\calX_1, \calX_2, \ldots, \calX_k\}$, we define
    \[
        \uppercost_{\calD', f}(\calX_i)
    \coloneqq \sum_{x \in \calX_i}\calD'_x(x) \cdot \left[\left|f(x) - \frac{i - 1/2}{k}\right| + \frac{1}{2k}\right]
    \]
    and $\uppercost_{\calD', f}(\calP) = \sum_{i=1}^{k}\uppercost_{\calD', f}(\calX_i)$.

    Since a $k$-special partition satisfies $\mu_{\calD'}(\calX_i) \in \calI_i \subseteq [(i-1)/k, i/k]$, it holds for all $x \in \calX_i$ that
    \[
        |f(x) - \mu_{\calD'}(\calX_i)| \le \left|f(x) - \frac{i - 1/2}{k}\right| + \frac{1}{2k} \le |f(x) - \mu_{\calD'}(\calX_i)| + \frac{1}{k}.
    \]
    Summing over all $i \in [k]$ and $x \in \calX_i$ gives
    \[
        \cost_{\calD', f}(\calP) \le \uppercost_{\calD', f}(\calP) \le \cost_{\calD', f}(\calP) + \frac{1}{k}.
    \]
    It follows that every $k$-special partition $\calP$ satisfies $\uppercost_{\calD', f}(\calP) \ge \cost_{\calD', f}(\calP) \ge \CalDist_{\calD'}(f)$. Moreover, the $k$-special partition $\calP$ induced by $g$ gives $\uppercost_{\calD', f}(\calP) \le \cost_{\calD', f}(\calP) + 1/k \le \CalDist_{\calD'}(f) + 2\eps / 3$, where the last step holds for a sufficiently large $k = O(1/\eps)$.

    \paragraph{Minimize $\uppercost$ via dynamic programming.} It remains to compute the minimum of $\cost_{\calD', f}(\calP)$ over all $k$-special partitions. This can be done in $O(|\calX| \cdot (M + 1)^k \cdot k) = (|\calX| / \eps)^{O(1/\eps)}$ time via dynamic programming.

    Write $\calX = \{x_1, x_2, \ldots, x_{|\calX|}\}$. For every $i \in [|\calX|]$ and $2k$-tuple
    \[
        p = (p_{1,0}, p_{1,1}, p_{2,0}, p_{2,1}, \ldots, p_{k,0}, p_{k,1}) \in \{0, 1/M, 2/M, \ldots, 1-1/M, 1\}^{2k},
    \]
    let $\opt(i, p)$ denote the minimum value of $\uppercost_{\calD', f}(\calP)$ over all partitions $\calP = \{\calX_1, \calX_2, \ldots, \calX_k\}$ of $\{x_1, x_2, \ldots, x_i\}$ subject to
    \[
        p_{j,b} = \sum_{x \in \calX_j}\calD'(x, b)
    \]
    for every $j \in [k]$ and $b \in \zo$. In other words, $i$ tracks the number of elements that have been assigned to the $k$ parts, while $p$ tracks and total $0$- and $1$-labeled probability masses in each part.

    We have the recurrence relation
    \[
        \opt(i, p)
    =   \min_{j \in [k]}\left[\opt\left(i - 1, p^{(j)}\right) + \calD'_x(x_i) \cdot \left[\left|f(x_i) - \frac{j - 1/2}{k}\right| + \frac{1}{2k}\right]\right],
    \]
    where each $p^{(j)}$ is obtained from $p$ by subtracting $\calD'(x_i, 0)$ from $p_{j, 0}$ and subtracting $\calD'(x_i, 1)$ from $p_{j, 1}$. Moreover, the minimum of $\uppercost_{\calD', f}(\calP)$ over all $k$-special partitions is the minimum of $\opt(|\calX|, p)$ over all $2k$-tuples that satisfy either $p_{i,0} = p_{i,1} = 0$ or $\frac{p_{i, 1}}{p_{i, 0} + p_{i, 1}} \in \calI_i$ for every $i \in [k]$.

    Since there are at most $|\calX| \cdot (M + 1)^k$ values to be computed, while computing each value takes $O(k)$ time, the dynamic programming algorithm can be implemented in $(|\calX| / \eps)^{O(1/\eps)}$ time.
\end{proof}

%% file: hardness.tex
\section{Hardness of Exact Computation}\label{sec:hardness}
We prove \Cref{thm:NP-hard} by reducing from the Subset Sum Problem (SSP) and its variants.

\begin{definition}[Subset Sum Problem]\label{def:SSP}
    Given $n$ positive integers $a_1, a_2, \ldots, a_n$ and $\theta \le \frac{1}{2}\sum_{i=1}^{n}a_i$, output ``Yes'' if there exists a set $\calI \subseteq [n]$ such that $\sum_{i \in \calI}a_i = \theta$ and output ``No'' otherwise.
\end{definition}

Let $S \coloneqq \sum_{i=1}^{n}a_i$. The restriction $\theta \le S / 2$ is without loss of generality: If $\theta > S$, the correct answer is clearly ``No''. If $\theta \in (S / 2, S]$, replacing it with $S - \theta$ does not change the answer.

\subsection{The Noiseless Case}
Given an SSP instance $(a_1, a_2, \ldots, a_n, \theta)$, let
\[
    \pstar \coloneqq \frac{1}{4} \cdot \frac{\theta}{S}
\quad \text{and} \quad
    \alpha \coloneqq \frac{\pstar}{1 - 2\pstar}.
\]
Since $0 < \theta \le S / 2$, we have $\pstar \in (0, 1/8]$ and $\alpha \in (0, 1/6]$.

We use the SSP instance to construct an instance of computing the calibration distance on a noiseless (but non-uniform) distribution.

\begin{definition}\label{def:reduction-noiseless}
    Given $(a_1, a_2, \ldots, a_n, \theta)$, we construct a distribution $\calD \in \Delta(\calX \times \zo)$ and predictor $f: \calX \to [0, 1]$ as follows: The domain $\calX \coloneqq \{x_0, x_1, x_{1+}\} \cup X_{0^+}$ contains the following four groups of elements:
    \begin{itemize}
        \item Group~$0$: Element $x_0$ with $f(x_0) = 1/2$, $\calD_x(x_0) = 1/4 - \pstar/2$, and $\mu(x_0) = 0$.
        \item Group~$1$: Element $x_1$ with $f(x_1) = 1/2$, $\calD_x(x_1) = 1/4 + \pstar/2$, and $\mu(x_1) = 1$.
        \item Group~$0^{+}$: A set of $n$ elements $X_{0^+} \coloneqq \{x_{0^+, 1}, x_{0^+, 2}, \ldots, x_{0^+, n}\}$ with $f(x_{0^+, i}) = 1/2 + \alpha$, $\calD_x(x_{0^+, i}) = \frac{1}{4} \cdot \frac{a_i}{S}$, and $\mu(x_{0^+, i}) = 0$ for every $i \in [n]$.
        \item Group~$1^{+}$: Element $x_{1^+}$ with $f(x_{1^+}) = 1/2 + \alpha$, $\calD_x(x_{1^+}) = 1/4$, and $\mu(x_{1^+}) = 1$.
    \end{itemize}
\end{definition}

The bit in the group indicates whether the (noiseless) label is $0$ or $1$ given the element, while the presence of ``$+$'' indicates that the prediction made by $f$ is higher than $1/2$ by $\alpha$.

\begin{proof}[Proof of \Cref{thm:NP-hard}, the first part]
Consider an SSP instance $(a_1, a_2, \ldots, a_n, \theta)$ and the instance $(\calD, f)$ resulting from \Cref{def:reduction-noiseless}. We will show that $\CalDist_{\calD}(f) \le \alpha \pstar$ if and only if the SSP instance is a ``Yes'' instance. As a result, SSP is polynomial-time reducible to the exact computation of the distance from calibration in the noiseless case.

\paragraph{Analysis of ``Yes'' instances.} Suppose that there exists $\calI \subseteq [n]$ such that $\sum_{i \in \calI}a_i = \theta$. Then, $\{x_{0^+, i}: i \in \calI\}$ is a subset of $X_{0^+}$ with a total probability mass of
\[
    \calD_x(\{x_{0^+, i}: i \in \calI\})
=   \sum_{i \in \calI}\frac{1}{4} \cdot \frac{a_i}{S} = \frac{1}{4} \cdot \frac{\theta}{S} = \pstar.
\]
Then, the alternative predictor
\[
    g(x) = \begin{cases}
        1/2, & x \in \{x_0, x_1\} \cup \{x_{0^+, i}: i \in \calI\},\\
        1/2 + \alpha, & x \in \{x_{1^+}\} \cup \{x_{0^+, i}: i \in \overline{\calI}\}
    \end{cases}
\]
is perfectly calibrated and satisfies $d_{\calD}(f, g) = \alpha \pstar$. Thus, $\CalDist_{\calD}(f) \le \alpha \pstar$.

\paragraph{Analysis of ``No'' instances.} It remains to show that every ``No'' instance leads to $\CalDist_{\calD}(f) > \alpha \pstar$. To this end, we suppose towards a contradiction that some $g \in \Cal(\calD)$ satisfies $d_{\calD}(f, g) \le \alpha \pstar$. By \Cref{fact:CalDist-vs-partition}, $g$ naturally induces a partition of $\calX$.

We first note that $x_0$ and $x_{1^+}$ cannot be in the same part. If $g$ maps $x_0$ and $x_{1^+}$ to the same value (denoted by $q$), recalling $\calD_x(x_0) = 1/4 - \pstar/2$, $\calD_x(x_{1^+}) = 1/4$, $f(x_0) = 1/2$, and $f(x_{1^+}) = 1/2 + \alpha$, we have
\begin{align*}
    d_{\calD}(f, g)
&\ge\calD_x(x_0) \cdot |f(x_0) - g(x_0)| + \calD_x(x_{1^+}) \cdot |f(x_{1^+}) - g(x_{1^+})|\\
&\ge(1/4 - \pstar/2) \cdot \left(|1/2 - q| + |1/2 + \alpha - q|\right)\\
&\ge(1/4 - \pstar/2) \cdot \alpha, \tag{triangle inequality}
\end{align*}
while the last line is strictly higher than $\alpha \pstar$ since $\alpha > 0$ and $\pstar \in (0, 1/8]$.

Next, we observe that $x_0$ and $x_1$ must be in the same part. Otherwise, $x_0$ would be in the same part with a subset of $X_{0^+}$ (and no other elements). This forces $g(x_0) = 0$ and thus 
\[
    d_{\calD}(f, g)
\ge \calD_x(x_0) \cdot |f(x_0) - g(x_0)|
=   (1/4 - \pstar/2) \cdot (1/2)
>   \frac{\pstar}{1 - 2\pstar} \cdot \pstar
=   \alpha \pstar,
\]
where the third step holds for all $\pstar \in (0, 1/8]$.

Therefore, the partition induced by $g$ must be of the following form:
\[
    \{\{x_0, x_1\} \cup X_1, \{x_{1^+}\} \cup X_2, X_3\},
\]
where $\{X_1, X_2, X_3\}$ is a partition of $X_{0^+}$. For each $i \in \{1, 2, 3\}$, let $p_i \coloneqq \calD_x(X_i)$ be the total probability mass of $X_i$ in the $\calX$-marginal. Note that $p_1 + p_2 + p_3 = 1/4$. Also, we must have $p_1 \ne \pstar$; otherwise, the elements in $X_1$ would correspond to a solution to the SSP instance.

We express the distance $d_{\calD}(f, g)$ in terms of $p_1$, $p_2$, and $p_3$:
\begin{itemize}
    \item The contribution from $\{x_0, x_1\} \cup X_1$ is
    \begin{align*}
        C_1
    &\coloneqq \frac{1}{2}\cdot\left|\frac{1}{2} - \frac{1/4 + \pstar / 2}{1/2 + p_1}\right| + p_1\cdot\left|\frac{1}{2} + \alpha - \frac{1/4 + \pstar / 2}{1/2 + p_1}\right|\\
    &=  \frac{|p_1 - \pstar|}{2 + 4p_1} + p_1\cdot\left|\frac{1}{2} + \alpha - \frac{1/4 + \pstar / 2}{1/2 + p_1}\right|.
    \end{align*}
    Since we always have
    \[
        \frac{1}{2} + \alpha
    =   \frac{1}{2} + \frac{\pstar}{1 - 2\pstar}
    \ge \frac{1}{2} + \pstar
    =   \frac{1/4 + \pstar / 2}{1/2}
    \ge \frac{1/4 + \pstar / 2}{1/2 + p_1},
    \]
    the second absolute value can be removed, i.e.,
    \[
        C_1 = \frac{|p_1 - \pstar|}{2 + 4p_1} + p_1\cdot\left(\frac{1}{2} + \alpha - \frac{1/4 + \pstar / 2}{1/2 + p_1}\right).
    \]
    \item The contribution from $\{x_{1^+}\} \cup X_2$ is
    \[
        C_2 \coloneqq (p_2 + 1/4) \cdot \left|(1/2 + \alpha) - \frac{1/4}{p_2 + 1/4}\right|
    =   \left|(1/2 + \alpha)p_2 - \frac{1/2 - \alpha}{4}\right|.
    \]
    \item The contribution from $X_3$ is simply $C_3 \coloneqq (1/2 + \alpha)p_3$.
\end{itemize}

We fix the value of $p_1$ and lower bound the sum $C_2 + C_3$ subject to $p_2 + p_3 = 1/4 - p_1$:
\begin{align*}
    C_2 + C_3
&=  \left|(1/2 + \alpha)p_2 - \frac{1/2 - \alpha}{4}\right| + (1/2 + \alpha)p_3\\
&\ge \left|(1/2 + \alpha)p_2 - \frac{1/2 - \alpha}{4} + (1/2 + \alpha)p_3\right| \tag{triangle inequality}\\
&=  \left|(1/2 + \alpha)(1/4 - p_1) - \frac{1/2 - \alpha}{4}\right| \tag{$p_2 + p_3 = 1/4 - p_1$}\\
&=  \left|\frac{\alpha}{2} - (1/2 + \alpha)p_1\right|.
\end{align*}

Then, $C_1 + C_2 + C_3$ is lower bounded by
\[
    \frac{|p_1 - \pstar|}{2 + 4p_1} + p_1\cdot\left(\frac{1}{2} + \alpha - \frac{1/4 + \pstar / 2}{1/2 + p_1}\right) + \left|\frac{\alpha}{2} - (1/2 + \alpha)p_1\right|.
\]
Plugging $\alpha = \frac{\pstar}{1 - 2\pstar}$ into the above gives a simplified expression of
\[
    \frac{|p_1 - \pstar|}{2 + 4p_1} + \left(\frac{p_1}{2 - 4\pstar} - \frac{1 + 2\pstar}{2 + 4p_1} \cdot p_1\right) + \frac{|p_1 - \pstar|}{2 - 4\pstar}.
\]

On $[0, 1/4] \setminus \{\pstar\}$, the function
\[
    \phi(x) \coloneqq \frac{|x - \pstar|}{2 + 4x} + \left(\frac{x}{2 - 4\pstar} - \frac{1 + 2\pstar}{2 + 4x} \cdot x\right) + \frac{|x - \pstar|}{2 - 4\pstar},
\]
has a derivative of
\[
    \phi'(x) = \frac{1 + \sgn(x - \pstar)}{2 - 4\pstar} + \frac{\sgn(x - \pstar) - (1 + 2\pstar)}{2 + 4x} + \frac{(1 + 2\pstar)x - |x - \pstar|}{(1 + 2x)^2}.
\]
When $x > \pstar$, the derivative further simplifies into
\[
    \phi'(x) = \frac{1}{1 - 2\pstar} - \frac{\pstar}{1 + 2x} + \frac{2\pstar x + \pstar}{(1 + 2x)^2}
=   \frac{1}{1 - 2\pstar} > 0.
\]
When $x < \pstar$, the derivative reduces to
\[
    \phi'(x) = -\frac{1 + \pstar}{1 + 2x} + \frac{x + 2\pstar x + x - \pstar}{(1 + 2x)^2}
=   -\frac{1 + 2\pstar}{(1 + 2x)^2}
<   0.
\]

Therefore, $\phi(x)$ is uniquely minimized at $x = \pstar$, where
\[
    \phi(x)
=   \frac{\pstar}{2 - 4\pstar} - \frac{1}{2}\pstar
=   \pstar \cdot \frac{\pstar}{1 - 2\pstar}
=   \alpha \pstar.
\]
Recall that $d_{\calD}(f, g) \ge \phi(p_1)$ while $p_1 \ne \pstar$. Therefore, we must have $d_{\calD}(f, g) > \alpha\pstar$. This leads to a contradiction and completes the proof.
\end{proof}

\subsection{The Uniform Case: Overview}
For the second part of \Cref{thm:NP-hard}, where $\calD$ is uniform but not noiseless, we reduce from the following ``balanced'' variant of the SSP.

\begin{definition}[Balanced SSP]\label{def:balanced-SSP}
    Given $n = 2k$ positive integers $a_1, a_2, \ldots, a_n$ that satisfy
    \[
        \frac{\max_{i \in [n]}a_i}{\min_{i \in [n]}a_i} \le 1 + \frac{1}{100k},
    \]
    the goal is to decide whether there is a size-$k$ set $\calI \subseteq [n]$ such that $\sum_{i \in \calI}a_i = \frac{1}{2}\sum_{i=1}^{n}a_i$.
\end{definition}

Balanced SSP is $\NP$-hard by a straightforward reduction from the partition problem (deciding whether a multiset of numbers can be split into two equal-sum parts). The condition on $\frac{\max_i a_i}{\min_i a_i}$ can be guaranteed by adding a (moderately) large number to all numbers in the instance. For completeness, we prove the following lemma in Appendix~\ref{app:hardness}.

\begin{lemma}\label{lem:balanced-SSP-NP-hard}
    Balanced SSP is $\NP$-hard.
\end{lemma}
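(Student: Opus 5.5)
I would reduce from the Partition problem: given positive integers $b_1, \ldots, b_m$, decide whether some subset sums to exactly $\frac{1}{2}\sum_j b_j$. Partition is $\NP$-hard, including the version where $m$ is even and we ask for a balanced split with exactly $m/2$ elements on each side. To reach this balanced version, I would pad with zeros. Set $k \coloneqq m$ and $n \coloneqq 2k$, and append $m$ zeros, $b_{m+1} = \cdots = b_{2m} = 0$. Any subset $\calJ \subseteq [m]$ with sum $B/2$, where $B = \sum_j b_j$, can be completed to exactly $k$ indices by adding $k - |\calJ|$ of the zeros. This is possible since $0 \le |\calJ| \le m = k$. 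Conversely, restricting a size-$k$ half-sum set to $[m]$ gives a Partition solution. So the padded instance has a size-$k$ subset with half the total sum if and only if the Partition instance is a ``Yes'' instance.

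Next I would enforce positivity and the ratio condition by adding a large constant $L$ to every number: $a_i \coloneqq b_i + L$ for $i \in [n]$. For a size-$k$ set $\calI$, we have $\sum_{i \in \calI} a_i = \sum_{i \in \calI} b_i + kL$. The total is $\sum_{i=1}^{n} a_i = B + 2kL$, so half the total is $B/2 + kL$. Hence a size-$k$ set hits half the sum for the $a_i$ if and only if it does so for the $b_i$. The cardinality constraint is exactly what makes the shift by $L$ harmless, and this is the only delicate point of the reduction.

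For the ratio, note that $\max_i a_i / \min_i a_i \le (L + \max_j b_j)/L = 1 + \max_j b_j / L$. Choosing $L \coloneqq 100k \cdot \max_j b_j$ (or $L = 1$ if all $b_j = 0$) gives a ratio of at most $1 + \frac{1}{100k}$, and all $a_i$ are positive integers. The bit length of $L$ is $O(\log k + \log \max_j b_j)$, so the instance size grows only polynomially and the map is computable in polynomial time.

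Combining the two steps gives a polynomial-time many-one reduction from Partition to Balanced SSP, which proves $\NP$-hardness. I do not expect a real obstacle. The main thing to check is the interplay between the cardinality-$k$ requirement and the two modifications: the zero-padding must let any Partition solution reach size exactly $k$, and the shift must preserve equivalence only among size-$k$ sets. Both were verified above.
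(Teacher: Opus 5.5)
Your proposal is correct and is essentially the paper's proof: padding with $m$ zeros and then shifting every entry by $L$ produces exactly the list $(a_1 + M, \ldots, a_n + M, M, \ldots, M)$ that the paper uses, and the equivalence is argued through size-$k$ sets in both directions just as in the paper. The only difference is that you take $L = 100k \cdot \max_j b_j$ where the paper takes $M = 100nS$; this is immaterial, since your smaller constant still gives the ratio bound $1 + \frac{1}{100k}$.
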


Next, we introduce the reduction from Balanced SSP to computing the distance from calibration on a uniform distribution. We set
\[
    S \coloneqq \sum_{i=1}^{n}a_i,
\quad\quad
    \eps_i \coloneqq \frac{a_i}{S},
\quad\quad
    \eps \coloneqq \frac{1}{6k},
\quad\quad \text{and} \quad\quad
    \delta = \frac{1}{4k}.
\]
Note that the restrictions of the Balanced SSP implies that $\eps_i \approx \frac{1}{2k}$ up to a factor of $1 \pm O(1/k)$.

\begin{definition}\label{def:reduction-uniform}
    Given $(a_1, a_2, \ldots, a_n)$, we construct distribution $\calD \in \Delta(\calX \times \zo)$ and predictor $f: \calX \to [0, 1]$ as follows: The domain
    \[
        \calX
    \coloneqq \{x_i: i \in [n]\} \cup \{x'_i: i \in [n]\} \cup \{x''_i: i \in [n]\}
    \]
    is of size $3n = 6k$, and contains the following three groups of elements:
    \begin{itemize}
        \item Elements $x_1, x_2, \ldots, x_n$ with $f(x_i) = 1/2$ and $\mu(x_i) = 1/2 + \eps_i$ for every $i \in [n]$.
        \item Elements $x'_1, x'_2, \ldots, x'_n$ with $f(x'_i) = 1/2$ and $\mu(x'_i) = 1/2 - \delta$ for every $i \in [n]$.
        \item Elements $x''_1, x''_2, \ldots, x''_n$ with $f(x''_i) = 1/2 + \eps$ and $\mu(x''_i) = 1/2$ for every $i \in [n]$.
    \end{itemize}
\end{definition}

Similar to the noiseless case, the easier direction is to show that every ``Yes'' instance of Balanced SSP leads to a small distance from calibration.

\begin{lemma}\label{lem:uniform-yes-instance}
    For every ``Yes'' instance of Balanced SSP, the instance $(\calD, f)$ from \Cref{def:reduction-uniform} satisfies $\CalDist_{\calD}(f) \le \frac{1}{36k}$.
\end{lemma}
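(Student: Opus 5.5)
We exhibit an explicit perfectly calibrated predictor, namely the one induced (via \Cref{fact:CalDist-vs-partition}) by the partition in \Cref{eq:uniform-yes-instance-partition}. Fix a ``Yes'' witness $\calI \subseteq [n]$ with $|\calI| = k$ and $\sum_{i \in \calI} a_i = S/2$, so that $\sum_{i \in \calI}\eps_i = \sum_{i \in \overline{\calI}}\eps_i = 1/2$. Consider the two parts
\[
    \calX_1 \coloneqq \{x_i: i \in \calI\} \cup \{x'_i: i \in [n]\},
\qquad
    \calX_2 \coloneqq \{x_i: i \in \overline{\calI}\} \cup \{x''_i: i \in [n]\}.
\]
Since $\CalDist_{\calD}(f)$ is the minimum partition cost, it suffices to show $\cost_{\calD, f}(\{\calX_1, \calX_2\}) = \frac{1}{36k}$.

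\textbf{First part.} Every element has mass $\frac{1}{6k}$, and $\calX_1$ contains $k + 2k = 3k$ elements. Its conditional probability is
\[
    \mu(\calX_1) = \frac{1}{3k}\left[\sum_{i \in \calI}\left(\tfrac12 + \eps_i\right) + 2k\left(\tfrac12 - \delta\right)\right] = \frac{1}{3k}\left[\tfrac{3k}{2} + \tfrac12 - \tfrac12\right] = \frac12,
\]
using $2k\delta = 1/2$. All predictions in $\calX_1$ equal $1/2$, so $\cost_{\calD, f}(\calX_1) = 0$.

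\textbf{Second part.} $\calX_2$ also has $3k$ elements. Its conditional probability is
\[
    \mu(\calX_2) = \frac{1}{3k}\left[\tfrac{k}{2} + \tfrac12 + \tfrac{2k}{2}\right] = \frac12 + \frac{1}{6k} = \frac12 + \eps.
\]
This equals the prediction on the $x''$-elements, so those cost nothing. The $k$ elements $x_i$ with $i \in \overline{\calI}$ each move by $\eps$, giving
\[
    \cost_{\calD, f}(\calX_2) = k \cdot \frac{1}{6k} \cdot \eps = \frac{1}{36k}.
\]

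Summing the two parts gives total cost $\frac{1}{36k}$, which proves $\CalDist_{\calD}(f) \le \frac{1}{36k}$. The only delicate point is verifying that both averages come out exactly. This relies on the two facts used above: $\sum_{i \in \calI}\eps_i = \frac12$, and the choices $\delta = \frac{1}{4k}$, $\eps = \frac{1}{6k}$. No real obstacle is expected.
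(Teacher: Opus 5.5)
Your proposal is correct and takes essentially the same approach as the paper. You use the witness partition from \Cref{eq:uniform-yes-instance-partition}, verify that the two parts have averages exactly $1/2$ and $1/2+\eps$ (via $n\delta = 1/2$ and $\sum_{i\in\calI}\eps_i = 1/2$), and charge cost only to the $k$ elements $x_i$, $i \in \overline{\calI}$, each moved by $\eps$; the only difference is that you phrase this through partition costs rather than checking calibration of the induced predictor $g$ directly, which is equivalent by \Cref{fact:CalDist-vs-partition}.
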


\begin{proof}
    Let $\calI$ be a size-$k$ set $\calI \subseteq [n]$ with $\sum_{i \in \calI}a_i = S / 2$. Consider the following predictor:
    \[
        g(x) = \begin{cases}
            1/2, & x \in \{x_i: i \in \calI\} \cup \{x'_1, x'_2, \ldots, x'_n\},\\
            1/2 + \eps, & x \in \{x_i: i \in \overline{\calI}\} \cup \{x''_1, x''_2, \ldots, x''_n\}.
        \end{cases}
    \]
    We first note that $g \in \Cal(\calD)$. Conditioning on predicting $1/2$, the probability of label $1$ is
    \[
        \frac{\sum_{i \in \calI}\mu(x_i) + \sum_{i=1}^{n}\mu(x'_i)}{n + k}
    =   \frac{\sum_{i \in \calI}(1/2 + \eps_i) + (1/2 - \delta) \cdot n}{3k}
    =   \frac{1}{2} + \frac{\sum_{i \in \calI}(a_i / S) - n\delta}{3k}
    =   \frac{1}{2},
    \]
    where the last step follows from $\sum_{i \in \calI}a_i = S/2$ and $n\delta = 2k \cdot \frac{1}{4k} = 1/2$. The conditional probability of $1$ given prediction $1/2 + \eps$ is
    \[
        \frac{\sum_{i \in \overline{\calI}}\mu(x_i) + \sum_{i=1}^{n}\mu(x''_i)}{n + k}
    =   \frac{\sum_{i \in \overline{\calI}}(1/2 + \eps_i) + (1/2) \cdot n}{3k}
    =   \frac{1}{2} + \frac{\sum_{i \in \overline{\calI}}(a_i / S)}{3k}
    =   \frac{1}{2} + \eps.
    \]
    Finally, we have $d_{\calD}(f, g) = \frac{|\overline{\calI}|}{|\calX|} \cdot \eps = \frac{k}{6k} \cdot \frac{1}{6k} = \frac{1}{36k}$, since we only changed the prediction on the $x$-elements indexed by $\overline{\calI}$---each with a marginal probability of $\frac{1}{|\calX|}$---from $1/2$ to $1/2 + \eps$.
\end{proof}

In the remainder of this section, we prove that every ``No'' instance of Balanced SSP leads to $\CalDist_{\calD}(f) > \frac{1}{36k}$. Recall from \Cref{fact:CalDist-vs-partition} that each perfectly calibrated predictor corresponds to a partition of $\calX = \{x_i: i \in [n]\} \cup \{x'_i: i \in [n]\} \cup \{x''_i: i \in [n]\}$. Our proof consists of the following three steps:
\begin{itemize}
    \item We define a class of \emph{regular} partitions. Roughly speaking, a regular partition pairs exactly $k$ $x$-elements with the $2k$ $x'$-elements, and pairs the remaining $k$ $x$-elements with the $2k$ $x''$-elements. Note that the predictor in the proof of \Cref{lem:uniform-yes-instance} corresponds to a regular partition.
    \item We prove that every partition that is not regular has a cost of $\ge \frac{1}{36k} + \Omega(1/k^2)$. This is done by considering a \emph{rounded} instance in which every $\eps_i$ is replaced by $\frac{1}{2k}$, so that there are only three different types of elements (namely, $x$-, $x'$-, and $x''$-elements). Since each $\eps_i$ is $\frac{1}{2k} \pm O(1/k^2)$, this rounding only results in an $O(1/k^2)$ additive change to the cost (by \Cref{lem:partition-cost-continuity-in-D}), which will be dominated by the sub-optimality gap of every irregular partition.
    \item Finally, we argue that a regular partition gives a cost of $1/(36k)$ only if the Balanced SSP instance admits a feasible solution.
\end{itemize}

We first formally define the notion of regular partitions of $\calX$.
\begin{definition}[Regular partition]
    The signature of $\calX' \subseteq \calX$ is a triple $(a, b, c) \in \{0, 1, \ldots, n\}^3$, where $a$, $b$, and $c$ denote the numbers of $x$-, $x'$-, and $x''$-elements in $\calX'$, respectively. A partition of $\calX$ is regular if every part in it has a signature of either $(t, 2t, 0)$ or $(t, 0, 2t)$ for some integer $t$ that may depend on the part.
\end{definition}

Note that for ``Yes'' instances, the partition that witnesses $\CalDist_{\calD}(f) \le \frac{1}{36k}$ in \Cref{lem:uniform-yes-instance} has two parts of signatures $(k, 2k, 0)$ and $(k, 0, 2k)$ respectively, and is thus regular.

\subsection{Irregular Partitions are Inefficient}
We show that the cost of every irregular partition of $\calX$ is strictly higher than $\frac{1}{36k}$. To this end, we consider the following \emph{rounded} version of the instance from \Cref{def:reduction-uniform}. We will refer to the instance  from \Cref{def:reduction-uniform} as the \emph{actual} instance to avoid confusion. 

\begin{definition}[Rounded instance]\label{def:rounded-instance}
    The rounded instance for a Balanced SSP instance $(a_1, a_2, \ldots, a_n)$ is identical to the actual instance in \Cref{def:reduction-uniform}, except that $\mu(x_i) = \frac{1}{2} + \frac{1}{2k}$ for every $i \in [n]$.
\end{definition}

Recall the cost of a partition from \Cref{def:partition-cost}. We note that the distribution in the rounded instance is very close to the one in the actual instance in the TV distance. Thus, by \Cref{lem:partition-cost-continuity-in-D}, the cost of each partition only changes by $O(1/k^2)$.

\begin{lemma}\label{lem:rounding-error}
    Let $\calD$ and $\calD'$ denote the distributions in the actual instance and the rounded instance. Let $f$ be the (shared) predictor in both instances. Then, for every partition $\calP$ of $\calX$, we have
    \[
        |\cost_{\calD, f}(\calP) - \cost_{\calD', f}(\calP)| \le \frac{1}{120k^2} < \frac{1}{36k^2}.
    \]
\end{lemma}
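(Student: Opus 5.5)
By \Cref{lem:partition-cost-continuity-in-D}, it suffices to show $\dTV(\calD, \calD') \le \frac{1}{600k^2}$. Then the partition-cost difference is at most $5 \cdot \frac{1}{600k^2} = \frac{1}{120k^2}$, and the final inequality $\frac{1}{120k^2} < \frac{1}{36k^2}$ is trivial.

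The two distributions share the uniform $\calX$-marginal, with mass $\frac{1}{3n} = \frac{1}{6k}$ on each element. They also agree on every $x'$- and $x''$-element. Using the same identity as in the proof of \Cref{lem:type-sparsification} (equal marginals), the TV distance is
\[
    \dTV(\calD, \calD') = \sum_{i=1}^{n}\frac{1}{6k}\left|\eps_i - \frac{1}{2k}\right|.
\]

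The only substantive step is bounding $|\eps_i - \frac{1}{2k}|$ using the promise of \Cref{def:balanced-SSP}. Let $m = \min_j a_j$ and $M = \max_j a_j \le (1 + \frac{1}{100k})m$. Since $S$ is a sum of $n = 2k$ terms, each in $[m, M]$, we have $\eps_i = a_i / S \in [\frac{m}{2kM}, \frac{M}{2km}]$. Therefore
\[
    \left|\eps_i - \frac{1}{2k}\right| \le \frac{1}{2k}\left(\frac{M}{m} - 1\right) \le \frac{1}{2k}\cdot\frac{1}{100k} = \frac{1}{200k^2}.
\]
The lower side also works, because $1 - \frac{m}{M} \le \frac{M}{m} - 1$.

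Summing over the $2k$ indices gives
\[
    \dTV(\calD, \calD') \le 2k \cdot \frac{1}{6k} \cdot \frac{1}{200k^2} = \frac{1}{600k^2},
\]
which is exactly what is needed. I expect no real obstacle. The only care required is applying the TV formula for equal marginals and making sure the ratio bound controls both sides of $\eps_i - \frac{1}{2k}$.
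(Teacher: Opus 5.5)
Your proposal is correct and follows the paper's proof essentially step for step. You reduce to $\dTV(\calD,\calD') \le \frac{1}{600k^2}$ via \Cref{lem:partition-cost-continuity-in-D}, use the ratio promise to get $\left|\frac{a_i}{S} - \frac{1}{2k}\right| \le \frac{1}{200k^2}$, and sum over the $2k$ $x$-elements, each of mass $\frac{1}{6k}$, using the equal-marginal TV identity.
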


\begin{proof}[Proof of \Cref{lem:rounding-error}]
    It suffices to prove $\dTV(\calD, \calD') \le \frac{1}{600k^2}$, which, together with \Cref{lem:partition-cost-continuity-in-D}, implies
    \[
        |\cost_{\calD, f}(\calP) - \cost_{\calD', f}(\calP)| \le 5 \cdot \dTV(\calD, \calD') \le \frac{1}{120k^2}.
    \]

    Let $\gamma$ be a shorthand for $\frac{1}{100k}$. Recall from \Cref{def:balanced-SSP} that the Balanced SSP instance satisfies $\frac{\max_{i \in [n]} a_i}{\min_{i \in [n]} a_i} \le 1 + \gamma$. Thus, for every $i \in [n]$,
    \[
        n \cdot \frac{a_i}{1 + \gamma} \le n \cdot \min_{j \in [n]}a_j \le S \le n \cdot \max_{j \in [n]}a_j \le n \cdot (1 + \gamma) a_i.
    \]
    It follows that $\frac{1}{(1 + \gamma) n} \le \frac{a_i}{S} \le \frac{1 + \gamma}{n}$, which further implies
    \[
        \left|\frac{a_i}{S} - \frac{1}{2k}\right|
    =   \left|\frac{a_i}{S} - \frac{1}{n}\right|
    \le \frac{\gamma}{n}
    =   \frac{1}{200k^2}.
    \]

    Recall from \Cref{def:reduction-uniform,def:rounded-instance} that $\calD$ and $\calD'$ have the same $\calX$-marginal and put the same probability mass on every point of form $(x'_i, y)$ and $(x''_i, y)$ (where $i \in [n]$ and $y \in \zo$). Therefore, we have the desired bound
    \begin{align*}
        \dTV(\calD, \calD')
    =   \sum_{i=1}^{n}\calD_x(x_i) \cdot \left|\mu_{\calD}(x_i) - \mu_{\calD'}(x_i)\right|
    &=  \frac{1}{6k}\sum_{i=1}^{n}\left|\left(\frac{1}{2} + \frac{a_i}{S}\right) - \left(\frac{1}{2} + \frac{1}{2k}\right)\right|\\
    &\le\frac{1}{6k} \cdot 2k \cdot \frac{1}{200k^2}
    =   \frac{1}{600k^2}.
    \end{align*}
\end{proof}

In the rounded instance, all the $n$ $x$-elements are indistinguishable. The same is true for the $x'$- and $x''$-elements, respectively. Therefore, the cost of a subset of $\calX$ is determined by its signature. The technical lemma below, which we prove in Appendix~\ref{app:hardness}, gives a characterization of all possible signatures.

\begin{lemma}\label{lem:integer-triple}
    For any integers $a, b, c \ge 0$, at least one of the following holds:
    \begin{itemize}
        \item $(a, b, c)$ is a multiple of either $(1, 2, 0)$ or $(1, 0, 2)$.
        \item $a > \frac{b + c}{2}$.
        \item For $\mu \coloneqq \frac{6a - 3b}{a + b + c}$, $(a + b) \cdot \max\{-\mu, 0\} + c \cdot \max\{2 - \mu, 0\} \ge 1$.
    \end{itemize}
\end{lemma}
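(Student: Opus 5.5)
We will prove the statement by a direct case analysis. Fix integers $a, b, c \ge 0$ and assume that the first two alternatives fail: $(a, b, c)$ is not a multiple of $(1, 2, 0)$ or $(1, 0, 2)$, and $a \le \frac{b + c}{2}$. In particular $(a, b, c) \ne (0, 0, 0)$, so $N \coloneqq a + b + c \ge 1$ and $\mu = \frac{6a - 3b}{N}$ is well defined. The goal is to show $(a + b)\max\{-\mu, 0\} + c\max\{2 - \mu, 0\} \ge 1$. We split on the sign of $\mu$, that is, on whether $6a \le 3b$ or $6a > 3b$.

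\emph{Case $\mu \le 0$.} Here $2 - \mu \ge 2$. If $c \ge 1$, the second term alone is at least $2$. If $c = 0$, the first term is
\[
    (a + b)(-\mu) = \frac{(a + b)(3b - 6a)}{a + b} = 3b - 6a,
\]
which is a nonnegative multiple of $3$. It vanishes only when $b = 2a$, and then $(a, b, c) = a\cdot(1, 2, 0)$ is excluded. Hence the first term is at least $3$.

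\emph{Case $\mu > 0$, i.e.\ $b < 2a$.} Now the first term is $0$ and $2 - \mu > 0$, so we must show $c(2N - 6a + 3b) \ge N$. The assumption $a \le \frac{b + c}{2}$ gives $c \ge 2a - b \ge 1$, so we write $c = 2a - b + d$ with $d \ge 0$. Substituting, $2N - 6a + 3b = 3b + 2d$ and $N = 3a + d$, so the goal becomes
\[
    c(3b + 2d) \ge 3a + d.
\]
\begin{itemize}
    \item If $b = 0$, then $d = 0$ would give $(a, 0, 2a)$, a multiple of $(1, 0, 2)$, which is excluded. So $d \ge 1$, and $c \cdot 2d \ge 2c = 4a + 2d \ge 3a + d$.
    \item If $1 \le b \le 2a - 1$, then $b(2a - b) \ge 2a - 1 \ge a$, since a concave function of $b$ on this interval is minimized at an endpoint. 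Hence $3b(2a - b) \ge 3a$, and together with $3bd + 2cd \ge d$ (using $c \ge 1$) this gives $c(3b + 2d) \ge 3a + d$.
\end{itemize}

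The only step requiring care is the reparametrization $c = 2a - b + d$. It turns the third alternative into a polynomial inequality in nonnegative integers and isolates the boundary configurations $b = 2a$ and $(b, d) = (0, 0)$. Exactly these are the multiples excluded by the first alternative; everywhere else the integrality of $a, b, c$ yields the margin of $1$.
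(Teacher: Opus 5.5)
Your proof is correct and follows essentially the same case analysis as the paper: you split on the sign of $\mu$, reduce the range $0<\mu<2$ to the integer inequality $b(2a-b)\ge a$ for $1\le b\le 2a-1$, and single out the boundary configurations $(a,2a,0)$ and $(a,0,2a)$. The only differences are that you argue directly rather than by contraposition, and that your substitution $c=2a-b+d$ turns the third condition into the polynomial inequality $c(3b+2d)\ge 3a+d$, which replaces the paper's monotonicity-in-$c$ step and absorbs its separate $\mu\ge 2$ case.
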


In words, \Cref{lem:integer-triple} states that any subset of $\calX$ must satisfy at least one of the following: (1) its signature is allowed in regular partitions; (2) it is \emph{imbalanced} in the sense that it contains much more $x$-elements than $x'$- and $x''$-elements; (3) it is \emph{inefficient} in that it leads to a significant ``leftward movement''. Roughly speaking, the ``leftward movement'' refers to replacing predictions with smaller values (to their left on the number axis); this will be formalized in the proof below.

Now we are ready to analyze the irregular partitions.

\begin{lemma}\label{lem:irregular-partition}
    For every instance of Balanced SSP, in the actual instance from \Cref{def:reduction-uniform}, the cost of every irregular partition is strictly higher than $\frac{1}{36k}$.
\end{lemma}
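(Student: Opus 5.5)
The plan is to pass to the rounded instance of \Cref{def:rounded-instance}, where the cost of a partition becomes an explicit function of the part signatures. There I will show that every irregular partition costs at least $\frac{1}{36k} + \frac{1}{36k^2}$. By \Cref{lem:rounding-error}, the cost in the actual instance is then at least $\frac{1}{36k} + \frac{1}{36k^2} - \frac{1}{120k^2} > \frac{1}{36k}$, which proves the claim.

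\textbf{Step 1: cost of a part with signature $(a,b,c)$.} In the rounded instance every element has mass $\frac{1}{6k}$. The conditional probabilities are $\frac12+\frac{1}{2k}$, $\frac12-\frac{1}{4k}$ and $\frac12$ for $x$-, $x'$- and $x''$-elements. So a nonempty part with signature $(a,b,c)$ has conditional probability
\[
\frac12 + \frac{a/(2k) - b/(4k)}{a+b+c} = \frac12 + \frac{\mu}{12k}, \qquad \mu = \frac{6a-3b}{a+b+c},
\]
matching the $\mu$ of \Cref{lem:integer-triple}. The predictions are $\frac12$ for $x$- and $x'$-elements and $\frac12 + \frac{2}{12k}$ for $x''$-elements. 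Hence the cost of the part is
\[
\frac{1}{72k^2}\Bigl[(a+b)|\mu| + c|2-\mu|\Bigr].
\]

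\textbf{Step 2: linearization.} Use $|\mu| = \mu + 2\max\{-\mu,0\}$ and $|2-\mu| = (\mu-2) + 2\max\{2-\mu,0\}$. With
\[
Q(a,b,c) \coloneqq (a+b)\max\{-\mu,0\} + c\max\{2-\mu,0\} \ge 0,
\]
the bracket equals $(a+b+c)\mu - 2c + 2Q = 6a - 3b - 2c + 2Q$. Summing over the parts, the totals are $a = b = c = n = 2k$. So the cost of any partition is
\[
\frac{1}{72k^2}\Bigl(2k + 2\textstyle\sum_{\text{parts}} Q\Bigr) = \frac{1}{36k} + \frac{1}{36k^2}\sum_{\text{parts}} Q.
\]
This is a clean identity, and it reduces the claim to showing that some part has $Q \ge 1$. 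Empty parts contribute zero and can be ignored.

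\textbf{Step 3: some part has $Q \ge 1$.} Suppose the partition is irregular and, for contradiction, every part has $Q < 1$. By \Cref{lem:integer-triple}, every part is then either a multiple of $(1,2,0)$ or $(1,0,2)$, in which case $a = \frac{b+c}{2}$, or it satisfies $a > \frac{b+c}{2}$. Irregularity means at least one part is of the second kind, so $\sum a > \frac{\sum b + \sum c}{2}$. This contradicts $\sum a = 2k = \frac{2k+2k}{2}$.

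Therefore some part has $Q \ge 1$, and the rounded cost is at least $\frac{1}{36k} + \frac{1}{36k^2}$, as required. The main obstacle is spotting the potential decomposition in Step 2, which makes the baseline $2k$ exact. After that, the counting argument in Step 3 is immediate given \Cref{lem:integer-triple}.
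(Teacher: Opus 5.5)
Your proof is correct and follows essentially the paper's route. Your per-part linearization $|z| = z + 2\max\{-z,0\}$ is the paper's leftward/rightward cost split: it yields $\cost(\calP) = \frac{1}{36k} + 2\lcost(\calP)$, where each part has leftward cost $Q/(72k^2)$. Your contradiction via $\sum a = \frac{\sum b + \sum c}{2}$ is the paper's averaging argument for finding a part with $a \le \frac{b+c}{2}$, stated in contrapositive form, before invoking \Cref{lem:integer-triple} and \Cref{lem:rounding-error} in the same way.
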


\begin{proof}[Proof of \Cref{lem:irregular-partition}]
    Fix an irregular partition $\calP = \{\calX_1, \calX_2, \ldots, \calX_m\}$ of $\calX$. By \Cref{lem:rounding-error}, it suffices to prove that $\calP$ has a cost of $\ge \frac{1}{36k} + \frac{1}{36k^2}$ in the rounded instance from \Cref{def:rounded-instance}. In the remainder of this proof, we will focus on this rounded instance. In particular, all costs ($\cost(\cdot)$) and averages ($\mu(\cdot)$) will be over the distribution in \Cref{def:rounded-instance}, unless stated otherwise.

    \paragraph{Leftward and rightward costs.} Recall from \Cref{def:partition-cost} that a subset $\calX' \subseteq \calX$ has a cost of
    \[
        \cost(\calX') \coloneqq \sum_{x \in \calX'}\calD_x(x) \cdot |f(x) - \mu(\calX')|,
    \]
    and the cost of partition $\calP$ is the total cost of all parts. We further decompose the cost into two parts: the \emph{leftward cost}
    \[
        \lcost(\calX') \coloneqq \sum_{x \in \calX'}\calD_x(x) \cdot \max\{f(x) - \mu(\calX'), 0\}
    \]
    and the \emph{rightward cost}
    \[
        \rcost(\calX') \coloneqq \sum_{x \in \calX'}\calD_x(x) \cdot \max\{\mu(\calX') - f(x), 0\}.
    \]
    The leftward (resp., rightward) cost accounts for replacing the prediction $f(x)$ with a smaller (resp., larger) value to its left (resp., right). Note that $\cost(\calX') = \lcost(\calX') + \rcost(\calX')$, which implies
    \begin{equation}\label{eq:cost-sum}
        \cost(\calP) = \lcost(\calP) + \rcost(\calP).
    \end{equation}
    
    Next, we note that the difference $\rcost - \lcost$ also has a nice form. By definition,
    \begin{align*}
        \rcost(\calX') - \lcost(\calX')
    &=  \sum_{x \in \calX'}\calD_x(x) \cdot \left[\mu(\calX') - f(x)\right]\\
    &=  \calD_x(\calX') \cdot \mu(\calX') - \sum_{x \in \calX'}\calD_x(x) \cdot f(x)\\
    &=  \sum_{x \in \calX'}\calD_x(x) \cdot \mu(x) - \sum_{x \in \calX'}\calD_x(x) \cdot f(x).
    \end{align*}
    Summing the above over all parts in $\calP$ gives
    \[
        \rcost(\calP) - \lcost(\calP)
    =   \sum_{x \in \calX}\calD_x(x) \cdot \mu(x) - \sum_{x \in \calX}\calD_x(x) \cdot f(x)
    =   \Ex{(x, y) \sim \calD}{y} - \Ex{x \sim \calD_x}{f(x)}.
    \]
    The right-hand side above can be further simplified as follows. In the rounded instance, we have
    \begin{align*}
        \Ex{(x, y) \sim \calD}{y}
    &=   \frac{1}{3}\cdot\left[\left(1/2 + \frac{1}{2k}\right) + \left(1/2 - \frac{1}{4k}\right) + \frac{1}{2}\right]
    =   \frac{1}{2} + \frac{1}{12k},\\
        \text{and}\quad
        \Ex{x \sim \calD_x}{f(x)}
    &=   \frac{2}{3}\cdot\frac{1}{2} + \frac{1}{3}\cdot\left(\frac{1}{2} + \frac{1}{6k}\right)
    =   \frac{1}{2} + \frac{1}{18k}.
    \end{align*}
    Therefore, we have the identity
    \begin{equation}\label{eq:cost-diff}
        \rcost(\calP) - \lcost(\calP) = \frac{1}{36k}.
    \end{equation}

    Combining \Cref{eq:cost-sum,eq:cost-diff} gives
    \begin{equation}\label{eq:cost-vs-leftward-cost}
        \cost(\calP) = \frac{1}{36k} + 2\lcost(\calP).
    \end{equation}
    Thus, it remains to argue that partition $\calP$ contains a part with a high leftward cost.

    \paragraph{Find a candidate part.}
    For each $i \in [m]$, let $(a_i, b_i, c_i)$ be the signature of $\calX_i$. Let $\calI$ denote the set of indices $i \in [m]$ such that $(a_i, b_i, c_i)$ is proportional to neither $(1, 2, 0)$ nor $(1, 0, 2)$. Since $\calP$ is irregular, $\calI$ is non-empty. Note that $\sum_{i=1}^{m}(a_i, b_i, c_i) = (2k, 2k, 2k)$, which implies
    \[
        \sum_{i=1}^{m}\left(a_i - \frac{b_i + c_i}{2}\right)
    =   0.
    \]
    Since $a_i - \frac{b_i + c_i}{2} = 0$ holds for every $i \in [m] \setminus \calI$, we also have
    \[
        \sum_{i \in \calI}\left(a_i - \frac{b_i + c_i}{2}\right)
    =   0.
    \]
    By an averaging argument, there exists $\istar \in \calI$ such that $a_{\istar} \le \frac{b_{\istar} + c_{\istar}}{2}$. Then, $(a_{\istar}, b_{\istar}, c_{\istar})$ satisfies neither of the first two conditions in \Cref{lem:integer-triple}, so the last condition must hold, i.e.,
    \begin{equation}\label{eq:high-leftward-cost}
        (a_{\istar} + b_{\istar}) \cdot \max\left\{-\frac{6a_{\istar} - 3b_{\istar}}{a_{\istar} + b_{\istar} + c_{\istar}}, 0\right\} + c_{\istar} \cdot \max\left\{2 - \frac{6a_{\istar} - 3b_{\istar}}{a_{\istar} + b_{\istar} + c_{\istar}}, 0\right\} \ge 1.
    \end{equation}

    \paragraph{Lower bound the leftward cost.} It remains to lower bound $\lcost(\calX_{\istar})$. For brevity, we drop the subscript $\istar$ in $(a_{\istar}, b_{\istar}, c_{\istar})$ in the following. Note that
    \[
        \mu(\calX_{\istar})
    =   \frac{a \cdot [1/2 + 1/(2k)] + b \cdot [1/2 - 1/(4k)] + c \cdot (1/2)}{a + b + c}
    =   \frac{1}{2} + \frac{1}{12k} \cdot \frac{6a - 3b}{a + b + c}.
    \]
    The leftward cost of $\calX_{\istar}$ is then given by
    \begin{align*}
        &~\frac{a + b}{6k} \cdot \max\{1/2 - \mu(\calX_{\istar}), 0\} + \frac{c}{6k} \cdot \max\left\{1/2 + \frac{1}{6k} - \mu(\calX_{\istar}), 0\right\}\\
    =   &~\frac{1}{72k^2} \cdot \left[(a + b) \cdot \max\left\{-\frac{6a - 3b}{a + b + c}, 0\right\} + c \cdot \max\left\{2 -\frac{6a - 3b}{a + b + c}, 0\right\}\right]
    \ge \frac{1}{72k^2},
    \end{align*}
    where the last step follows from \Cref{eq:high-leftward-cost}.

    This shows that every irregular partition $\calP$ must give $\lcost(\calP) \ge \frac{1}{72k^2}$. By \Cref{eq:cost-vs-leftward-cost}, we have $\cost(\calP) \ge \frac{1}{36k} + \frac{1}{36k^2}$ over the rounded instance from \Cref{def:rounded-instance}. Finally, by \Cref{lem:rounding-error}, the same partition must have a cost that is strictly higher than $\frac{1}{36k}$ in the actual instance from \Cref{def:reduction-uniform}.
\end{proof}

\subsection{Regular Partitions are Also Inefficient}
In light of \Cref{lem:irregular-partition}, the only remaining piece is a lower bound on the costs of regular partitions. 

\begin{lemma}\label{lem:regular-partition}
    For every ``No'' instance of Balanced SSP, in the actual instance from \Cref{def:reduction-uniform}, the cost of every regular partition is strictly higher than $\frac{1}{36k}$.
\end{lemma}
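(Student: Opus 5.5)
The plan is to reuse the leftward/rightward cost decomposition from the proof of \Cref{lem:irregular-partition}, now directly on the \emph{actual} instance of \Cref{def:reduction-uniform}; no rounding is needed. The identity $\rcost(\calP) - \lcost(\calP) = \Ex{(x,y)\sim\calD}{y} - \Ex{x\sim\calD_x}{f(x)}$ holds for every partition of every instance. In the actual instance, $\sum_{i=1}^n \eps_i = \sum_i a_i/S = 1$, so the average of $\mu(x_i)$ is still $\frac12 + \frac{1}{2k}$. Hence $\Ex{}{y} = \frac12 + \frac{1}{12k}$ exactly as in the rounded instance, and \Cref{eq:cost-vs-leftward-cost} holds verbatim:
\[
    \cost_{\calD,f}(\calP) = \frac{1}{36k} + 2\lcost(\calP).
\]
It therefore suffices to show that, for a ``No'' instance, every regular partition has strictly positive leftward cost. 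Equivalently, I will show that a regular partition with $\lcost(\calP) = 0$ yields a solution to the Balanced SSP instance.

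Fix a regular partition. Call a part of signature $(t, 2t, 0)$ with $t \ge 1$ an $A$-part and a part of signature $(t, 0, 2t)$ with $t \ge 1$ a $B$-part. For each part, let $J \subseteq [n]$ be the indices of its $x$-elements and write $E_J \coloneqq \sum_{i \in J}\eps_i$. Counting elements gives two facts: the $x'$-elements force $\sum_{A\text{-parts}} t = k$, and the $x''$-elements force $\sum_{B\text{-parts}} t = k$. A direct computation gives the part averages:
\[
    \mu = \frac12 + \frac{E_J - t/(2k)}{3t} \quad \text{for an $A$-part}, \qquad \mu = \frac12 + \frac{E_J}{3t} \quad \text{for a $B$-part}.
\]

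Now translate $\lcost = 0$ into inequalities. In an $A$-part every prediction equals $\frac12$, so zero leftward cost means $\mu \ge \frac12$, i.e.\ $E_J \ge t/(2k)$. In a $B$-part the largest prediction is $\frac12 + \eps = \frac12 + \frac{1}{6k}$ (on the $x''$-elements, and $c = 2t > 0$). Zero leftward cost there means $\frac{E_J}{3t} \ge \frac{1}{6k}$, i.e.\ again $E_J \ge t/(2k)$.

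Summing these inequalities over all $A$-parts gives $\sum_{A} E_J \ge k/(2k) = \frac12$, and over all $B$-parts gives $\sum_B E_J \ge \frac12$. But the $J$'s partition $[n]$, so the two sums add to $\sum_i \eps_i = 1$. Hence both are equalities. In particular, the union $\calI$ of the $J$'s over $A$-parts has size $\sum_A t = k$ and satisfies $\sum_{i\in\calI} a_i = S\cdot\frac12$. This is a solution to the Balanced SSP instance, contradicting the ``No'' assumption. Therefore $\lcost(\calP) > 0$ and $\cost_{\calD,f}(\calP) > \frac{1}{36k}$.

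The main point needing care is confirming that the identity $\rcost - \lcost = \frac{1}{36k}$ survives in the actual instance; this hinges only on $\sum_i \eps_i = 1$. The rest is the bookkeeping above. I would also handle the degenerate case $t = 0$ by noting that it corresponds to empty parts, which are excluded.
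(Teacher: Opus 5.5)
Your proof is correct, but it takes a different route from the paper's.

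\textbf{The paper's route.} It works directly with the explicit cost of regular partitions. It first shows that merging all parts of signature $(t,2t,0)$ into one part, and all parts of signature $(t,0,2t)$ into one part, never increases the cost; this holds because the per-part costs are additive or sub-additive in the index sets, by the triangle inequality. It then writes the cost of the resulting two-part partition as
$\frac{1}{6k}\left[\frac{1-\alpha}{3}+\frac{5}{3}\left|\alpha-\frac{1}{2}\right|\right]$, where $\alpha=\sum_{i\in\calI}a_i/S$. Finally, it observes that this function is uniquely minimized at $\alpha=\frac12$.

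\textbf{Your route.} You note that the identity $\cost_{\calD,f}(\calP)=\frac{1}{36k}+2\lcost(\calP)$ from the irregular case holds exactly on the actual instance, since $\sum_i\eps_i=1$. This reduces the lemma to ruling out zero leftward cost. Zero leftward cost in a part means the part's average is at least its largest prediction. For both kinds of parts this is the same linear condition $E_J\ge t/(2k)$. Summing over the two classes of parts, with $\sum_A t=\sum_B t=k$ and $\sum_A E_J+\sum_B E_J=1$, forces equality and produces a Balanced SSP solution.

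\textbf{Comparison.} Your argument skips both the merging step and the piecewise-linear minimization. It also handles regular partitions with the same leftward-cost mechanism used for irregular ones. As a by-product, it shows that every partition of the actual instance costs at least $\frac{1}{36k}$. The paper's route instead gives an explicit formula for the best regular partition's cost as a function of $\alpha$, which makes the gap visible, though the lemma itself does not need this.
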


\begin{proof}[Proof of \Cref{lem:regular-partition}]
    Fix a regular partition of $\calX$. By definition, the signature of each part is of form either $(t, 2t, 0)$ or $(t, 0, 2t)$. We will first show that we can merge multiple parts without increasing the total cost, so it is without loss of generality to assume that there are only two parts. Then, we show that the resulting partition must have a cost strictly higher than $\frac{1}{36k}$, as the only way to achieve a cost of $\frac{1}{36k}$ is to find a feasible solution to the Balanced SSP instance.

    \paragraph{Merging parts of signature $(t, 2t, 0)$.} Consider a part that contains $t$ $x$-elements $\{x_i: i \in \calI\}$ for a size-$t$ set  $\calI \subseteq [n]$ as well as $2t$ $x'$-elements. Recall that $\mu(x_i) = 1/2 + \eps_i = 1/2 + a_i / S$ and $\mu(x'_i) = 1/2 - \delta = 1/2 - 1/(4k)$. The new prediction for these elements will be
    \[
        \mu \coloneqq \frac{\sum_{i \in \calI}(1/2 + \eps_i) + 2t \cdot (1/2 - \delta)}{3t}
    =   \frac{1}{2} + \frac{\frac{\sum_{i \in \calI}a_i}{S} - \frac{t}{2k}}{3t}
    =   \frac{1}{2} + \frac{\sum_{i \in \calI}\left(\frac{a_i}{S} - \frac{1}{2k}\right)}{3t}.
    \]
    The resulting cost---for moving $3t$ elements from $1/2$ to $\mu$---would then be
    \[
        \frac{3t}{6k} \cdot |\mu - 1/2|
    =   \frac{1}{6k}\left|\sum_{i \in \calI}\left(\frac{a_i}{S} - \frac{1}{2k}\right)\right|.
    \]

    Now, suppose that the partition contains $m$ parts of signatures $(t_1, 2t_1, 0), (t_2, 2t_2, 0), \ldots, (t_m, 2t_m, 0)$ respectively. We will show that merging them into one part does not increase the cost. Let $\calI_i$ be the set of indices of the $x$-elements in the part of signature $(t_i, 2t_i, 0)$. Then, the total cost for the union of the $m$ parts would be
    \[
        \frac{1}{6k}\left|\sum_{i \in \bigcup_{j=1}^{m}\calI_j}\left(\frac{a_i}{S} - \frac{1}{2k}\right)\right|
    =   \frac{1}{6k}\left|\sum_{j=1}^{m}\sum_{i \in \calI_j}\left(\frac{a_i}{S} - \frac{1}{2k}\right)\right|
    \le \frac{1}{6k}\sum_{j=1}^{m}\left|\sum_{i \in \calI_j}\left(\frac{a_i}{S} - \frac{1}{2k}\right)\right|,
    \]
    where the second step applies the triangle inequality. Note that the right-hand side above is exactly the sum of the costs of the $m$ individual parts.

    \paragraph{Merging parts of signature $(t, 0, 2t)$.} Next, consider a part that contains $t$ $x$-elements $\{x_i: i \in \calI\}$ for a size-$t$ set  $\calI \subseteq [n]$ as well as $2t$ $x''$-elements. Again, recall that $\mu(x_i) = 1/2 + \eps_i = 1/2 + a_i / S$, $f(x_i) = 1/2$, $\mu(x''_i) = 1/2$, and $f(x''_i) = 1/2 + \eps = 1/2 + \frac{1}{6k}$. The new prediction for these elements will be
    \[
        \mu \coloneqq \frac{\sum_{i \in \calI}(1/2 + \eps_i) + 2t \cdot (1/2)}{3t}
    =   \frac{1}{2} + \frac{\frac{\sum_{i \in \calI}a_i}{S}}{3t}.
    \]
    Here, we need to move $t$ elements from $1/2$ to $\mu$ and $2t$ elements from $1/2 + \eps$ to $\mu$. Thus, the total cost is
    \begin{align*}
        \frac{t}{6k} \cdot |\mu - 1/2| + \frac{2t}{6k} \cdot |\mu - (1/2 + \eps)|
    &=  \frac{1}{6k} \cdot \left[\frac{1}{3} \cdot \frac{\sum_{i \in \calI}a_i}{S} + 2t \cdot \left|\frac{\frac{\sum_{i \in \calI}a_i}{S}}{3t} - \frac{1}{6k}\right|\right]\\
    &=  \frac{1}{6k} \cdot \left[\frac{1}{3} \cdot \frac{\sum_{i \in \calI}a_i}{S} + \frac{2}{3} \cdot \left|\sum_{i \in \calI}\left(\frac{a_i}{S} - \frac{1}{2k}\right)\right|\right].
    \end{align*}

    When merging different parts of signatures $(t_1, 0, 2t_1), (t_2, 0, 2t_2), \ldots$, the first term above is additive, while the second term is sub-additive. Therefore, we may also merge all parts of this form into a single part without increasing the total cost.

    \paragraph{Put everything together.} So far, we have reduced the number of parts to $2$: there is one part of signature $(k, 2k, 0)$ and another of signature $(k, 0, 2k)$. Let $\calI \subseteq [n]$ be the size-$k$ set that contains the indices of the $x$-elements in the part of signature $(k, 2k, 0)$. Then, the total cost of the partition is given by
    \[
        \frac{1}{6k} \cdot \left[\left|\sum_{i \in \calI}\frac{a_i}{S} - \frac{1}{2}\right| + \frac{1}{3}\sum_{i \in \overline{\calI}}\frac{a_i}{S} + \frac{2}{3}\left|\sum_{i \in \overline{\calI}}\frac{a_i}{S} - \frac{1}{2}\right|\right].
    \]

    Let $\alpha \coloneqq \sum_{i \in \calI}\frac{a_i}{S}$. We have $\sum_{i \in \overline{\calI}}\frac{a_i}{S} = 1 - \alpha$. The total cost is then simplified into
    \[
        \frac{1}{6k} \cdot \left[|\alpha - 1/2| + \frac{1 - \alpha}{3} + \frac{2}{3} \cdot |(1 - \alpha) - 1/2|\right]
    =   \frac{1}{6k} \cdot \left[\frac{1 - \alpha}{3} + \frac{5}{3} \cdot |\alpha - 1/2|\right].
    \]

    As a function of $\alpha$, the above is uniquely minimized at $\alpha = 1/2$, with a value of $\frac{1}{36k}$. Therefore, for the total cost to be lower than or equal to $\frac{1}{36k}$, there must exist a size-$k$ set $\calI \subseteq [n]$ such that $\sum_{i \in \calI}a_i = \frac{S}{2}$, which contradicts the assumption that the Balanced SSP instance is a ``No'' instance.
\end{proof}

Now, we finish the proof of \Cref{thm:NP-hard}.
\begin{proof}[Proof of \Cref{thm:NP-hard}, the second part]
    Given a Balanced SSP instance, we construct $(\calD, f)$ according to \Cref{def:reduction-uniform} in polynomial time. By \Cref{lem:uniform-yes-instance,lem:irregular-partition,lem:regular-partition}, $\CalDist_{\calD}(f) \le \frac{1}{36k^2}$ if and only if the correct answer for the Balanced SSP instance is ``Yes''. Therefore, the Balanced SSP---which is $\NP$-hard by \Cref{lem:balanced-SSP-NP-hard}---is polynomial-time reducible to computing the distance from calibration in the uniform case.
\end{proof}

%% file: sample_UB.tex
\section{Convergence of the Empirical Distance from Calibration}\label{sec:sample-UB}

In this section, we focus on the \emph{empirical} distance from calibration over the empirical distribution $\Dhat$ over the sample. We first note that \Cref{lem:continuity-in-D} implies a sample upper bound for $\CalDist_{\Dhat}(f)$ to converge to $\CalDist_{\calD}(f)$.

\begin{proof}[Proof of \Cref{thm:sample-UB}, the first part]
    It is well-known (e.g., \cite[Theorem 1.3]{Canonne22}) that a sample size of $m = O(|\calX| / \eps^2)$ is sufficient for $\dTV(\Dhat, \calD) \le \eps / 5$ to hold with probability $0.99$. By \Cref{lem:continuity-in-D}, this implies that $|\CalDist_{\Dhat}(f) - \CalDist_{\calD}(f)| \le 5 \cdot \dTV(\Dhat, \calD) = \eps$ holds with probability $0.99$.
\end{proof}

To prove the second part of \Cref{thm:sample-UB}, we first apply \Cref{lem:prediction-sparsification} to obtain a prediction-sparse predictor $g \in \Cal(\calD)$ with a size-$O(1/\eps)$ range such that $d_{\calD}(f, g) \le \CalDist_{\calD}(f) + O(\eps)$. Then, by a concentration argument, $d_{\Dhat}(f, g)$ is also small. Finally, we control the bias of $g$ (over the empirical distribution $\Dhat$) conditioning on predicting each of the $O(1/\eps)$ values.

\begin{proof}[Proof of \Cref{thm:sample-UB}, the second part]
    Let $k \ge 1$ be an integer to be chosen later. By \Cref{lem:prediction-sparsification}, there exists $g \in \Cal(\calD)$ that predicts at most $k$ different values such that $d_{\calD}(f, g) \le \CalDist_{\calD}(f) + 1/k$.


    \paragraph{$g$ is close to $f$ over $\Dhat$.} We first upper bound $d_{\Dhat}(f, g)$. Let $(x_1, y_1), (x_2, y_2), \ldots, (x_m, y_m) \in \calX \times \zo$ denote the $m$ samples from $\calD$. Note that
    \[
        d_{\Dhat}(f, g) = \frac{1}{m}\sum_{i=1}^{m}\left|f(x_i) - g(x_i)\right|
    \]
    is the average of $m$ independent random variables with the same mean $d_{\calD}(f, g) = \Ex{x \sim \calD_x}{|f(x) - g(x)|}$. By a Chernoff bound, it holds with probability at least $1 - 1/200$ that
    \[
        d_{\Dhat}(f, g) \le d_{\calD}(f, g) + O(1/\sqrt{m})
    \le \CalDist_{\calD}(f) + 1/k + O(1/\sqrt{m}).
    \]

    \paragraph{$g$ is almost calibrated over $\Dhat$.} Let $n_i \coloneqq \sum_{j=1}^{m}\1{x_j \in \calX_i}$ and $p_i \coloneqq \calD_x(\calX_i)$ for each $i \in [k]$. Over the randomness in the size-$m$ sample, $n_i$ follows $\Binomial(m, p_i)$. To upper bound $\CalDist_{\Dhat}(g)$, we instead bound the quantity
    \begin{align*}
        \cost_{\Dhat, g}(\{\calX_1, \calX_2, \ldots, \calX_k\})
    &=  \sum_{i=1}^{k}\frac{n_i}{m} \cdot \left|\mu(\calX_i) - \frac{\sum_{j=1}^{m}y_j \cdot \1{x_j \in \calX_i}}{n_i}\right|,\\
    &=  \frac{1}{m}\sum_{i=1}^{k}\left|\sum_{j=1}^{m}y_j \cdot \1{x_j \in \calX_i} - n_i \cdot \mu(\calX_i)\right|,
    \end{align*}
    which is an upper bound on $\CalDist_{\Dhat}(g)$.

    Conditioned on the values of $n_1, n_2, \ldots, n_k$, each summation $\sum_{j=1}^{m}y_j \cdot \1{x_j \in \calX_i}$ follows the distribution $\Binomial(n_i, \mu(\calX_i))$. By Jensen's inequality, the conditional expectation of
    \[
        \left|\sum_{j=1}^{m}y_j \cdot \1{x_j \in \calX_i} - n_i \cdot \mu(\calX_i)\right|
    \]
    is at most the standard deviation $\sqrt{n_i \cdot \mu(\calX_i) \cdot [1 - \mu(\calX_i)]}
    \le \frac{1}{2}\sqrt{n_i}$.

    Now we account for the randomness in $n_1, n_2, \ldots, n_k$. Again, by Jensen's inequality, we have $\Ex{}{\sqrt{n_i}}
    \le \sqrt{\Ex{}{n_i}}
    =   \sqrt{m \cdot p_i}$. Therefore,
    \begin{align*}
        \Ex{}{\CalDist_{\Dhat}(g)}
    &\le\Ex{}{\cost_{\Dhat, g}(\{\calX_1, \calX_2, \ldots, \calX_k\})}\\
    &\le\frac{1}{2m}\sum_{i=1}^{k}\Ex{}{\sqrt{n_i}}\\
    &\le\frac{1}{2\sqrt{m}}\sum_{i=1}^{k}\sqrt{p_i}
    \le \frac{\sqrt{k}}{2\sqrt{m}}.
    \end{align*}
    Then, by Markov's inequality, it holds with probability at least $1 - 1/200$ that $\CalDist_{\Dhat}(g) \le \frac{100\sqrt{k}}{\sqrt{m}}$.

    \paragraph{Put everything together.} So far, we have shown that $d_{\Dhat}(f, g) \le \CalDist_{\calD}(f) + 1/k + O(1/\sqrt{m})$ and $\CalDist_{\Dhat}(g) \le \frac{100\sqrt{k}}{\sqrt{m}}$ each holds with probability $\ge 1 - 1/200$. Set $k = \Theta(m^{1/3})$. By the union bound and \Cref{fact:continuity-in-f}, it holds with probability $\ge 0.99$ that
    \[
        \CalDist_{\Dhat}(f)
    \le \CalDist_{\Dhat}(g) + d_{\Dhat}(f, g)
    \le \CalDist_{\calD}(f) + O(m^{-1/3}).
    \]
    For some $m = O(1/\eps^3)$, the right-hand side above is at most $\CalDist_{\calD}(f) + \eps$.
\end{proof}

%% file: sample_LB.tex
\section{Sample Lower Bound for Estimation}\label{sec:sample-LB}
We prove the first part of \Cref{thm:sample-LB} in this section. We first define the hard instance in \Cref{sec:sample-LB-hard-instance}, which consists of a ``pure'' variant and a ``mixed'' variant. We then show that every accurate estimator for the calibration distance distinguishes the two variants. Finally, we prove that distinguishing requires $\Omega(\sqrt{|\calX|})$ samples via a standard birthday-paradox argument.

\subsection{The Hard Instance}\label{sec:sample-LB-hard-instance}
Let $k$ be a positive integer and $\gamma \in (0, 1)$. Consider a size-$(4k + 3)$ domain
\[
    \calX \coloneqq \{\bot, x^{-}, x^{+}, x_1, x_2, \ldots, x_{4k}\},
\]
where:
\begin{itemize}
    \item $f(\bot) = 1/2$, $f(x^{-}) = 1/3$, $f(x^{+}) = 2/3$, and $f(x_i) = 1/2$ for every $i \in [4k]$.
    \item $\calD_x(\bot) = 1 - \gamma$, $\calD_x(x^{-}) = \calD_x(x^{+}) = \gamma/4$ and $\calD_x(x_i) = \frac{\gamma}{8k}$ for every $i \in [4k]$.
    \item $\mu(\bot) = \mu(x^{-}) = \mu(x^{+}) = 1/2$.
\end{itemize}

It remains to specify the value of $\mu(x_i)$ for each $i \in [4k]$. We consider the following two cases:

\begin{itemize}
    \item \textbf{The pure case:} $\mu(x_i) = 1/2$ for each $i \in [4k]$.

    \item \textbf{The mixed case:} The values $\mu(x_1), \mu(x_2), \ldots, \mu(x_{4k})$ are sampled from $\Bern(1/2)$ independently.
\end{itemize}

\subsection{Accurate Estimators Distinguish the Two Variants}

For brevity, we say that an estimator is $(\eps, \delta)$-PAC if it outputs the calibration distance up to error $\eps$ with probability at least $1 - \delta$.

\begin{lemma}\label{lem:estimation-implies-distinguishing}
    The following holds for every $(\eps, \delta)$-PAC estimator $\calA$:
    \begin{itemize}
        \item In the pure case, $\calA$ outputs a value $\ge \frac{\gamma}{12} - \eps$ with probability $\ge 1 - \delta$.
        \item In the mixed case, $\calA$ outputs a value $\le \frac{\gamma}{16} + \eps$ with probability $\ge 1 - \delta - e^{-\Omega(k)}$.
    \end{itemize}
\end{lemma}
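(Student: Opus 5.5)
Both bullets of \Cref{lem:estimation-implies-distinguishing} reduce to bounding the true calibration distance: once $\CalDist_{\calD}(f) \ge \gamma/12$ in the pure case and $\CalDist_{\calD}(f) \le \gamma/16$ in the mixed case (the latter with probability $1 - e^{-\Omega(k)}$ over the draw of the $\mu(x_i)$'s), the $(\eps,\delta)$-PAC guarantee and a union bound give the claim.

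\textbf{Pure case.} Here $\mu(x) = 1/2$ for every $x \in \calX$. Then every part $\calX'$ of any partition has $\mu(\calX') = 1/2$, so the only perfectly calibrated predictor is the constant $1/2$ (using \Cref{fact:CalDist-vs-partition}). Its distance from $f$ is
\[
\frac{\gamma}{4}\cdot\frac{1}{6}+\frac{\gamma}{4}\cdot\frac{1}{6}=\frac{\gamma}{12},
\]
so $\CalDist_{\calD}(f) = \gamma/12$.

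\textbf{Mixed case.} Let $N_0$ and $N_1$ be the numbers of $0$- and $1$-valued $x_i$'s. By a Chernoff bound, both are at least $k$ with probability $1 - e^{-\Omega(k)}$; assume this holds. Following the overview, I would write down an explicit partition.
\begin{itemize}
\item Put $x^{-}$ together with $k$ of the $0$-valued $x_i$'s. This part has mass $\gamma/4 + \gamma/8 = 3\gamma/8$ and label mass $\gamma/8$, so its average is exactly $1/3$.
\item Symmetrically, put $x^{+}$ with $k$ of the $1$-valued $x_i$'s; this part has average exactly $2/3$.
\item Put $\bot$ with the remaining $2k$ elements $x_i$ in a final part, whose average is some $\nu$.
\end{itemize}

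The costs are as follows. Moving the $2k$ chosen $x_i$'s from $1/2$ to $1/3$ or $2/3$ costs
\[
2k\cdot\frac{\gamma}{8k}\cdot\frac{1}{6}=\frac{\gamma}{24}.
\]
The last part costs $(1-\gamma)|1/2-\nu|$ for $\bot$ plus $(\gamma/4)|1/2-\nu|$ for the remaining $x_i$'s.

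The last part has total mass $1-\gamma+\gamma/4 \ge 3\gamma/4 \cdot$ — more usefully, $1-3\gamma/4$. Its label mass is $(1-\gamma)/2 + (\gamma/8k)\,R$, where $R$ is the number of $1$-valued elements among the remaining $2k$, and $R = N_1 - k$. This gives
\[
|\nu - 1/2| = \frac{(\gamma/8k)\,|R-k|}{1-3\gamma/4} = \frac{(\gamma/8k)\,|N_1 - 2k|}{1-3\gamma/4}.
\]
Since the last part has total mass $1-3\gamma/4$, its cost is exactly $(\gamma/8k)\,|N_1-2k|$, independent of $\gamma$'s effect on $\nu$.

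Since $N_1 \sim \Binomial(4k,1/2)$, Hoeffding gives $|N_1 - 2k| \le k/3$ with probability $1-e^{-\Omega(k)}$. On that event the last part costs at most $\gamma/24$, and the total is at most $\gamma/12$. That bound is not strong enough, so the deviation must be controlled more finely.

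\textbf{Main obstacle.} The slack is $\gamma/16 - \gamma/24 = \gamma/48$. Staying within it requires $(\gamma/8k)\,|N_1-2k| \le \gamma/48$, i.e.\ $|N_1 - 2k| \le k/6$. This still holds with probability $1-e^{-\Omega(k)}$ by Hoeffding (and it also implies $N_0, N_1 \ge k$). The step I would check most carefully is therefore exactly this last-part cost accounting. The key point is that $\bot$'s large mass cancels in the product of mass and $|\nu-1/2|$, so the constant $1/16$ comes out with room to spare. The final statement then follows from the PAC guarantee and a union bound over the failure events.
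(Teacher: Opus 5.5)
Your argument is correct and is essentially the paper's proof. It has the same $\gamma/12$ computation in the pure case, and in the mixed case the same two parts averaging exactly $1/3$ and $2/3$, together with the same Chernoff event $|N_1-2k|\le k/6$. The only difference is that you merge $\bot$ into the part containing the $2k$ leftover elements, whereas the paper predicts $\frac{N-k}{2k}$ on them separately. Your version gives the identical cost $\frac{\gamma}{8k}|N_1-2k|$, since $\bot$ carries no bias, and it sidesteps the paper's caveat about $\frac{N-k}{2k}$ colliding with another predicted value. One small correction: the final bound $\gamma/16$ is attained exactly when $|N_1-2k|=k/6$, so it does not hold ``with room to spare.''
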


\begin{proof}[Proof of \Cref{lem:estimation-implies-distinguishing}]
    It suffices to prove that $\CalDist_{\calD}(f) \ge \gamma / 12$ holds in the pure case, while $\CalDist_{\calD}(f) \le \gamma / 16$ holds with probability $1 - e^{-\Omega(k)}$ in the mixed case. The lemma would then follow from the assumption on $\calA$ and the union bound.

    \paragraph{The pure case.} Since $\mu(x) = 1/2$ holds for every $x \in \calX$ in the pure case, there is only one perfectly calibrated predictor, namely, the constant function $g(x) = 1/2$. Therefore,
    \[
        \CalDist_{\calD}(f)
    =   d_{\calD}(f, g)
    =   \calD_x(x^{-}) \cdot |1/3 - 1/2| + \calD_x(x^{+}) \cdot |2/3 - 1/2|
    =   \frac{\gamma}{12}.
    \]

    \paragraph{The mixed case.} Let random variable $N$ denote the number of indices $i \in [4k]$ with $\mu(x_i) = 1$. Note that $N$ follows $\Binomial(4k, 1/2)$. Let $\alpha \coloneqq 1/6$. By a Chernoff bound, it holds with probability $1 - e^{-\Omega(k)}$ that $|N - 2k| \le \alpha k$. It remains to show that this high-probability event implies $\CalDist_{\calD}(f) \le \frac{\gamma}{16}$.

    Since $|N - 2k| \le \alpha k$ implies $N \in [k, 3k]$, we can find size-$k$ subsets $\calI_0, \calI_1 \subseteq [4k]$ such that $\mu(x_i) = b$ holds for each $b \in \zo$ and $i \in \calI_b$. Let $\overline{\calI} \coloneqq [4k] \setminus (\calI_0 \cup \calI_1)$. Consider the following alternative predictor:
    \[
        g(x) = \begin{cases}
            1/3, & x \in \{x^{-}\} \cup \{x_i: i \in \calI_0\},\\
            2/3, & x \in \{x^{+}\} \cup \{x_i: i \in \calI_1\},\\
            \frac{N - k}{2k}, & x \in \{x_i: i \in \overline{\calI}\},\\
            1/2, & x = \bot.
        \end{cases}
    \]

    We first verify that $g \in \Cal(\calD)$. For brevity, we will assume that $\frac{N - k}{2k} \notin \{1/3, 1/2, 2/3\}$; the general case can be handled in a similar way. We have
    \[
        \Ex{(x, y) \sim \calD}{y \mid g(x) = 1/3}
    =   \frac{\calD_x(x^{-}) \cdot \mu(x^{-}) + \sum_{i \in \calI_0}\calD_x(x_i) \cdot \mu(x_i)}{\calD_x(x^{-}) + \sum_{i \in \calI_0}\calD_x(x_i)}
    =   \frac{\frac{\gamma}{4} \cdot \frac{1}{2} + k \cdot \frac{\gamma}{8k} \cdot 0}{\frac{\gamma}{4} + k \cdot \frac{\gamma}{8k}}
    =   \frac{1}{3}.
    \]
    The condition $\Ex{(x, y) \sim \calD}{y \mid g(x) = 2/3} = 2/3$ holds analogously. By definition of $N$, $\calI_0$ and $\calI_1$, $\overline{\calI}$ contains $N - k$ indices with $\mu(x_i) = 1$ and $3k - N$ indices with $\mu(x_i) = 0$. Then, for prediction $\frac{N - k}{2k}$, we have
    \[
        \Ex{(x, y) \sim \calD}{y \mid g(x) = \frac{N - k}{2k}}
    =   \frac{\sum_{i \in \overline{\calI}}\calD_x(x_i) \cdot \mu(x_i)}{\sum_{i \in \overline{\calI}}\calD_x(x_i)}
    =   \frac{(N - k) \cdot \frac{\gamma}{8k} \cdot 1 + (3k - N) \cdot \frac{\gamma}{8k} \cdot 0}{2k \cdot \frac{\gamma}{8k}}
    =   \frac{N - k}{2k}.
    \]
    Since $g$ only predicts $1/2$ on $\bot$ and $\mu(\bot) = 1/2$, we have $\Ex{(x, y) \sim \calD}{y \mid g(x) = 1/2} = 1/2$.

    Next, we note that $d_{\calD}(f, g)$ is given by:
    \[
        \calD_x(\{x_i: i \in \calI_0\}) \cdot |1/2 - 1/3| + \calD_x(\{x_i: i \in \calI_1\}) \cdot |1/2 - 2/3| + \calD_x(\{x_i: i \in \overline{\calI}\}) \cdot \left|1/2 - \frac{N - k}{2k}\right|.
    \]
    Both of the first two terms above are given by $k \cdot \frac{\gamma}{8k} \cdot \frac{1}{6} = \frac{\gamma}{48}$, while the last term is
    \[
        2k \cdot \frac{\gamma}{8k} \cdot \frac{|N - 2k|}{2k}
    \le \frac{\gamma}{8k} \cdot \alpha k
    =   \frac{\gamma}{48}.
    \]
    Therefore, we conclude that
    \[
        \CalDist_{\calD}(f)
    \le d_{\calD}(f, g)
    \le 3 \cdot \frac{\gamma}{48}
    =   \frac{\gamma}{16}
    \]
    holds with probability $1 - e^{-\Omega(k)}$ in the mixed case.
\end{proof}

\subsection{Lower Bound for Distinguishing}
Next, we prove via a standard birthday-paradox argument that the pure and mixed cases are indistinguishable, unless $\Omega(\sqrt{|\calX|} / \eps)$ samples are drawn. For positive integer $m$, let $\Dpure^{\otimes m}$ and $\Dmix^{\otimes m}$ be the distributions of a size-$m$ sample from the pure and mixed cases, respectively. Note that $\Dmix^{\otimes m}$ involves two levels of randomness: the randomness in parameters $\mu(x_1), \mu(x_2), \ldots, \mu(x_{4k})$ that define distribution $\calD$ and the random drawing of $m$ independent samples from $\calD$. In other words, $\Dmix^{\otimes m}$ is the uniform mixture of $2^{4k}$ $m$-fold product distributions obtained from the $2^{4k}$ possible choices of $\calD$.

\begin{lemma}\label{lem:birthday-paradox}
    \[
        \dTV\left(\Dpure^{\otimes m}, \Dmix^{\otimes m}\right) \le \frac{m^2\gamma^2}{32k}.
    \]
\end{lemma}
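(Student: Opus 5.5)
We will prove the bound with a coupling argument built around a "collision" event. Let $E$ be the event that some element $x_i$ ($i \in [4k]$) appears at least twice among the $m$ sampled $x$-coordinates. The $\calX$-marginal is identical in the pure and mixed cases, so the sequence $(x_1', \ldots, x_m')$ of sampled elements has the same distribution under $\Dpure^{\otimes m}$ and $\Dmix^{\otimes m}$, and hence $E$ has the same probability under both.

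The first step is to show that, conditioned on the element sequence and on $\overline{E}$, the two conditional label distributions coincide. In the pure case every label is an independent $\Bern(1/2)$, since every $\mu(x) = 1/2$. In the mixed case, labels on $\bot, x^{-}, x^{+}$ are again independent $\Bern(1/2)$. Under $\overline{E}$, each $x_i$ appears at most once, so its label equals $\mu(x_i) \in \zo$. Because the $\mu(x_i)$ are i.i.d. $\Bern(1/2)$, independent of the element draws, and distinct occurrences involve distinct $i$, these labels are also i.i.d. $\Bern(1/2)$. The two sample distributions therefore agree conditionally on $\overline{E}$ and the element sequence, and we get
\[
\dTV\left(\Dpure^{\otimes m}, \Dmix^{\otimes m}\right) \le \Pr[E].
\]
To justify this, write each distribution as $\Pr[\overline{E}]\, P_{\overline{E}} + \Pr[E]\, Q$. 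The $P_{\overline{E}}$ components are equal and the $\Pr[E]$ weights are equal, so the TV distance is $\Pr[E] \cdot \dTV(Q_{\mathsf{pure}}, Q_{\mathsf{mixed}}) \le \Pr[E]$.

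The second step bounds $\Pr[E]$ by a union bound over pairs of sample positions $j < j'$ and indices $i \in [4k]$:
\[
\Pr[E] \le \binom{m}{2} \cdot 4k \cdot \left(\frac{\gamma}{8k}\right)^2 \le \frac{m^2}{2} \cdot \frac{4k\gamma^2}{64k^2} = \frac{m^2\gamma^2}{32k}.
\]
This is exactly the claimed bound.

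The only step needing care is the conditional-equality claim. There we must ensure that in the mixed case the randomness in $\mu$ is marginalized before conditioning, so that the labels of distinct non-repeated $x_i$ really are independent fair coins. Given the two-level description of $\Dmix^{\otimes m}$ as a uniform mixture of product distributions, this follows directly.
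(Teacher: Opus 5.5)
Your proposal is correct and follows the paper's proof essentially verbatim. The paper also observes that the labels are uniform given the $x$-sequence in the pure case, and uniform given no repeated $x_i$ in the mixed case; it then bounds the collision probability by $\binom{m}{2}\cdot 4k\cdot(\gamma/(8k))^2$, and your explicit mixture decomposition is simply a more detailed justification of the paper's coupling step.
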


\begin{proof}[Proof of \Cref{lem:birthday-paradox}]
    We will construct a coupling between $\Dpure^{\otimes m}$ and $\Dmix^{\otimes m}$. Let
    \[
        S = ((x^{(1)}, y^{(1)}), (x^{(2)}, y^{(2)}), \ldots, (x^{(m)}, y^{(m)}))
    \]
    denote the size-$m$ sample from either the pure or the mixed case.
    
    Since $\mu(x) = 1/2$ holds for every $x \in \calX$ in the pure case, $\vec{y} \coloneqq (y^{(1)}, y^{(2)}, \ldots, y^{(m)})$ is uniformly distributed over $\zo^m$ conditioned on any possible realization of $\vec{x} \coloneqq (x^{(1)}, x^{(2)}, \ldots, x^{(m)})$. On the other hand, $\vec{y} \mid \vec{x}$ is uniformly distributed in the mixed case conditioned on that each of the $4k$ elements $x_1, x_2, \ldots, x_{4k}$ appears at most once in $\vec{x}$.

    By the union bound, the above implies a coupling between $\Dpure^{\otimes m}$ and $\Dmix^{\otimes m}$ that agrees except with probability at most
    \[
        \binom{m}{2} \cdot 4k \cdot \left(\frac{\gamma}{8k}\right)^2
    \le \frac{m^2\gamma^2}{32k}.
    \]
    This implies $\dTV\left(\Dpure^{\otimes m}, \Dmix^{\otimes m}\right) \le \frac{m^2\gamma^2}{32k}$.
\end{proof}

\subsection{Put Everything Together}
Now, we state and prove the formal version of \Cref{thm:sample-LB}.
\begin{theorem}\label{thm:sample-LB-formal}
    Suppose that integer $k$ is sufficiently large and $\eps, \delta \in (0, 0.01)$. Then, every $(\eps, \delta)$-PAC algorithm that estimates the distance from calibration on a domain of size $|\calX| = 4k + 3$ must draw a sample of size at least
    \[
        \frac{\sqrt{k}}{25\eps} = \Omega\left(\frac{\sqrt{|\calX|}}{\eps}\right).
    \]
\end{theorem}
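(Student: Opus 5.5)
The plan is to combine \Cref{lem:estimation-implies-distinguishing} and \Cref{lem:birthday-paradox} through a choice of $\gamma$ that makes the calibration-distance gap between the two cases exceed $2\eps$. I would set $\gamma \coloneqq 100\eps$; this is at most $1$ because $\eps < 0.01$. Then $\frac{\gamma}{12} - \eps = \frac{25\eps}{3} - \eps$ and $\frac{\gamma}{16} + \eps = \frac{25\eps}{4} + \eps$. The difference between these thresholds is $\frac{25\eps}{12} - 2\eps = \frac{\eps}{12} > 0$. So the threshold $\tau \coloneqq \frac{\gamma}{16} + \eps$ separates the two cases. I would verify this numerically and adjust the constant in $\gamma$ (e.g., $\gamma = 96\eps$) so that the final constant comes out as $\frac{1}{25}$.

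Next, I would turn any $(\eps,\delta)$-PAC estimator $\calA$ that uses $m$ samples into a tester. The tester outputs ``mixed'' if $\calA$ returns a value $\le \tau$, and ``pure'' otherwise. By \Cref{lem:estimation-implies-distinguishing}, in the pure case the tester errs with probability at most $\delta$, since the estimate is $\ge \frac{\gamma}{12} - \eps > \tau$. In the mixed case it errs with probability at most $\delta + e^{-\Omega(k)}$. Here the probability is over both the random draw of $\mu(x_i)$ and the sample, so the mixed-case sample is distributed exactly as $\Dmix^{\otimes m}$. The tester's acceptance probabilities under $\Dpure^{\otimes m}$ and $\Dmix^{\otimes m}$ therefore differ by at least $1 - 2\delta - e^{-\Omega(k)} \ge 0.97$ once $k$ is sufficiently large. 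It follows that $\dTV(\Dpure^{\otimes m}, \Dmix^{\otimes m}) \ge 0.97$.

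Finally, \Cref{lem:birthday-paradox} gives $\frac{m^2\gamma^2}{32k} \ge 0.97$, so $m \ge \sqrt{0.97 \cdot 32k}/\gamma = \Theta(\sqrt{k}/\eps)$. With $\gamma = 100\eps$, this gives $m \ge \frac{\sqrt{31k}}{100\eps} \approx \frac{\sqrt{k}}{18\eps} \ge \frac{\sqrt{k}}{25\eps}$. The bound $\Omega(\sqrt{|\calX|}/\eps)$ follows because $|\calX| = 4k+3$.

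The only delicate point is the constant bookkeeping: $\gamma$ must be chosen so that the gap condition $\frac{\gamma}{12} - \frac{\gamma}{16} = \frac{\gamma}{48} > 2\eps$ holds, which requires $\gamma > 96\eps$. At the same time, $\gamma$ must stay small enough to keep $\gamma \le 1$ and to yield the stated $\frac{1}{25}$ constant. The choice $\gamma = 100\eps$ appears to satisfy all of these constraints.
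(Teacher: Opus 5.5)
Your proposal is correct and follows the paper's proof: the same choice $\gamma = 100\eps$, a threshold separating $\gamma/16 + \eps$ from $\gamma/12 - \eps$ (you use $\gamma/16+\eps$ itself, the paper uses the midpoint of $\gamma/12$ and $\gamma/16$), and the birthday-paradox total variation bound. Using $\dTV \ge 0.97$ instead of the paper's $1/2$ gives you a slightly better constant. The one slip is the parenthetical suggestion $\gamma = 96\eps$, which would make the gap exactly zero; your last paragraph correctly notes that $\gamma > 96\eps$ is needed and that $\gamma = 100\eps$ works.
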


\begin{proof}[Proof of \Cref{thm:sample-LB-formal}]
    Let $\gamma \coloneqq 100\eps \in (0, 1)$ and $\theta \coloneqq \frac{\gamma / 12 + \gamma / 16}{2}$. It can be verified that $\gamma / 12 - \eps > \theta > \gamma / 16 + \eps$. Let $\calA$ be an $m$-sample $(\eps, \delta)$-PAC estimator, and consider running $\calA$ in the pure and mixed cases of the hard instance with parameters $k$ and $\gamma$. By \Cref{lem:estimation-implies-distinguishing}, we have
    \[
        \pr{S \sim \Dpure^{\otimes m}}{\calA(S) > \theta}
    \ge \pr{S \sim \Dpure^{\otimes m}}{\calA(S) \ge \gamma / 12 - \eps} \ge 1 - \delta > 0.99
    \]
    and
    \[
        \pr{S \sim \Dmix^{\otimes m}}{\calA(S) > \theta}
    \le 1 - \pr{S \sim \Dmix^{\otimes m}}{\calA(S) \le \frac{\gamma}{16} + \eps} \le \delta + e^{-\Omega(k)} < 0.01 + e^{-\Omega(k)}.
    \]
    For all sufficiently large $k$, the two inequalities together imply
    \[
        \dTV(\Dpure^{\otimes m}, \Dmix^{\otimes m})
    \ge \pr{S \sim \Dpure^{\otimes m}}{\calA(S) > \theta} - \pr{S \sim \Dmix^{\otimes m}}{\calA(S) > \theta}
    >   0.98 - e^{-\Omega(k)}
    \ge \frac{1}{2}.
    \]
    By \Cref{lem:birthday-paradox}, we must have $\frac{m^2\gamma^2}{32k} \ge \frac{1}{2}$, which implies $m \ge \frac{4\sqrt{k}}{\gamma} = \frac{\sqrt{k}}{25\eps}$.
\end{proof}

\section{Tightness of the One-Sided Convergence Rate}\label{sec:one-sided-LB}
We prove the second part of \Cref{thm:sample-LB}, which we formally state below.

\begin{theorem}\label{thm:one-sided-LB-formal}
    For every sufficiently large integer $k$, there exists an instance $(\calD, f)$ over a size-$k$ domain such that $\CalDist_{\calD}(f) = 0$ but a size-$k^3$ sample leads to
    \[
        \pr{}{\CalDist_{\Dhat}(f) \ge \Omega(1/k)} \ge \Omega(1).
    \]
\end{theorem}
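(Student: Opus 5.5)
We use the instance from the technical overview: $\calX = [k]$ with $\calD_x(i) = 1/k$ and $\mu_{\calD}(i) = f(i) = \frac{1}{3} + \frac{i}{3k}$. Then $f$ is perfectly calibrated over $\calD$, so $\CalDist_{\calD}(f) = 0$. Draw $m = k^3$ samples. For each $i$, let $n_i$ be the number of occurrences of $i$ in the sample. Call $i$ \emph{typical} if
\[
    n_i \in [k^2/2, 2k^2]
\quad \text{and} \quad
    |\mu_{\Dhat}(i) - f(i)| \ge c/k,
\]
for a small constant $c$. Each $n_i \sim \Binomial(k^3, 1/k)$ has mean $k^2$, so the first condition fails with probability $e^{-\Omega(k)}$. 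Conditioned on $n_i$, the empirical label average is a normalized $\Binomial(n_i, \mu(i))$ with $\mu(i) \in [1/3, 2/3]$. Its standard deviation is $\Theta(1/\sqrt{n_i}) = \Theta(1/k)$. By Berry--Esseen (or a direct anti-concentration bound on binomials), the deviation is at least $c/k$ with probability at least some constant $q > 0$. Hence $\Ex{}{\#\text{typical}} \ge (q - o(1))k$. Since the number of atypical elements is at most $k$, a reverse Markov argument shows that with probability $\Omega(1)$ at least $qk/2$ elements are typical. This is all the randomness we need: the event has constant probability, as the statement requires.

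On this event, we lower bound $\cost_{\Dhat, f}(\calP)$ for every partition $\calP$ of $[k]$ by charging each part. Fix a part $\calX'$ with $l$ elements, of which $t$ are typical. The destination $\mu_{\Dhat}(\calX')$ is a single point, and the predictions $f(i)$ in $\calX'$ are pairwise separated by at least $\frac{1}{3k}$. So at most one element of $\calX'$ lies within distance $\frac{1}{6k}$ of the destination.

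\textbf{Case $l \ge 2$.} At least $l - 1 \ge l/2$ elements move by at least $\frac{1}{6k}$. The issue is that an atypical element may have tiny empirical mass. To handle this, we work on the sub-event that every element has $n_i \ge k^2/2$. This holds with probability $1 - k e^{-\Omega(k)}$, so we may simply include it in the definition of the good event. Then $\Dhat(i) \ge \frac{1}{2k}$ for all $i$, and the cost of $\calX'$ is at least $\frac{l-1}{2k} \cdot \frac{1}{6k} = \Omega(l/k^2) \ge \Omega(t/k^2)$.

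\textbf{Case $l = 1$, $\calX' = \{i\}$ with $i$ typical.} The cost is $\Dhat(i) \cdot |f(i) - \mu_{\Dhat}(i)| \ge \frac{1}{2k} \cdot \frac{c}{k} = \Omega(1/k^2)$.

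Summing over parts gives $\cost_{\Dhat, f}(\calP) \ge \Omega(1/k^2) \cdot \#\text{typical} \ge \Omega(1/k)$. By \Cref{fact:CalDist-vs-partition}, this yields $\CalDist_{\Dhat}(f) \ge \Omega(1/k)$.

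The main obstacle is the anti-concentration step: we need a lower bound that is uniform in $i$ and in $n_i \in [k^2/2, 2k^2]$ on $\Pr[|\Binomial(n, \mu) - n\mu| \ge c\sqrt{n}]$ with $\mu \in [1/3, 2/3]$. I would get this from Berry--Esseen with explicit constants, or from Paley--Zygmund applied to the squared deviation using the fourth-moment bound $O(n^2)$. Once it holds, the rest is bookkeeping: the fact that all elements have mass at least $\frac{1}{2k}$ makes the charging argument robust to atypical elements.
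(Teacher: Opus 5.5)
Your proof is correct and follows the paper's argument: the same instance, the same notion of typical elements (Chernoff for the counts $n_i$, anti-concentration for the label deviations), and the same per-part charging based on the $\frac{1}{3k}$ separation of the predictions. The only difference is in the bookkeeping. You include the global event that every $n_i \ge k^2/2$, so every element has empirical mass $\ge \frac{1}{2k}$; non-singleton parts then pay for all their elements, and any constant fraction $q$ of typical elements suffices. The paper instead allows atypical elements of small mass and loses one typical element in mixed parts, so it needs each element to be typical with probability $0.99$ in order to get $|T| \ge 0.9k > |\overline{T}|$.
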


By setting $k = \Theta(1/\eps)$, the theorem implies that the $1/\eps^3$ dependence in \Cref{thm:sample-UB} cannot be improved in general.

\subsection{The Hard Instance and Overview of Analysis}
We consider the domain $\calX \coloneqq [k] = \{1, 2, \ldots, k\}$, predictor $f(i) = 1/3 + i / (3k)$, and distribution $\calD$ specified by:
\begin{itemize}
    \item $\calD_x(i) = 1/k$ for every $i \in [k]$.
    \item $\mu_{\calD}(i) = 1/3 + i / (3k)$ for every $i \in [k]$.
\end{itemize}
Since $f$ agrees with $\mu_{\calD}(\cdot)$ on every $x \in \calX$, we have $\CalDist_{\calD}(f) = 0$.

It remains to lower bound $\CalDist_{\Dhat}(f)$ over a sample of size $m = k^3$, denoted by
\[
(x_1, y_1), (x_2, y_2), \ldots, (x_m, y_m) \sim \calD.
\]
For each $i \in [k]$, let
\[
    n_i \coloneqq \sum_{j=1}^{m}\1{x_j = i}
\]
be the number of times $i$ gets sampled, and note that the empirical average of the label is given by
\[
    \mu_{\Dhat}(i) = \frac{\sum_{j=1}^{m}y_j \cdot \1{x_j = i}}{n_i}.
\]
Let $\Delta_i \coloneqq \mu_{\Dhat}(i) - \mu_{\calD}(i)$ be the deviation of $\mu_{\Dhat}(i)$ from its expectation.

We consider elements that are \emph{typical} in the following sense.

\begin{definition}
    An element $i \in [k]$ is typical if
    \[
        \frac{m}{2k} \le n_i \le \frac{2m}{k}
    \quad \text{and} \quad
        |\Delta_i| \ge \frac{1}{300k}.
    \]
\end{definition}

We will first apply standard concentration and anti-concentration bounds to show that, with high probability, the majority of elements are typical.

\begin{lemma}\label{lem:many-typical-elements}
    The following holds for all sufficiently large $k$: with probability at least $0.9$, there are at least $0.9 \cdot k$ typical elements.
\end{lemma}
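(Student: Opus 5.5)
We plan to show that each fixed $i \in [k]$ is typical with probability at least $0.99$ once $k$ is large. Linearity of expectation then says the expected number of atypical elements is at most $0.01k$. Markov's inequality turns this into: with probability at least $0.9$, at most $0.1k$ elements are atypical, which is exactly the statement. This averaging sidesteps the dependence among the $n_i$'s, since the multinomial coupling never has to be handled directly.

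For a fixed $i$, the two conditions are handled separately and combined with a union bound.

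\textbf{Count condition.} We have $n_i \sim \Binomial(m, 1/k)$ with mean $m/k = k^2$. A multiplicative Chernoff bound gives
\[
\pr{}{n_i \notin [m/(2k), 2m/k]} \le 2e^{-\Omega(k^2)},
\]
which is negligible for large $k$.

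\textbf{Deviation condition.} Condition on $n_i = n$ with $n \in [k^2/2, 2k^2]$. Then $n\mu_{\Dhat}(i) \sim \Binomial(n, p)$ with $p = \mu_{\calD}(i) \in [1/3, 2/3]$, so its variance satisfies
\[
\sigma^2 = np(1-p) \ge \tfrac{2n}{9} \ge \tfrac{k^2}{9}.
\]
The event $|\Delta_i| < 1/(300k)$ says the binomial lands within $n/(300k) \le 2k/300 = k/150$ of its mean. That window has width about $k/75 \approx 0.04\,\sigma \cdot 3$, a small constant multiple of the standard deviation.

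This anti-concentration step is the main obstacle: we need the probability of such a window to be a small constant, uniformly over $n$ and $p$ in the allowed ranges.

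\begin{itemize}
\item First attempt: Berry--Esseen. The binomial CDF is within $O(1/\sigma) = O(1/k)$ of the Gaussian CDF with the same mean and variance. The Gaussian assigns mass at most $(2\cdot k/150)/(\sqrt{2\pi}\,\sigma) \le (k/75)\cdot 3/(\sqrt{2\pi}\,k) < 0.02$ to the window. So the total probability is at most $0.02 + O(1/k)$.
\item Fallback if that is too loose: use the local bound that every binomial point mass is at most $O(1/\sigma)$, and multiply by the roughly $k/75 + 1$ integer points in the window.
\end{itemize}

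Either route gives $\pr{}{|\Delta_i| < 1/(300k) \mid n_i = n} \le 0.02 + O(1/k)$.

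If $0.02$ is too large to reach $0.99$ after the union bound, the constant in the Markov step can be relaxed instead: an atypicality probability of at most $0.03$ already gives an expectation of at most $0.03k$. Markov then bounds the probability that more than $0.1k$ elements are atypical by $0.3$, not the required $0.1$. So the cleaner fix is to make the Gaussian estimate sharper. The constant $1/300$ was evidently chosen for this; a careful computation gives Gaussian mass about $(k/75)/(\sqrt{2\pi}\cdot k\sqrt{2}/3)\approx 0.011$, and combined with the other terms this yields per-element failure probability at most $0.01$ for large $k$, as needed.
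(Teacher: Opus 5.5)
Your plan matches the paper's proof. You show each fixed $i$ is typical with probability at least $0.99$: a Chernoff bound handles $n_i \sim \Binomial(k^3, 1/k)$, Berry--Esseen anti-concentration handles the labels conditional on $n_i$, and a union bound combines them. You then finish with linearity of expectation and Markov's inequality.

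The one step that does not close as written is the final constant check.
\begin{itemize}
\item Your first estimate pairs the widest window ($n = 2k^2$, half-width $k/150$) with the smallest standard deviation ($n = k^2/2$, $\sigma \ge k/3$). That gives about $0.016$.
\item Your ``careful'' estimate still pairs the window for $n = 2k^2$ with $\sigma = \sqrt{2}k/3$, which is the worst-case standard deviation for $n = k^2$. That gives $1/(50\sqrt{\pi}) \approx 0.0113$.
\end{itemize}
Since $0.0113 + O(1/k) > 0.01$, your closing claim that the per-element failure probability is at most $0.01$ does not follow from the number you computed.

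The fix is to use the same $n$ in both places. For a given $n$, the event $|\Delta_i| < 1/(300k)$ is a window of half-width $n/(300k)$ around the mean of a binomial with standard deviation $\sqrt{np(1-p)}$. Using $n \le 2k^2$ and $p(1-p) \ge 2/9$, the normalized half-width is
\[
\frac{\sqrt{n}}{300k\sqrt{p(1-p)}} \le \frac{\sqrt{2}\,k}{300k\cdot \sqrt{2}/3} = \frac{1}{100}.
\]
The Gaussian mass of this window is therefore at most $\frac{2}{100\sqrt{2\pi}} = \frac{\sqrt{2}}{100\sqrt{\pi}} \approx 0.008$. This is exactly what the paper obtains from \Cref{lem:anti-concentration} with $\eps = 1/(300k)$ and $n \le 2m/k$. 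Then $0.008 + O(1/k) + e^{-\Omega(k^2)} \le 0.01$ for large $k$, and the rest of your argument goes through unchanged.
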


Then, we show that the presence of many typical elements implies a lower bound on $\CalDist_{\Dhat}(f)$.

\begin{lemma}\label{lem:typical-elements-give-high-CalDist}
    If there are at least $0.9 \cdot k$ typical elements, we have
    \[
        \CalDist_{\Dhat}(f) \ge \Omega(1/k).
    \]
\end{lemma}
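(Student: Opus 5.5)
Fix any partition $\calP$ of $[k]$. The goal is to show $\cost_{\Dhat, f}(\calP) = \Omega(1/k)$. Let $T \subseteq [k]$ be the set of typical elements, so $|T| \ge 0.9k$. It suffices to lower bound the cost of each part $\calX'$ by $c \cdot |\calX' \cap T| / k^2$ for an absolute constant $c > 0$. Summing over parts then gives $\cost_{\Dhat,f}(\calP) \ge c \cdot 0.9k/k^2 = \Omega(1/k)$. Since $\CalDist_{\Dhat}(f)$ is the minimum over partitions, this finishes the proof. Each typical $i$ has $\Dhat_x(i) = n_i/m \ge 1/(2k)$, and non-typical elements only contribute nonnegative cost, so we may ignore them.

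\textbf{Singleton-type parts.} Suppose a part $\calX'$ contains exactly one typical element $i$. Let $\nu = \mu_{\Dhat}(\calX')$. If $\calX' = \{i\}$, then $\nu = \mu_{\Dhat}(i) = f(i) + \Delta_i$. The cost is at least $\Dhat_x(i)|\Delta_i| \ge \frac{1}{2k}\cdot\frac{1}{300k} = \Omega(1/k^2)$. If $\calX'$ also contains non-typical elements, $\nu$ can land anywhere, so this bound is unavailable. I handle this case uniformly with the next one.

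\textbf{Parts with several elements.} Use the spacing of predictions: $f(j) - f(j') = (j - j')/(3k)$, so distinct predictions are $\frac{1}{3k}$-separated. For any location $\nu$, at most one element $j$ has $|f(j) - \nu| < \frac{1}{6k}$. Hence if $\calX'$ contains $t \ge 2$ typical elements, at least $t - 1 \ge t/2$ of them satisfy $|f(j)-\nu| \ge \frac{1}{6k}$. The cost is then at least $\frac{t}{2}\cdot\frac{1}{2k}\cdot\frac{1}{6k} = \Omega(t/k^2)$.

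\textbf{The remaining case.} This is a part with exactly one typical element $i$ plus possibly some non-typical elements. This is the main obstacle, since $\nu$ could equal $f(i)$ exactly.

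I would try two fixes. The first is the clean charging in which the typical element is exempt. Then each part loses at most one typical element, and the bound fails only if many parts each contain exactly one typical element sitting at the right place, which is possible in principle.

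So, second, redo the count. Let $\calX'$ have typical elements $T'$ and non-typical elements $N'$. If $|\nu - f(i)| < \frac{1}{6k}$ for the unique typical $i$, I argue via the rightward/leftward decomposition. The labels must balance: $\sum_{j \in \calX'} \Dhat_x(j)(\mu_{\Dhat}(j) - \nu) = 0$. The term from $i$ has magnitude about $\Dhat_x(i)|\Delta_i| \ge \frac{1}{600k^2}$, up to the $\frac{1}{6k}$ slack. So the non-typical elements must carry compensating mass-weighted deviation. Each non-typical $j \in N'$ has $|\mu_{\Dhat}(j) - \nu| \le |\Delta_j| + |f(j) - \nu|$.

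The obstacle is non-typical $j$ with large $|\Delta_j|$ but small $n_j$ (or large $n_j$). To handle it, I would strengthen the accounting as follows. Pairing each non-typical element with at most one typical one, the cost attributable to $N'$ is at least $\sum_{j\in N'}\Dhat_x(j)|f(j)-\nu| \ge \frac{1}{6k}\Dhat_x(N')$, minus one element. Balancing then forces $\Dhat_x(N')$ to be large, unless $\nu$ is tuned so that $|f(i)-\nu| \gtrsim \frac{1}{600k}$ after all.

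Tuning this threshold, perhaps by tightening the cutoff $\frac{1}{6k}$ to a smaller constant multiple of $\frac1k$ relative to $|\Delta_i| \ge \frac{1}{300k}$, gives a charge of $\Omega(1/k^2)$ per part. The only exception is parts that use at least one non-typical element. There are at most $0.1k$ non-typical elements, so at most $0.1k$ such exceptional parts. That leaves $\ge 0.8k$ typical elements charged $\Omega(1/k^2)$ each, giving the $\Omega(1/k)$ bound.
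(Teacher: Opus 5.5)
Your final argument is correct and is essentially the paper's: charge $\Omega(1/k^2)$ per typical element, using $|\Delta_i|$ for typical singletons and the $\frac{1}{3k}$-spacing for parts with at least two typical elements. You exempt the lone typical element of any part that also contains an atypical element, and since there are at most $|\overline{T}| \le 0.1k$ such parts, at least $0.8k$ typical elements remain charged. That exemption is exactly the ``clean charging'' you dismissed as possibly failing, and it cannot fail: a part whose unique typical element sits ``at the right place'' must contain an atypical element, otherwise it is a singleton covered by $|\Delta_i|$. So the balancing and threshold-tuning paragraph is unnecessary (and as written is not a proof) and should be deleted.
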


\Cref{thm:one-sided-LB-formal} immediately follows from \Cref{lem:many-typical-elements,lem:typical-elements-give-high-CalDist}.

\subsection{Proof of \Cref{lem:many-typical-elements}}

We will use the following corollary of the Berry-Esseen theorem, which we prove in \Cref{app:one-sided-LB}.
\begin{lemma}\label{lem:anti-concentration}
    For every $p \in [1/3, 2/3]$, $\eps > 0$ and integer $n \ge 1$, it holds that
    \[
        \pr{X \sim \Binomial(n, p)}{|X / n - p| \le \eps}
    \le \frac{3}{\sqrt{\pi}} \cdot \eps\sqrt{n} + O\left(\frac{1}{\sqrt{n}}\right).
    \]
\end{lemma}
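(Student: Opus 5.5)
The plan is to derive \Cref{lem:anti-concentration} from the Berry--Esseen theorem applied to the normalized binomial. Write $X = \sum_{j=1}^{n} B_j$ with $B_j \sim \Bern(p)$ i.i.d., and set $\sigma^2 \coloneqq p(1-p)$. Since $p \in [1/3, 2/3]$, we have $\sigma^2 \in [2/9, 1/4]$. The third absolute central moment is $\rho \coloneqq \Ex{}{|B_j - p|^3} = p(1-p)\left[p^2 + (1-p)^2\right] \le \sigma^2$. The Berry--Esseen theorem then gives, for $Z \coloneqq (X - np)/(\sigma\sqrt{n})$ and every $t \in \bbR$,
\[
    \left|\pr{}{Z \le t} - \Phi(t)\right| \le \frac{C\rho}{\sigma^3\sqrt{n}} \le \frac{C}{\sigma\sqrt{n}} = O\left(\frac{1}{\sqrt{n}}\right),
\]
where the last step uses that $\sigma$ is bounded below by an absolute constant. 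This uniformity in $p$ is the reason for restricting $p$ to $[1/3, 2/3]$.

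Next, rewrite the event. We have $|X/n - p| \le \eps$ if and only if $|Z| \le \eps\sqrt{n}/\sigma \eqqcolon s$. The probability of $Z \in [-s, s]$ is $\pr{}{Z \le s} - \pr{}{Z < -s}$. For the second term, use $\pr{}{Z < -s} \ge \pr{}{Z \le -s - \eta}$ and let $\eta \to 0$; by continuity of $\Phi$, the Berry--Esseen bound applies to it as well. This gives
\[
    \pr{}{|Z| \le s} \le \Phi(s) - \Phi(-s) + O\left(\frac{1}{\sqrt{n}}\right).
\]

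Finally, bound the Gaussian mass by density times length: $\Phi(s) - \Phi(-s) \le 2s/\sqrt{2\pi} = \sqrt{2/\pi}\cdot \eps\sqrt{n}/\sigma$. Using $\sigma \ge \sqrt{2}/3$, this is at most $\sqrt{2/\pi}\cdot\frac{3}{\sqrt{2}}\cdot\eps\sqrt{n} = \frac{3}{\sqrt{\pi}}\eps\sqrt{n}$, which matches the claimed constant exactly.

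The only delicate point is verifying that the Berry--Esseen error term is uniform over $p \in [1/3,2/3]$, together with the minor closed-versus-open interval issue at the endpoints. The former follows from $\rho/\sigma^3 \le 1/\sigma \le 3/\sqrt{2}$, and the latter is handled by the limiting argument above. No other step requires computation beyond these constants.
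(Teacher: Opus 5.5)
Your proposal is correct and follows essentially the same route as the paper. Both apply Berry--Esseen to the centered Bernoulli summands, use $\sigma^2 = p(1-p) \ge 2/9$ to make the error term $O(1/\sqrt{n})$ uniformly in $p$, and bound the Gaussian mass of $[-s,s]$ by $2s/\sqrt{2\pi}$ to obtain the constant $3/\sqrt{\pi}$. The differences are cosmetic: the paper states Berry--Esseen directly for closed intervals and bounds $\rho \le 1/8$, whereas you use the CDF form with a limiting argument at the endpoint and the bound $\rho \le \sigma^2$.
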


\begin{proof}[Proof of \Cref{lem:many-typical-elements}]
    It suffices to prove that each $i \in [k]$ is typical with probability at least $0.99$. This would imply that the expected number of \emph{atypical} elements is at most $k/100$. By Markov's inequality, the probability of having at least $k/10$ atypical elements is at most $1/10$, and the lemma follows.

    Fix $i \in [k]$. Over the randomness in the size-$m$ sample, $n_i$ follows $\Binomial(m, 1/k)$. By a Chernoff bound,
    \[
        \pr{}{n_i < \frac{m}{2k} \vee n_i > \frac{2m}{k}}
    \le 2e^{-\Omega(m/k)}
    =   e^{-\Omega(k^2)},
    \]
    where the last step applies $m = k^3$. Thus, it remains to control the conditional probability that $|\Delta_i| \ge \frac{1}{300k}$ gets violated given $n_i \in [m / (2k), 2m / k]$.


    To bound the probability of $|\Delta_i| < \frac{1}{300k}$, we apply \Cref{lem:anti-concentration} with $n = n_i \in [m / (2k), 2m / k]$ and $p = \mu_{\calD}(i) \in [1/3, 2/3]$. This gives
    \[
        \pr{}{|\Delta_i| < \frac{1}{300k}}
    \le \frac{3}{\sqrt{\pi}} \cdot \frac{1}{300k} \cdot \sqrt{\frac{2m}{k}} + O\left(\frac{1}{\sqrt{m / k}}\right)
    \le \frac{\sqrt{2}}{100\sqrt{\pi}} + O(1/k).
    \]

    By the union bound, the probability that $i$ is atypical is upper bounded by
    \[
        \frac{\sqrt{2}}{100\sqrt{\pi}} + e^{-\Omega(k^2)} + O(1/k),
    \]
    which is at most $1/100$ for all sufficiently large $k$.
\end{proof}

\subsection{Proof of \Cref{lem:typical-elements-give-high-CalDist}}

We prove \Cref{lem:typical-elements-give-high-CalDist} via a lower bound on the cost of every subset of $\calX$ that depends on the number of typical and atypical elements. 

\begin{proof}[Proof of \Cref{lem:typical-elements-give-high-CalDist}]
    It suffices to lower bound $\cost_{\Dhat, f}(\calP)$ for every partition $\calP = \{\calX_1, \calX_2, \ldots, \calX_l\}$ of $\calX = [k]$. Let $T \subseteq [k]$ denote the set of all typical elements. We will prove the following inequality for every $S \subseteq [k]$:
    \begin{equation}\label{eq:cost-typical-elements}
        \cost_{\Dhat, f}(S) \ge \Omega(1/k^2) \cdot (|S \cap T| - \1{|S \cap T| = 1 \wedge |S \cap \overline{T}| \ge 1}).
    \end{equation}

    In words, \Cref{eq:cost-typical-elements} states that each typical element in the set contributes $\Omega(1/k^2)$ to the cost. The only exception is when the set contains exactly one typical element and at least one atypical element, in which case we have a trivial lower bound of $0$.

    \paragraph{\Cref{eq:cost-typical-elements} implies the lemma.} Applying the inequality to $\calX_1, \calX_2, \ldots, \calX_l$ and taking the sum gives
    \[
        \cost_{\Dhat, f}(\calP)
    \ge \Omega(1/k^2) \cdot \left[\sum_{i=1}^{l}|\calX_i \cap T| - \sum_{i=1}^{l}\1{|\calX_i \cap T| = 1 \wedge |\calX_i \cap \overline{T}| \ge 1}\right].
    \]
    The first summation $\sum_{i=1}^{l}|\calX_i \cap T|$ clearly evaluates to $|T| \ge 0.9 \cdot k$. The second summation is upper bounded by $\sum_{i=1}^{l}|\calX_i \cap \overline{T}| = |\overline{T}| \le 0.1 \cdot k$. Therefore, we have
    \[
        \cost_{\Dhat, f}(\calP) \ge \Omega(1/k^2) \cdot (0.9 - 0.1) \cdot k = \Omega(1/k).
    \]

    \paragraph{Proof of \Cref{eq:cost-typical-elements}.} We prove the inequality via the following case analysis:
    \begin{itemize}
        \item \textbf{Case 1: $|S \cap T| = 0$.} In this case, the right-hand side of \Cref{eq:cost-typical-elements} evaluates to $0$, and the inequality holds trivially.

        \item \textbf{Case 2: $|S \cap T| = 1$ and $|S \cap \overline{T}| \ge 1$.} Again, the right-hand side reduces to $0$, so the inequality holds.

        \item \textbf{Case 3: $|S \cap T| = 1$ and $|S \cap \overline{T}| = 0$.} In this case, $S = \{i\}$ for some typical element $i$. By definition, we have
        \[
            \Dhat_x(i) = \frac{n_i}{m} \ge \Omega(1/k)
        \quad \text{and} \quad
            |f(i) - \mu_{\Dhat}(i)| = |\mu_{\calD}(i) - \mu_{\Dhat}(i)| = |\Delta_i| \ge \Omega(1/k).
        \]
        It follows that
        \[
            \cost_{\Dhat, f}(S)
        =   \Dhat_x(i) \cdot |f(i) - \mu_{\Dhat}(i)|
        \ge \Omega(1/k^2),
        \]
        which matches the right-hand side.
        
        \item \textbf{Case 4: $|S \cap T| \ge 2$.} By definition of typical elements, $\Dhat_x(i) = n_i / m \ge \Omega(1/k)$ holds for every $i \in T$. It follows that
        \[
            \cost_{\Dhat, f}(S)
        \ge \sum_{i \in S \cap T}\Dhat_x(i) \cdot |f(i) - \mu_{\Dhat}(S)|
        \ge \Omega(1/k) \cdot \sum_{i \in S \cap T}|f(i) - \mu_{\Dhat}(S)|.
        \]
        Since $f(i) = 1/3 + 1/(3k)$, the values $\{f(i): i \in S \cap T\}$ are separated by at least $1/(3k)$. Then, regardless of the value of $\mu_{\Dhat}(S)$, it holds for all but at most one element $i \in S \cap T$ that $|f(i) - \mu_{\Dhat}(S)| \ge \frac{1}{6k}$. Therefore, we have
        \[
            \cost_{\Dhat, f}(S)
        \ge \Omega(1/k) \cdot (|S \cap T| - 1) \cdot \frac{1}{6k}
        \ge \Omega(1/k^2) \cdot |S \cap T|,
        \]
        where the last step applies $|S \cap T| \ge 2$.
    \end{itemize}
    This establishes \Cref{eq:cost-typical-elements} and completes the proof.
\end{proof}

%% file: appendix.tex
\section{Continuity of the Distance from Calibration}\label{app:basic}

\begin{proof}[Proof of \Cref{lem:partition-cost-continuity-in-D}]
    By symmetry, it suffices to prove
    \[
        \cost_{\calD_1, f}(\calP) \le \cost_{\calD_2, f}(\calP) + 5 \cdot \dTV(\calD_1, \calD_2).
    \]
    Write $\calP = \{\calX_1, \calX_2, \ldots, \calX_k\}$. For each $i \in [k]$, $j \in \{1, 2\}$, and $b \in \zo$, let
    \[
        m^{(b, j)}_i \coloneqq \pr{(x, y) \sim \calD_j}{x \in \calX_i \wedge y = b} = \sum_{x \in \calX_i}\calD_j(x, b)
    \]
    be the total probability mass of $\calX_i \times \{b\}$ in distribution $\calD_j$. Let $\calD_{1,x}$ and $\calD_{2,x}$ be the $\calX$-marginals of $\calD_1$ and $\calD_2$ respectively. By definition of the total variation distance, we have
    \begin{equation}\label{eq:TV-dist-1}
        \dTV(\calD_1, \calD_2)
    =   \max_{S \subseteq \calX \times \zo}[\calD_1(S) - \calD_2(S)]
    \ge \sum_{i=1}^{k}\sum_{x \in \calX_i}\max\left\{\calD_{1,x}(x) - \calD_{2,x}(x), 0\right\}
    \end{equation}
    and
    \begin{equation}\begin{split}\label{eq:TV-dist-2}
        \dTV(\calD_1, \calD_2)
    &=  \frac{1}{2}\sum_{x \in \calX}\sum_{y \in \zo}|\calD_1(x, y) - \calD_2(x, y)|\\
    &\ge \frac{1}{4}\sum_{i=1}^{k}\left[\left|m_i^{(0,1)} - m_i^{(0,2)}\right| + 2\cdot \left|m_i^{(1,1)} - m_i^{(1,2)}\right|\right].
    \end{split}\end{equation}
    
    We will show that, for every $i \in [k]$,
    \begin{equation}\begin{split}\label{eq:continuity-in-D-key-ineq}
        \cost_{\calD_1, f}(\calX_i) - \cost_{\calD_2, f}(\calX_i)
    &\le \sum_{x \in \calX_i}\max\left\{\calD_{1,x}(x) - \calD_{2,x}(x), 0\right\}\\
    &+ \left|m_i^{(0,1)} - m_i^{(0,2)}\right| + 2\cdot \left|m_i^{(1,1)} - m_i^{(1,2)}\right|.
    \end{split}\end{equation}
    Note that, by \Cref{eq:TV-dist-1,eq:TV-dist-2}, summing the above over $i \in [k]$ gives
    \[
        \cost_{\calD_1, f}(\calP)
    \le \cost_{\calD_2, f}(\calP) + \dTV(\calD_1, \calD_2) + 4 \cdot \dTV(\calD_1, \calD_2)
    =   \cost_{\calD_2, f}(\calP) + 5 \cdot \dTV(\calD_1, \calD_2)
    \]
    and thus the lemma.
    
    \paragraph{Proof of \Cref{eq:continuity-in-D-key-ineq}.} In the following, we fix a part $\calX_i$ and omit the subscript $i$ in $m^{(b,j)}_i$ for brevity. We also introduce the shorthands
    \[
        \mu_1 \coloneqq \mu_{\calD_1}(\calX_i) = \frac{m^{(1,1)}}{m^{(0,1)} + m^{(1,1)}}
    \quad \text{and} \quad
        \mu_2 \coloneqq \mu_{\calD_2}(\calX_i) = \frac{m^{(1,2)}}{m^{(0,2)} + m^{(1,2)}}.
    \]
    Note that $\cost_{\calD_1, f}(\calX_i) = \sum_{x \in \calX_i}\calD_{1,x}(x) \cdot |f(x) - \mu_1|$ and $\cost_{\calD_2, f}(\calX_i) = \sum_{x \in \calX_i}\calD_{2,x}(x) \cdot |f(x) - \mu_2|$.

    Towards upper bounding $\cost_{\calD_1, f}(\calX_i) - \cost_{\calD_2, f}(\calX_i)$, we define
    \[
        A \coloneqq \sum_{x \in \calX_i}\calD_{1,x}(x) \cdot |f(x) - \mu_2|
    \]
    and bound $\cost_{\calD_1, f}(\calX_i) - A$ and $A - \cost_{\calD_2, f}(\calX_i)$ separately.     For the latter part, we have
    \[
        A - \cost_{\calD_2, f}(\calX_i)
    =   \sum_{x \in \calX_i}[\calD_{1,x}(x) - \calD_{2,x}(x)] \cdot |f(x) - \mu_2|
    \le \sum_{x \in \calX_i}\max\{\calD_{1,x}(x) - \calD_{2,x}(x), 0\},
    \]
    where the second step holds since $f(x), \mu_2 \in [0, 1]$ implies $|f(x) - \mu_2| \le 1$.
    
    For the former part, we have
    \[
        \cost_{\calD_1, f}(\calX_i) - A
    =   \sum_{x \in \calX_i}\calD_{1,x}(x) \cdot \left[|f(x) - \mu_1| - |f(x) - \mu_2|\right]
    \le \calD_{1,x}(\calX_i) \cdot |\mu_1 - \mu_2|.
    \]
    Since $\calD_{1,x}(\calX_i) = m^{(0,1)} + m^{(1,1)}$, we have
    \[
        \calD_{1,x}(\calX_i) \cdot |\mu_1 - \mu_2|
    =   \left(m^{(0,1)} + m^{(1,1)}\right) \cdot \left|\frac{m^{(1,1)}}{m^{(0,1)} + m^{(1,1)}} - \frac{m^{(1,2)}}{m^{(0,2)} + m^{(1,2)}}\right|,
    \]
    which can be equivalently written as
    \begin{align*}
        \left|m^{(1,1)} - m^{(1,2)} \cdot \frac{m^{(0,1)} + m^{(1,1)}}{m^{(0,2)} + m^{(1,2)}}\right|
    &\le \left|m^{(1,1)} - m^{(1,2)}\right| + \left|m^{(1,2)} - m^{(1,2)} \cdot \frac{m^{(0,1)} + m^{(1,1)}}{m^{(0,2)} + m^{(1,2)}}\right|.
    \end{align*}
    The second term above is further given by
    \[
        \frac{m^{(1,2)}}{m^{(0,2)} + m^{(1,2)}}
    \cdot \left|(m^{(0,2)} + m^{(1,2)}) - (m^{(0,1)} + m^{(1,1)})\right|
    \le \left|m^{(0,1)} - m^{(0,2)}\right| + \left|m^{(1,1)} - m^{(1,2)}\right|.
    \]
    Therefore, we have
    \[
        \cost_{\calD_1, f}(\calX_i) - A
    \le \left|m^{(0,1)} - m^{(0,2)}\right| + 2\cdot \left|m^{(1,1)} - m^{(1,2)}\right|.
    \]
    This establishes \Cref{eq:continuity-in-D-key-ineq} and thus the lemma.
\end{proof}

\section{Omitted Proof for \Cref{sec:algorithm}}\label{app:algorithm}

\begin{proof}[Proof of \Cref{lem:type-monotonicity}]
    Let $g \in \Cal(\calD)$ be a predictor such that $d_{\calD}(f, g) = \CalDist_{\calD}(f)$. We will transform $g$ into a type-monotone predictor without increasing its distance from $f$.

    Let $x_{i,j}, x_{i,j+1} \in \calX$ be two elements of the same type~$i$ such that $g(x_{i,j}) > g(x_{i,j+1})$. If no such pair exists, $g$ is already type-monotone and we are done. Otherwise, we use the shorthands $a \coloneqq x_{i,j}$ and $b \coloneqq x_{i,j+1}$, and consider the alternative predictor
    \[
        g'(x) = \begin{cases}
            g(b), & x = a,\\
            g(a), & x = b,\\
            g(x), & x \notin \{a, b\}
        \end{cases}
    \]
    obtained from $g$ by swapping the predictions on $a$ and $b$. Since $\calD_x(a) = \calD_x(b)$ and $\mu_{\calD}(a) = \mu_{\calD}(b)$, it can be easily verified that $g' \in \Cal(\calD)$.

    Moreover, the increase in the distance from $f$ is given by
    \[
        d_{\calD}(f, g') - d_{\calD}(f, g)
    =   \calD_x(a) \cdot \left[|f(a) - g'(a)| - |f(a) - g(a)|\right] + \calD_x(b) \cdot \left[|f(b) - g'(b)| - |f(b) - g(b)|\right],
    \]
    which is in turn $\calD_x(a)$ times
    \[
        |f(a) - g(b)| + |f(b) - g(a)| - |f(a) - g(a)| - |f(b) - g(b)|.
    \]
    Since $f(a) \le f(b)$ and $g(a) > g(b)$, the above is non-positive by the rearrangement inequality. It follows that
    \[
        \CalDist_{\calD}(f) \le d_{\calD}(f, g') \le d_{\calD}(f, g) = \CalDist_{\calD}(f),
    \]
    and thus $d_{\calD}(f, g') = \CalDist_{\calD}(f)$.

    Therefore, we can keep adjusting the predictor $g$ while maintaining the invariant that $g \in \Cal(\calD)$ and $d_{\calD}(f, g) = \CalDist_{\calD}(f)$. Since each adjustment strictly reduces the number of inversions, this procedure must terminate, at which point we obtain a type-monotone predictor that proves the lemma.
\end{proof}

\section{Omitted Proofs for \Cref{sec:hardness}}\label{app:hardness}
We prove that Balanced SSP (\Cref{def:balanced-SSP}) is $\NP$-hard by reducing from the partition problem, which is one of Karp's $21$ $\NP$-complete problems~\cite{Karp72}.

\begin{definition}[The partition problem]
    Given $n$ positive integers $a_1, a_2, \ldots, a_n$, decide whether there exists $\calI \subseteq [n]$ such that $\sum_{i \in \calI}a_i = \frac{1}{2}\sum_{i=1}^{n}a_i$.
\end{definition}

\begin{proof}[Proof of \Cref{lem:balanced-SSP-NP-hard}]
    Let $a = (a_1, a_2, \ldots, a_n)$ be an instance of the partition problem. Set $S \coloneqq \sum_{i=1}^{n}a_i$ and $M \coloneqq 100nS$. Consider the length-$2n$ list
    \[
        a' \coloneqq (a_1 + M, a_2 + M, \ldots, a_n + M, M, M, \ldots, M)
    \]
    obtained by adding $M$ to all entries of $a$ and padding $n$ copies of $M$. Note that $\sum_{i=1}^{2n}a'_i = S + 2nM$.

    We first verify that $a'$ is a valid instance of Balanced SSP. Note that all entries of $a'$ are between $M$ and $M + S$, so $\frac{\max_{i \in [2n]} a'_i}{\min_{i \in [2n]} a'_i} \le 1 + \frac{S}{M} = 1 + \frac{1}{100n}$. Next, we show that $a$ is a ``Yes'' instance of the partition problem if and only if $a'$ is a ``Yes'' instance of Balanced SSP. For the ``if'' direction, suppose that $\calI \subseteq [2n]$ is a size-$n$ set with $\sum_{i \in \calI}a'_i = \frac{1}{2}\sum_{i=1}^{2n}a'_i = S/2 + nM$. Then, letting $\calI' \coloneqq \calI \cap [n]$, we have
    \[
        \sum_{i \in \calI'}a_i
    =   \sum_{i \in \calI}(a'_i - M)
    =   \sum_{i \in \calI}a'_i - nM = \frac{S}{2}.
    \]
    Conversely, if $\calI \subseteq [n]$ satisfies $\sum_{i \in \calI}a_i = S/2$, it holds for $\calI' \coloneqq \calI \cup \{n + 1, n + 2, \ldots, 2n - |\calI|\}$ that $|\calI'| = n$ and
    \[
        \sum_{i \in \calI'}a'_i
    =   \sum_{i \in \calI}(a_i + M) + (n - |\calI|) \cdot M
    =   \frac{S}{2} + nM.
    \]

    Therefore, the above constitutes a polynomial-time reduction from the partition problem to Balanced SSP, establishing that Balanced SSP is $\NP$-hard.
\end{proof}

\begin{proof}[Proof of \Cref{lem:integer-triple}]
    Fix integers $a, b, c \ge 0$. Let $\mu \coloneqq \frac{6a - 3b}{a + b + c}$, and define the \emph{cost} of the triple $(a, b, c)$ as
    \[
        \cost(a, b, c) \coloneqq (a + b) \cdot \max\{-\mu, 0\} + c \cdot \max\{2 - \mu, 0\}.
    \]
    
    Suppose that $(a, b, c)$ does not satisfy the last two conditions, i.e.,
    \[
        a \le \frac{b + c}{2}
    \quad \text{and} \quad
        \cost(a, b, c) < 1.
    \]
    It remains to show that $(a, b, c)$ must satisfy the first condition. To this end, we consider the following three cases based on the value of $\mu$.
    
    \paragraph{Case 1: $\mu \ge 2$.} By definition of $\mu$, we have
    \[
        \frac{6a - 3b}{a + b + c} \ge 2,
    \]
    which simplifies into $a \ge \frac{5}{4}b + \frac{1}{2}c$. Since we also have $a \le \frac{1}{2}b + \frac{1}{2}c$, it must be the case that $b = 0$ and $a = c / 2$. It follows that $(a, b, c)$ is proportional to $(1, 0, 2)$, and the first condition holds.
    
    \paragraph{Case 2: $0 \le \mu < 2$.} In this case, we have the following two inequalities:
    \begin{equation}\label{eq:triple-lemma-case-2}
        c \cdot \left(2 - \frac{6a - 3b}{a + b + c}\right) < 1
    \quad \text{and} \quad
        c \ge 2a - b \ge 0.
    \end{equation}
    The first follows from $\cost(a, b, c) < 1$. The inequality $c \ge 2a - b$ applies the assumption that $a \le \frac{b+c}{2}$, and the $2a - b \ge 0$ part follows from $\mu \ge 0$.
    
    Since $6a - 3b = 3 \cdot (2a - b) \ge 0$, the function $x \mapsto x \cdot \left(2 - \frac{6a - 3b}{a + b + x}\right)$ is increasing for $x \ge 0$. Then, we must have
    \[
        (2a - b) \left(2 - \frac{6a - 3b}{a + b + (2a - b)}\right) < 1,
    \]
    which simplifies into
    \[
        b \cdot (2a - b) < a.
    \]
    
    Since $b$ is an integer in $[0, 2a]$ and $1 \cdot (2a - 1) \ge a$ for any $a \ge 1$, we must have $b \in \{0, 2a\}$. If $b = 2a$, the first part of \Cref{eq:triple-lemma-case-2} reduces to $2c < 1$, which can only be satisfied by $c = 0$. This leads to a triple that is proportional to $(1, 2, 0)$. If $b = 0$, we obtain
    \[
        c \cdot \left(2 - \frac{6a}{a + c}\right) < 1
    \quad \text{and} \quad
        c \ge 2a.
    \]
    If $c = 2a$, we obtain a triple proportional to $(1, 0, 2)$. If $c \ge 2a + 1$, the first inequality above implies
    \[
        (2a + 1) \cdot \frac{2}{3a + 1} < 1,
    \]
    which cannot be satisfied by any integer $a \ge 0$.
    
    \paragraph{Case 3: $\mu < 0$.} In this case, the condition $\cost(a, b, c) < 1$ reduces to
    \[
        -(a+b)\mu + c\cdot(2 - \mu) < 1.
    \]
    Since $(a + b + c)\mu = 6a - 3b$, we have
    \[
        2c - (6a - 3b) < 1.
    \]
    Since the left-hand side above is an integer, it must be at most $0$. Then, we get
    \[
        6a - 3b \ge 2c \ge 0.
    \]
    On the other hand, for $\mu < 0$ to hold, we need $6a - 3b < 0$, a contradiction.
    
    Combining the three cases above shows that any triple that satisfies neither of the last two conditions must be proportional to either $(1, 2, 0)$ or $(1, 0, 2)$. This completes the proof.
\end{proof}

\section{Omitted Proof for \Cref{sec:one-sided-LB}}\label{app:one-sided-LB}

We derive \Cref{lem:anti-concentration} from the Berry-Esseen theorem.

\begin{theorem}[Berry-Esseen]\label{thm:berry-esseen}
    The following is true for some universal constant $C > 0$: For any i.i.d.\ zero-mean random variables $X_1, X_2, \ldots, X_n$ with $\Ex{}{X_i^2} = \sigma^2$ and $\Ex{}{|X_i|^3} = \rho$, it holds for any $a, b \in \bbR$ that
    \[
        \left|\pr{}{\frac{\sum_{i=1}^{n}X_i}{\sigma\sqrt{n}} \in [a, b]} - \pr{Z \sim \calN(0, 1)}{Z \in [a, b]}\right| \le \frac{C\rho}{\sigma^3\sqrt{n}}.
    \]
\end{theorem}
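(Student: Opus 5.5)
Since \Cref{thm:berry-esseen} is the classical Berry--Esseen theorem, I would not give a new argument. I would follow the standard Fourier-analytic route via Esseen's smoothing inequality. First, two reductions. It suffices to prove the Kolmogorov-distance bound
\[
\sup_{x \in \bbR}\left|F_n(x) - \Phi(x)\right| \le \frac{C'\rho}{\sigma^3\sqrt{n}},
\]
where $F_n$ is the CDF of $\sum_{i=1}^{n}X_i/(\sigma\sqrt{n})$ and $\Phi$ is the standard normal CDF. The interval version with $C = 2C'$ then follows by writing $\pr{}{\cdot \in [a,b]} = F_n(b) - F_n(a^-)$, and letting $x \uparrow a$ handles the left limit. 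Moreover, by Lyapunov's inequality $\rho \ge \sigma^3$. Hence whenever $\rho/(\sigma^3\sqrt{n}) \ge c_0$ for a small absolute constant $c_0$, the claim holds trivially once $C' \ge 1/c_0$. So we may assume $\rho/(\sigma^3\sqrt{n})$ is small.

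The main tool is Esseen's smoothing inequality. For a CDF $F$ with characteristic function $\varphi_F$, and $G = \Phi$ whose density is bounded by $1/\sqrt{2\pi}$, it gives for every $T > 0$
\[
\sup_x |F(x) - \Phi(x)| \le \frac{1}{\pi}\int_{-T}^{T}\frac{|\varphi_F(t) - e^{-t^2/2}|}{|t|}\,dt + \frac{24}{\pi\sqrt{2\pi}\,T}.
\]
I would take $T = c\,\sigma^3\sqrt{n}/\rho$ for a small absolute constant $c$. With this choice the second term is already $O(\rho/(\sigma^3\sqrt{n}))$. If I wanted the argument self-contained, I would prove the smoothing inequality by convolving both $F$ and $\Phi$ with a Fej\'er-type kernel whose Fourier transform is supported on $[-T,T]$, and then applying the inversion formula.

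The heart of the proof, and the step I expect to be the main obstacle, is the pointwise estimate
\[
\left|\psi\!\left(\tfrac{t}{\sigma\sqrt{n}}\right)^{n} - e^{-t^2/2}\right| \le C''\,\frac{\rho\,|t|^3}{\sigma^3\sqrt{n}}\,e^{-t^2/4} \qquad \text{for all } |t| \le T,
\]
where $\psi$ is the characteristic function of $X_1$. I would obtain it as follows. A third-order Taylor expansion gives $\psi(s) = 1 - \sigma^2 s^2/2 + \theta\rho|s|^3/6$ with $|\theta| \le 1$. This yields $|\psi(s)| \le \exp(-\sigma^2 s^2/2 + \rho|s|^3/6)$, so for $|t| \le T$ with $c$ small we get $|\psi(t/(\sigma\sqrt{n}))|^{n-1} \le e^{-t^2/3}$. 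Then I would telescope using $|u^n - v^n| \le n|u-v|\max(|u|,|v|)^{n-1}$, with $u = \psi(t/(\sigma\sqrt{n}))$ and $v = e^{-t^2/(2n)}$. Here $|u - v| \le \rho|t|^3/(6\sigma^3 n^{3/2}) + t^4/(8n^2)$. The quartic term is also absorbed because $|t| \le T$ gives $t/\sqrt{n} \lesssim \rho|t|^0/\sigma^3$-scale factors, i.e.\ $|t|/\sqrt{n} \le c\,\sigma^3/\rho \le c$. The delicate part is choosing $c$ so that the exponential damping survives uniformly for $|t|$ up to $T$ rather than only up to $O(n^{1/6})$. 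This is exactly where the cubic remainder must be compared against the quadratic term using $|t| \le c\,\sigma^3\sqrt{n}/\rho$.

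Finally, plug the pointwise estimate into the smoothing integral. This gives
\[
\frac{1}{\pi}\int \frac{C''\rho\, t^2}{\sigma^3\sqrt{n}}\,e^{-t^2/4}\,dt = O\!\left(\frac{\rho}{\sigma^3\sqrt{n}}\right).
\]
Combined with the $O(1/T)$ term, this yields the Kolmogorov bound and hence the theorem.
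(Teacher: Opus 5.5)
The paper does not prove this statement. It quotes it as the classical Berry--Esseen theorem and uses it only as a black box in the proof of \Cref{lem:anti-concentration}. Your outline is the standard Esseen-smoothing proof (essentially Feller's), and it is correct. With $T = c\sigma^3\sqrt{n}/\rho$, the Taylor bound on $\psi$, the telescoping inequality $|u^n - v^n| \le n|u-v|\max(|u|,|v|)^{n-1}$, and the final integration against $e^{-t^2/4}$ all go through. They give the Kolmogorov bound, and the interval version follows as you describe.

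A few points to make explicit when writing it out:

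\begin{itemize}
\item The bounds $|\psi(t/(\sigma\sqrt{n}))|^{n-1} \le e^{-t^2/3}$ and $e^{-(n-1)t^2/(2n)} \le e^{-t^2/3}$ need $n$ bounded below. Your preliminary reduction supplies this: $\rho \ge \sigma^3$ together with $\rho/(\sigma^3\sqrt{n}) \le c_0$ forces $n \ge c_0^{-2}$. You should say so.
\item The inequality $|\psi(s)| \le 1 - \sigma^2 s^2/2 + \rho|s|^3/6$ requires $\sigma^2 s^2 \le 2$. This holds because $t^2/n \le c^2\sigma^6/\rho^2 \le c^2$ on $|t| \le T$.
\item The step you flag as delicate is immediate. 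With $s = t/(\sigma\sqrt{n})$, one has $\rho|s|^3/6 = (\sigma^2 s^2/2)\cdot \rho|t|/(3\sigma^3\sqrt{n}) \le (c/3)\cdot\sigma^2 s^2/2$ for $|t| \le T$.
\item The garbled sentence about the quartic term should simply read
\[
\frac{t^4}{8n} = \frac{|t|^3}{8\sqrt{n}} \cdot \frac{|t|}{\sqrt{n}} \le \frac{c}{8} \cdot \frac{\rho |t|^3}{\sigma^3\sqrt{n}},
\]
using $|t|/\sqrt{n} \le c\sigma^3/\rho \le c$ and $\rho/\sigma^3 \ge 1$.
\end{itemize}

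As a side remark, the paper's only use of the theorem is \Cref{lem:anti-concentration} with $p \in [1/3, 2/3]$, and for that the Fourier machinery can be bypassed. Every point mass of $\Binomial(n, p)$ is $O(1/\sqrt{n})$. The window $|X/n - p| \le \epsilon$ contains at most $2\epsilon n + 1$ integers. Together these give $O(\epsilon\sqrt{n}) + O(1/\sqrt{n})$ directly. The price is a possibly worse constant than $3/\sqrt{\pi}$, which would be absorbed by shrinking the $1/(300k)$ threshold in the definition of typical elements.
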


\begin{proof}[Proof of \Cref{lem:anti-concentration}]
    We apply the Berry-Esseen theorem with each $X_i$ being the distribution of $Z - p$ where $Z \sim \Bern(p)$. Note that $p \in [1/3, 2/3]$ implies
    \[
        \sigma^2 = p(1-p) \ge \frac{2}{9}
    \quad \text{and} \quad
        \rho = p \cdot (1-p)^3 + (1-p) \cdot p^3 \le \frac{1}{8}.
    \]
    It follows that the $C\rho / (\sigma^3\sqrt{n})$ term in \Cref{thm:berry-esseen} is at most $O(1/\sqrt{n})$.

    Since
    \[
        \pr{X \sim \Binomial(n, p)}{|X / n - p| \le \eps}
    =   \pr{}{\left|\frac{\sum_{i=1}^{n}X_i}{n}\right| \le \eps}
    =   \pr{}{\left|\frac{\sum_{i=1}^{n}X_i}{\sigma\sqrt{n}}\right| \le \frac{\eps\sqrt{n}}{\sigma}},
    \]
    it remains to show that
    \[
        \pr{Z \sim \calN(0, 1)}{|Z| \le \frac{\eps\sqrt{n}}{\sigma}} \le \frac{3}{\sqrt{\pi}} \cdot \eps\sqrt{n}.
    \]
    Since the density of $\calN(0, 1)$ is upper bounded by $1/\sqrt{2\pi}$, the left-hand side above is at most
    \[
        \frac{1}{\sqrt{2\pi}} \cdot 2 \cdot \frac{\eps\sqrt{n}}{\sigma}
    \le \frac{1}{\sqrt{2\pi}} \cdot 2 \cdot \frac{\eps\sqrt{n}}{\sqrt{2/9}}
    =   \frac{3}{\sqrt{\pi}} \cdot \eps\sqrt{n},
    \]
    as desired.
\end{proof}